\g@addto@macro\bfseries{\boldmath}
\g@addto@macro\mdseries{\unboldmath}
\g@addto@macro\normalfont{\unboldmath}
\g@addto@macro\rmfamily{\unboldmath}
\g@addto@macro\upshape{\unboldmath}
\renewcommand*{\multicitedelim}{\addcomma\space}
\newcommand{\myhref}[1]{%
  \iffieldundef{doi}
    {\iffieldundef{url}
       {#1}
       {\href{\strfield{url}}{#1}}}
    {\href{http://dx.doi.org/\strfield{doi}}{#1}}%
}
    \newlength{\temp@x}%
    \newlength{\temp@y}%
    \newlength{\temp@w}%
    \newlength{\temp@h}%
    \def\my@coords#1#2#3#4{%
      \setlength{\temp@x}{#1}%
      \setlength{\temp@y}{#2}%
      \setlength{\temp@w}{#3}%
      \setlength{\temp@h}{#4}%
      \adjustlengths{}%
      \my@pdfliteral{\strip@pt\temp@x\space\strip@pt\temp@y\space\strip@pt\temp@w\space\strip@pt\temp@h\space re}}%
      \def\my@pdfliteral#1{\pdfliteral page{#1}}% I don't know why % this command...
      \def\adjustlengths{}%
      \def\my@pdfliteral #1{}% isn't equivalent to this one
      \def\adjustlengths{\setlength{\temp@h}{-\temp@h}\addtolength{\temp@y}{1in}\addtolength{\temp@x}{-1in}}%
    \def\Hy@colorlink#1{%
      \begingroup
        \ifHy@ocgcolorlinks
          \def\Hy@ocgcolor{#1}%
          \my@pdfliteral{q}%
          \my@pdfliteral{7 Tr}% Set text mode to clipping-only
        \else
          \HyColor@UseColor#1%
        \fi
    }%
    \def\Hy@endcolorlink{%
      \ifHy@ocgcolorlinks%
        \my@pdfliteral{/OC/OCPrint BDC}%
        \my@coords{0pt}{0pt}{\pdfpagewidth}{\pdfpageheight}%
        \my@pdfliteral{F}% Fill clipping path (the url's text) with
                           % current color
        %
        \my@pdfliteral{EMC/OC/OCView BDC}%
        \begingroup%
          \expandafter\HyColor@UseColor\Hy@ocgcolor%
          \my@coords{0pt}{0pt}{\pdfpagewidth}{\pdfpageheight}%
          \my@pdfliteral{F}% Fill clipping path (the url's text)
                             % with \Hy@ocgcolor
        \endgroup%
        \my@pdfliteral{EMC}%
        \my@pdfliteral{0 Tr}% Reset text to normal mode
        \my@pdfliteral{Q}%
      \fi
      \endgroup
    }%
\colorlet{DarkRed}{red!50!black}
\colorlet{DarkGreen}{green!50!black}
\colorlet{DarkBlue}{blue!50!black}
\declaretheorem[numberwithin=section]{theorem}
\declaretheorem[numberlike=theorem]{lemma}
\declaretheorem[numberlike=theorem]{corollary}
\declaretheorem[numberlike=theorem]{definition}
\declaretheorem[numberlike=theorem]{claim}
\declaretheorem[numberlike=theorem]{observation}
\declaretheorem[numberlike=theorem, style=remark]{remark}
\crefname{algorithm}{Procedure}{Procedures}
\Crefname{algorithm}{Procedure}{Procedures}
\DeclarePairedDelimiterX{\set}[2]{\{}{\}}{#1\,\delimsize|\,\mathopen{}#2}
\DeclarePairedDelimiterX{\floor}[1]{\lfloor}{\rfloor}{#1}
\newcommand{\fraction}{\operatorname{frac}}
\newcommand{\nil}{\textsc{nil}}
\newcommand{\expdis}{\operatorname{Exp}}
\newcommand{\str}{\operatorname{stretch}}
\newcommand{\avestr}{\operatorname{avg-stretch}}
\newcommand{\capacity}{\operatorname{cap}}
\newcommand{\poly}{\operatorname{poly}}
\newcommand{\polylog}{\operatorname{polylog}}
\newcommand{\dist}{\textrm{dist}}
\newcommand{\lev}{\ell}
\newcommand{\degree}{\operatorname{deg}}
\newcommand{\fract}{\beta}
\newcommand{\diam}{\Delta}
\title{Dynamic Low-Stretch Trees via Dynamic Low-Diameter~Decompositions\thanks{To be presented at the 51st Annual ACM Symposium on the Theory of Computing (STOC 2019)}}
\author{
Sebastian Forster\thanks{University of Salzburg, Department of Computer Sciences, Austria. Work partially done while at University of Vienna.}
\and
Gramoz Goranci\thanks{University of Vienna, Faculty of Computer Science, Austria. The research leading to these results has received funding from the European Research Council under the European Union's Seventh Framework Programme (FP/2007-2013) / ERC Grant Agreement no. 340506. This work was done in part while visiting Georgia Institute of Technology.}
}
\date{}
\begin{document}
\maketitle
\begin{abstract}
Spanning trees of low average stretch on the non-tree edges, as introduced by Alon et al.~\citem[SICOMP 1995]{AlonKPW95}, are a natural graph-theoretic object. In recent years, they have found significant applications in solvers for symmetric diagonally dominant (SDD) linear systems. In this work, we provide the first dynamic algorithm for maintaining such trees under edge insertions and deletions to the input graph. Our algorithm has update time $ n^{1/2 + o(1)} $ and the average stretch of the maintained tree is $ n^{o(1)} $, which matches the stretch in the seminal result of Alon et al.

Similar to Alon et al., our dynamic low-stretch tree algorithm employs a dynamic hierarchy of low-diameter decompositions (LDDs).
As a major building block we use a dynamic LDD that we obtain by adapting the random-shift clustering of Miller et al.~\citem[SPAA 2013]{MillerPX13} to the dynamic setting.
%To this end, we first develop a deletions-only algorithm that essentially is a variant of the Even-Shiloach algorithm~\citem[JACM 1981]{EvenS81}.
%This exploits the observation of Miller et al.\ that (statically) the clustering can be determined by a single-source shortest path computation.
%To avoid certain costly cluster changes in the algorithm, we enhance the clustering with a tie-breaking scheme proposed by Baswana et al.~\citem[TALG '12]{BaswanaKS12} in the context of dynamic spanners.
%We then extend our deletions-only algorithm to a fully dynamic one by a ``lazy'' update policy for insertions.
%
The major technical challenge in our approach is to control the propagation of updates within our hierarchy of LDDs: each update to one level of the hierarchy could potentially induce several insertions \emph{and} deletions to the next level of the hierarchy.
We achieve this goal by a sophisticated amortization approach.
In particular, we give a bound on the number of changes made to the LDD per update to the input graph that is significantly better than the trivial bound implied by the update time.

We believe that the dynamic random-shift clustering might be useful for independent applications.
One of these applications is the dynamic spanner problem.
By combining the random-shift clustering with the recent spanner construction of Elkin and Neiman~\citem[SODA 2017]{ElkinN17}.
We obtain a fully dynamic algorithm for maintaining a spanner of stretch $ 2k - 1 $ and size $ O (n^{1 + 1/k} \log{n}) $ with amortized update time $ O (k \log^2 n) $ for any integer $ 2 \leq k \leq \log n $.
Compared to the state-of-the art in this regime \citem[Baswana et al.~TALG '12]{BaswanaKS12}, we improve upon the size of the spanner and the update time by a factor of $ k $.

\end{abstract}

\section{Introduction}

Graph compression is an important paradigm in modern algorithm design.
Given a graph $ G $ with $ n $ nodes, can we find a substantially smaller (read: sparser) subgraph $ H $ such that $ H $ preserves central properties of~$ G $?
Very often, this compression is ``lossy'' in the sense that the properties of interest are only preserved approximately.
A ubiquitous example of graph compression schemes are \emph{spanners}:
every graph $ G $ admits a spanner $ H $ with $ O (n^{1 + 1/k}) $ edges that has \emph{stretch}~$ 2 k - 1 $ (for any integer $ k \geq 2 $), meaning that for every edge $ e = (u, v) $ of~$ G $ not present in $ H $ there is a path from $ u $ to $ v $ in~$ H $ of length at most $ 2 k - 1 $.
Thus, when $ k = \log{n} $, very succinct compression with $ O (n) $ edges can be achieved at the price of stretch $ O (\log n) $.

The most succinct form of subgraph compression is achieved when $ H $ is a tree.
Spanning trees, for example, are a well-known tool for preserving the connectivity of a graph.
It is thus natural to ask whether, similar to spanners, one could also have spanning trees with low stretch for each edge.
This unfortunately is known to be false: in a ring of $ n $ nodes every tree will result in a stretch of $ n - 1 $ for the single edge not contained in the tree.
However, it turns out that a quite similar goal can be achieved by relaxing the concept of stretch:
every graph $ G $ admits a spanning tree $ T $ of \emph{average stretch} $ O (\log{n} \log \log n) $~\cite{AbrahamN12}, where the average stretch is the sum of the stretches of all edges divided by the total number of edges.
Such subgraphs are called \emph{low (average) stretch trees} and have found numerous applications in recent years, most notably in the design of fast solvers for symmetric diagonally dominant (SDD) linear systems~\cite{SpielmanT14,KoutisMP14,BlellochGKMPT14,KoutisMP11,KelnerOSZ13,CohenKMPPRX14}.
We believe that their fundamental graph-theoretic motivation and their powerful applications make low-stretch trees a very natural object to study as well in a dynamic setting, similar to spanners~\cite{AusielloFI06,Elkin11,BaswanaKS12,BodwinK16} and minimum spanning trees~\cite{Frederickson85,EppsteinGIN97,HenzingerK01,HolmLT01,Wulff-Nilsen17,NanongkaiSW17}.
Indeed, the design of a dynamic algorithm for maintaining a low-stretch tree was posed as an open problem by Baswana et al.~\cite{BaswanaKS12}, but despite extensive research on dynamic algorithms in recent years, no such algorithm has yet been found.

In this paper, we give the first non-trivial algorithm for this problem in the \emph{dynamic} setting.
Specifically, we maintain a low-stretch tree $ T $ of a dynamic graph~$ G $ undergoing updates in the form of edge insertions and deletions in the sense that after each update to $ G $ we compute the set of necessary changes to $ T $.
The goal in this problem is to keep the time spent after each update small while still keeping the average stretch of $ T $ tolerable.
Our main result is a fully dynamic algorithm for maintaining a spanning tree of expected average stretch~$ n^{o(1)} $ with expected amortized update time~$ n^{1/2 + o(1)} $.
At a high level, we obtain this result by combining the classic low-stretch tree construction of Alon et al.~\cite{AlonKPW95} with a dynamic algorithm for maintaining low-diameter decompositions (LDDs) based on random-shift clustering~\cite{MillerPX13}.
Our LDD algorithm might be of independent interest, and we provide another application by using it to obtain a dynamic version of the recent spanner construction of Elkin and Neiman~\cite{ElkinN17}.
The resulting dynamic spanner algorithm improves upon one of the state-of-the-art algorithms by Baswana et al.~\cite{BaswanaKS12}.

Our overall approach towards the low-stretch tree algorithm -- to use low-diameter decompositions based on random-shift clustering in the construction of Alon et al.~\cite{AlonKPW95} -- has been used before in parallel and distributed algorithms~\cite{BlellochGKMPT14,GhaffariKKLP15,HaeuplerL18}.
However, to make this approach work in the dynamic setting we need to circumvent some non-trivial challenges.
In particular, we cannot employ the following paradigm that often is very helpful in designing dynamic algorithms: design an algorithm that can only handle edge deletions and then extend it to the fully dynamic setting using a general reduction.
While we do follow this paradigm for our dynamic LDD algorithm, there are two obstacles that prevent us from doing so for the dynamic low-stretch tree:
First, many fully-dynamic-to-decremental reductions exploit some form of ``decomposability'', which does not hold for low-stretch trees, i.e., low-stretch trees of subgraphs of the input graph cannot be simply be combined to a single low-stretch tree of the full graph.
%First, there is no obvious candidate for such a fully-dynamic-to-decremental reduction.
Second, in our dynamic low-diameter decomposition edges might start and stop being inter-cluster edges, even if the input graph is only undergoing deletions.
In the hierarchy of Alon et al.\ this leads to both insertions and deletions at the next level of the hierarchy.
As opposed to other dynamic problems~\cite{HenzingerKN16,AbrahamDKKP16}, one cannot simply enforce some type of ``monotonicity'' by not passing on insertions to the next level of the hierarchy (to stay within a deletions-only setting) as there might be too many such edges to ignore them. 
%In particular, if an edge $ e $ appears as an inter-cluster edge and we do not make it available to the next level, $ e $ cannot be part of the low-stretch tree $ T $ maintained by the algorithm.
%We would then, in a separate step, continuously have to fix $ T $ to incorporate edges that have been left out.
%It is however not clear how to make such a fix while still keeping the average stretch of the tree low, in particular because there might be many such edges inserted over the course of the deletions-only algorithm.
Thus, it seems that we really have to deal with the fully dynamic setting in the first place.
We show that this can be done by a sophisticated amortization approach that explicitly analyzes the number of updates passed on to the next level.

\paragraph*{Related Work.}

Low-stretch trees have been introduced by Alon et al.~\cite{AlonKPW95} who obtained an average stretch of $ 2^{O (\sqrt{\log n \log \log n})} $ and also gave a lower bound of $ \Omega (\log n) $ on the average stretch.
The first construction with polylogarithmic average stretch was given by Elkin et al.~\cite{ElkinEST08}.
Further improvements~\cite{AbrahamBN08,KoutisMP11} culminated in the state-of-the-art construction of Abraham and Neiman~\cite{AbrahamN12} with average stretch $ O(\log n \log \log n) $.
All these trees with polylogarithmic average stretch can be computed in time $ \tilde O (m) $.\footnote{Throughout this paper we will use $ \tilde O (\cdot) $-notation to suppress factors that are polylogarithmic in $ n $.}
To the best of our knowledge, all known algorithms in parallel and distributed models of computation~\cite{BlellochGKMPT14,GhaffariKKLP15,HaeuplerL18} are based on the scheme of Alon et al.\ and thus do not provide polylogarithmic stretch guarantees.

The main application of low-stretch trees has been in solving symmetric, diagonally dominant (SDD) systems of linear equations.
It has been observed that iterative methods for solving these systems can be made faster by preconditioning with a low-stretch tree for weighted graphs~\cite{Vaidya91,BomanH03,SpielmanW09}.
Consequently, they have been an important ingredient in the breakthrough result of Spielman and Teng~\cite{SpielmanT14} for solving SDD systems in nearly linear time.
In this solver, low-stretch trees are utilized for constructing ultra-sparsifiers, which in turn are used as preconditioners.
Beyond this initial breakthrough, low-stretch trees have also been used in subsequent, faster solvers~\cite{KoutisMP14,BlellochGKMPT14,KoutisMP11,KelnerOSZ13,CohenKMPPRX14}.
Another prominent application of low-stretch trees~(concretely, the variant of random spanning trees with low expected stretch) is the remarkable cut-based graph decomposition of R\"acke~\Cite{Racke08,AndersenF09}, which embeds any general undirected graph into a convex combination of spanning trees, while paying a congestion of only $\tilde{O}(\log n)$ for the embedding. This decomposition tool, initially aimed at giving the best competitive ratio for oblivious routing, has found several applications ranging from approximation algorithms for cut-based problems~(e.g., minimum bisection~\cite{Racke08}) to graph compression~(e.g., vertex sparsifiers~\cite{Moitra09}). 
Other classic problems in the realm of approximation algorithms that utilize the properties of low-stretch trees include the $ k $-server problem~\cite{AlonKPW95} and the minimum communication cost spanning tree problem~\cite{Hu74,PelegR98}.

In terms of dynamic algorithms, we are not aware of any prior work for maintaining low-stretch trees.
The closest related works are arguably dynamic algorithms for maintaining distance oracles and spanners, as they also aim preserving pairwise distances, and dynamic algorithms for maintaining minimum spanning trees, as they also are spanning trees with an additional property.

%Regarding the motivation of solving SDD systems, the closest related result is possibly the algorithm of Abraham et al.~\cite{AbrahamDKKP16} for maintaining a spectral sparsifier with polylogarithmic update time.
%Apart from that, the problems of dynamically maintaining a spanner or a (minimum) spanning tree seem to be the most related ones.

A distance oracle is a data structure that can answer queries for the (approximate) distance between a pair of nodes.
The fully dynamic distance oracle of Abraham, Chechik, and Talwar~\cite{AbrahamCT14} for unweighted, undirected graphs has expected amortized update time $ \tilde O (\sqrt{m} n^{1/k}) $, query time $ O (k^2 \rho^2) $, and stretch $ 2^{O (k \rho)} $, where the parameter $ k \geq 2 $ is integer and $ \rho = 1 + \lceil \tfrac{\log n^{1 - 1/k}}{\log(m / n^{1 - 1/k})} \rceil $.
To the best of our knowledge, the recent decremental distance oracle of Chechik~\cite{Chechik18} can be used to extend this result to weighted graphs and to improve the stretch and the query time, while leaving the update time essentially unchanged.

For dynamic spanner algorithms, the main goal is to maintain, for any given integer $ k \geq 2 $, a spanner of stretch $ 2k - 1 $ with $ \tilde O (n^{1 + 1/k}) $ edges.
Spanners of stretch $ 2k - 1 $ and size $ O (n^{1 + 1/k}) $ exist for every undirected graph~\cite{Awerbuch85}, and this trade-off is presumably tight under Erd\H{o}s's girth conjecture.
The dynamic spanner problem has been introduced by Ausiello et al.~\cite{AusielloFI06}.
They showed how to maintain a $3$- or $5$-spanner with amortized update time $ O(\Delta) $, where $ \Delta \leq n $ is the maximum degree of the graph.
Using techniques from the streaming literature, Elkin~\cite{Elkin11} provided an algorithm for maintaining a $ (2k - 1) $-spanner with $ \tilde O (m n^{-1/k}) $ expected update time.
Faster update times were achieved by Baswana et al.~\cite{BaswanaKS12}: their algorithms maintain $ (2k - 1) $-spanners either with expected amortized update time $ O(1)^k $ or with expected amortized update time $ O (k^2 \log^2 n) $.
%Recently, Bodwin and Krinninger~\cite{BodwinK16} complemented these results by giving algorithms with worst-case update times that are correct with high probability, namely update time $ O(n^{3/4}) $ for maintaining a $3$-spanner and update time $ O(n^{5/9}) $ for maintaining a $5$-spanner.
Later, Bodwin and Krinninger~\cite{BodwinK16} initiated the study of dynamic spanners with worst-case update times, and recently, Bernstein, Forster, and Henzinger~\cite{BernsteinFH19} presented a deamortization approach to maintain $ (2k - 1) $-spanners with high-probability worst-case update time $ O(1)^k \log^3 n $.
All of these algorithms exhibit the stretch/space trade-off mentioned above in unweighted graphs, up to polylogarithmic factors in the size of the spanner.
%All randomized dynamic spanner algorithms mentioned above need to assume an oblivious adversary whose chosen sequence of updates is independent of the random choices of the algorithm.
%Another class of dynamic graph sparsification algorithms has first been studied by Abraham et al.~\cite{AbrahamDKKP16} who show how to maintain cut and spectral sparsifiers with polylogarithmic update time.

The first non-trivial algorithm for dynamically maintaining a minimum spanning tree was developed by Frederickson~\cite{Frederickson85} and had a worst-case update time of $ O (\sqrt{m}) $.
Using a general sparsification technique, this bound was improved to $ O (\sqrt{n}) $ by Eppstein et al.~\cite{EppsteinGIN97}.
In terms of amortized bounds, Holm et al.~\cite{HolmLT01} were the first to improve this bound and obtained polylogarithmic amortized update time.
In a recent breakthrough, Nanongkai, Saranurak, and Wulff-Nilsen~\cite{Wulff-Nilsen17,NanongkaiS17,NanongkaiSW17} finally achieved a \emph{worst-case} update time of $ n^{o(1)} $.

\paragraph*{Our Results.}

Our main result is a dynamic algorithm for maintaining a low-stretch tree of an unweighted, undirected graph.

\begin{theorem}\label{thm:fully dynamic low stretch tree}
Given any unweighted, undirected graph undergoing edge insertions and deletions, there is a fully dynamic algorithm for maintaining a spanning forest of expected average stretch $ n^{o(1)} $ that has expected amortized update time $ m^{1/2 + o(1)} $.
These guarantees hold against an oblivious adversary.
\end{theorem}

This is the first non-trivial algorithm for this fundamental problem.
Our stretch matches the seminal construction of Alon et al.~\cite{AlonKPW95}, which is still the state of the art in parallel and distributed settings~\cite{BlellochGKMPT14,GhaffariKKLP15,HaeuplerL18}.
%In terms of the adversarial model, we must assume that the adversary is oblivious, which means that its choices are independent of the random choices of our algorithm.
%There are some situations where this assumption is not justified~\cite{Madry10}, but the oblivious adversary is one of the standard models studied in this field.
%For example, all randomized dynamic spanner algorithms known so far~\cite{Elkin11,BaswanaKS12,BodwinK16} need to assume an oblivious adversary.

Similar to the approach of~\cite{KoutisLP16} in the \emph{static} setting, we can apply Theorem~\ref{thm:fully dynamic low stretch tree} to a cut sparsifier of the input graph, which has only $ \tilde O (n) $ edges, to improve the running time for dense graphs.
Such a cut sparsifier can be maintained with the dynamic algorithm of Abraham et al.~\cite{AbrahamDKKP16} that has polylogarithmic update time.

\begin{corollary}\label{cor:fully dynamic low stretch tree}
Given any unweighted, undirected graph undergoing edge insertions and deletions, there is a fully dynamic algorithm for maintaining a spanning forest of expected average stretch $ n^{o(1)} $ that has expected amortized update time $ n^{1/2 + o(1)} $.
These guarantees hold against an oblivious adversary.
\end{corollary}

Obtaining this improvement is non-trivial because cut sparsifiers are weighted graphs, even when the input graph is unweighted, and the algorithm of Theorem~\ref{thm:fully dynamic low stretch tree} only accepts unweighted graphs.
To deal with this issue, we deviate from the approach of~\cite{KoutisLP16} by interpreting the edge weights of the sparsifier as edge multiplicities in an unweighted graph.
A fine-grained analysis of the amount of change to edge the multiplicities per update to the input graph then allows us to get the desired benefits of combining both algorithms.

%We then show that the algorithm of Theorem~\ref{thm:fully dynamic low stretch tree} can also handle such graphs for updates that increment or decrement one edge multiplicity by $ 1 $.
%A fine-grained analysis of the total multiplicity of edges of the sparsifier and its expected amount of change per update to the input graph then gives the desired result.

We additionally show that $ \sqrt{n} $ is not an inherent barrier to the update time, at least if very large stretch is tolerated.
A modification of our algorithm gives average stretch $ O(t) $ and update time $ \tfrac{n^{1 + o(1)}}{t} $ for $ t \geq \sqrt{n} $.

One of the main building blocks of our dynamic low-stretch tree algorithm is a dynamic algorithm for maintaining a low-diameter decomposition (LDD). Roughly speaking, for $\beta \in (0,1)$ and $\diam > 0$, a $(\beta, \diam)$-decomposition of a graph is a partitioning of its nodes into disjoint clusters such that (1) any two nodes belonging to the same cluster are at distance at most $\diam$, and (2) the number of edges whose endpoints belong to different clusters is bounded by $\beta m$. The following theorem gives a dynamic variant of such decompositions.
 
\begin{theorem}\label{thm:fully dynamic LDD}
Given any unweighted, undirected multigraph undergoing edge insertions and deletions, there is a fully dynamic algorithm for maintaining a $ (\fract, O (\tfrac{\log{n}}{\fract})) $-decomposition (with clusters of strong diameter $ O (\tfrac{\log{n}}{\fract}) $ and at most $ \beta m $ inter-cluster edges in expectation) that has expected amortized update time $ O (\log^2{n} / \fract^2) $.
A spanning tree of diameter $ O (\tfrac{\log{n}}{\fract}) $ for each cluster can be maintained in the same time bound.
The expected amortized number of edges to become inter-cluster edges after each update is $ O (\log^2{n} / \fract) $.
These guarantees hold against an oblivious adversary.
\end{theorem}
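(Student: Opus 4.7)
The plan is to lift the static random-shift clustering of Miller, Peng, and Xu to the dynamic setting. Recall that statically each vertex $u$ draws an independent shift $\delta_u \sim \expdis(\beta)$, truncated at $L = \Theta(\log n / \beta)$ so that every cluster has strong radius at most $L$ with high probability, and $v$ is assigned to the center $u$ minimizing $\dist(u,v) - \delta_u$. Miller, Peng, and Xu observe that this assignment coincides with the shortest-path tree rooted at a virtual super-source $s$ that is connected to each $u$ by an edge of weight $L - \delta_u$: the cluster label of $v$ is simply the top-most ancestor of $v$ below $s$. Maintaining the LDD therefore reduces to maintaining a truncated single-source shortest-path tree from $s$ under edge updates to the underlying graph.

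First I would develop a deletions-only algorithm by running a bounded-depth Even--Shiloach tree rooted at $s$ up to depth $L$. Edge deletions can only increase distances, and the virtual edges incident to $s$ are never touched by the adversary, so the standard Even--Shiloach analysis applies and gives total work $O(mL)$ over any deletion sequence. A direct implementation would, however, suffer spurious cluster-label changes whenever several candidate centers reach a vertex at the same shifted distance, so I would attach the tie-breaking scheme of Baswana, Khurana, and Sarkar: each vertex is assigned a random priority at initialization and ties among equidistant centers are always broken by priority. This turns the cluster assignment into a deterministic function of the current shortest-path information, so a vertex switches clusters only when its actual distance profile genuinely changes.

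The main technical obstacle is to bound the expected amortized number of edges that \emph{newly} become inter-cluster per update, since each such event propagates to the higher-level LDDs used in the low-stretch tree construction. I would charge each such event to a level increment of one of its endpoints in the Even--Shiloach tree, whose total count across all vertices is $O(nL)$. A refined accounting using the memorylessness of the exponential shifts --- the same tool that yields the static bound of $\beta m$ inter-cluster edges in Miller--Peng--Xu --- should show that in expectation only an $O(\beta)$-fraction of a moving vertex's incident edges cross a cluster boundary, while the tie-breaking prevents double-counting of pure relabellings. Combining these estimates with the bookkeeping needed to propagate label changes to the inter-cluster edge set should deliver the stated $O(\log^2 n / \beta)$ amortized new inter-cluster edges per update and $O(\log^2 n / \beta^2)$ amortized update time in the decremental setting.

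To extend to the fully dynamic setting I would adopt a lazy insertion policy: newly inserted edges are placed on a side list and counted as inter-cluster without being fed into the Even--Shiloach tree. Once the number of insertions in the current epoch exceeds a threshold chosen to balance against the decremental running time, I resample all shifts and priorities and rebuild the LDD from scratch; amortizing the $\tilde O(m/\beta)$ rebuild cost over the insertions in the preceding epoch preserves both bounds. Resampling the random bits at each rebuild is precisely what keeps the analysis valid against an oblivious adversary, since it ensures that the adversary's sequence of updates within an epoch is committed independently of the random choices in force during that epoch.
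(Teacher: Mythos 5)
Your plan follows the same overall route as the paper: reduce the random-shift clustering to a truncated SSSP tree from a super-source, maintain it decrementally with a modified Even--Shiloach tree plus the Baswana--Khurana--Sarkar random-priority tie-breaking, and then lift to fully dynamic via a lazy, phase-based rebuild. The fully dynamic reduction you sketch matches the paper almost exactly. However, your decremental accounting has a concrete gap.

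You propose to charge each new inter-cluster edge to a \emph{level increment} of one of its endpoints, and you suggest that the tie-breaking ``turns the cluster assignment into a deterministic function of the current shortest-path information, so a vertex switches clusters only when its actual distance profile genuinely changes.'' This understates the problem. In the modified ES-tree, a vertex $v$ can change cluster many times while staying at the same level $\ell(v)=i$: every time the last potential parent of $v$ at level $i-1$ in its current cluster drops out, $v$ is re-assigned to a different cluster at the same level. There is no level increment to charge these events to, so $O(nL)$ does not bound them. The paper handles this with a dedicated lemma (Lemma~\ref{lem: nrClusterChanges}): conditioned on the level, the expected number of cluster changes of a fixed vertex within a single level is $O(\log n)$, and the proof is a harmonic-sum argument over the random permutation, closely mirroring the dynamic spanner analysis of Baswana et al. Your description of the tie-breaking as merely ``preventing spurious changes'' or ``preventing double-counting'' suggests you believe the tie-breaking suppresses these same-level re-assignments; it does not --- it only controls their expected count.

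The second missing ingredient is the independence argument that converts a per-cluster-change budget into the claimed bound on newly created inter-cluster edges. Naively, a vertex of degree $d$ that changes cluster can create $d$ new inter-cluster edges; with $O(\log n)$ changes per level and $O(\log n/\beta)$ levels that yields $O(m\log^2 n/\beta)$, a factor $1/\beta$ too large for the decremental theorem. The paper closes this gap by observing that an edge $e=(x,y)$ can be inter-cluster only if the mid-point condition $|m_{c(x)}(w)-m_{c(y)}(w)|\le 2$ holds, an event of probability $O(\beta)$ depending only on the shift values $\delta$, while whether $x$ or $y$ changes cluster at a given step depends, for a fixed deletion subsequence, only on the permutation $\pi$. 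Because $\delta$ and $\pi$ are independent sources of randomness, the two probabilities multiply, giving $O(m\log^2 n)$ total inter-cluster-edge events decrementally and hence $O(\log^2 n/\beta)$ amortized after the phase-based lift. Your proposal gestures at the memorylessness of the exponential shifts but never isolates this independence, which is essential: if you had used the fractional parts of the shifts themselves for tie-breaking rather than a fresh permutation, this product structure would not be available. You should make the decoupling of the two randomness sources explicit, and replace the ``charge to level increments'' scheme with a charge to cluster changes bounded via the permutation lemma.
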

Our algorithm is based on the random-shift clustering of Miller at al.~\cite{MillerPX13}, with many tweaks to make it work in a dynamic setting.
In our analysis of the algorithm, we bound the amortized number of changes to the clustering per update by $ \tilde O (1 / \fract) $, which is significantly smaller than the naive bound of $ \tilde O (1 / \fract^2) $ implied by the update time.
This is particularly important for hierarchical approaches, such as in our dynamic low-stretch tree algorithm, because a small bound on the number of amortized changes helps in controlling the number of induced updates to be processed within the hierarchy.
Independently, Saranurak and Wang~\cite{SaranurakW19} obtained a fully dynamic LLD algorithm with nearly the same guarantees (up to polylogarithmic factors).\footnote{The low-diameter decomposition of Saranurak and Wang can be maintained against an adaptive online adversary. However, the low-diameter spanning trees of their clustering can only be maintained against an oblivious adversary. Therefore, plugging in their dynamic LDD algorithm into our dynamic low-stretch tree construction does not yield any improvement over our guarantees.}
%We remark that, in the decremental setting, the expander decomposition and pruning techniques of Nanongkai et al.~\cite{NanongkaiSW17} will yield a result similar to Theorem~\ref{thm:fully dynamic LDD}, but with overheads of $ n^{o(1)} $~\cite{Saranurak18}.\footnote{Note that the decremental algorithm of Nanongkai et al.\ will have a stronger worst-case update time guarantee. However, this guarantee will not carry over the fully dynamic setting, at least not by using our reduction in Section~\ref{sec:fully dynamic LDD}, which to the best of our understanding cannot be de-amortized with standard techniques.}
We believe that our solution is arguably simpler than their expander pruning approach.

The dynamic random-shift clustering underlying our dynamic LDD is of independent interest.
A direct consequence demonstrating the usefulness of our dynamic random-shift clustering algorithm is the following new result for the dynamic spanner problem.
\begin{theorem}\label{thm:fully dynamic spanner}
Given any unweighted, undirected graph undergoing edge insertions and deletions, there is a fully dynamic algorithm for maintaining a spanner of stretch $ 2k - 1 $ and expected size $ O (n^{1 + 1/k} \log{n}) $ that has expected amortized update time $ O (k \log^2{n}) $.
These guarantees hold against an oblivious adversary.
\end{theorem}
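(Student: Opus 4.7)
The plan is to dynamize the $(2k{-}1)$-spanner construction of Elkin and Neiman~\cite{ElkinN17} by plugging in our fully dynamic LDD from Theorem~\ref{thm:fully dynamic LDD}. At a high level, Elkin--Neiman sample an exponential shift $\delta_v\sim\expdis(\beta)$ at every vertex with $\beta=\Theta(\log n/k)$, assign each vertex $v$ to the cluster with center $\arg\max_u(\delta_u-d(u,v))$, and build the spanner from (i) a shortest-path tree inside each cluster and (ii) a designated set of inter-cluster edges; their analysis shows that this subgraph has expected size $O(n^{1+1/k}\log n)$ and stretch at most $2k-1$. Crucially, these bounds are functions only of the random shifts and the current graph, so they transfer to the dynamic setting as long as our algorithm maintains, at every instant, exactly the Elkin--Neiman subgraph corresponding to the current graph and the pre-sampled shifts.

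I would then instantiate Theorem~\ref{thm:fully dynamic LDD} with $\beta=\Theta(\log n/k)$; since our dynamic LDD is driven by the Miller--Peng--Xu random-shift clustering, its distribution coincides with the clustering used by Elkin--Neiman. Internally the LDD algorithm maintains an Even--Shiloach-style SSSP structure from a virtual super-source whose edges encode the shifts, so the cluster BFS trees, i.e., component (i) of the spanner, are available essentially for free. Component (ii) I would maintain in an auxiliary dictionary indexed by cluster id, updated reactively: each time the LDD reports that an edge has changed its inter-/intra-cluster status or that a vertex has been reassigned, $O(1)$ insertions/deletions into the auxiliary structure suffice, at $O(\log n)$ cost each, to restore the maintained subgraph to the Elkin--Neiman spanner of the updated clustering. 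The size and stretch guarantees then follow verbatim from the static analysis, and the oblivious-adversary assumption is inherited from Theorem~\ref{thm:fully dynamic LDD}.

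For the update time I would use the two bounds supplied by Theorem~\ref{thm:fully dynamic LDD}: expected amortized $O(\log^2 n/\beta^2)=O(k^2)$ per graph update for maintaining the LDD itself, and expected amortized $O(\log^2 n/\beta)=O(k\log n)$ edges whose cluster-status changes per graph update. The main obstacle I anticipate is avoiding the weak bound of $O(k^2)$ spanner events per update that would follow merely from the LDD's update time; it is precisely the sharper amortized change bound of Theorem~\ref{thm:fully dynamic LDD} that makes this possible. Paying $O(\log n)$ of spanner bookkeeping per LDD event yields $O(k\log^2 n)$ total spanner work, which dominates the LDD's own $O(k^2)$ cost since $k\le\log n$, giving the claimed expected amortized update time of $O(k\log^2 n)$.
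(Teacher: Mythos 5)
Your proposal and the paper both start from the Elkin--Neiman construction and both exploit that our dynamic LDD machinery is built on exactly the random-shift clustering that Elkin--Neiman use, but they part ways on how to get a \emph{fully dynamic} algorithm. You propose to run the fully dynamic LDD of Theorem~\ref{thm:fully dynamic LDD} directly and maintain the spanner reactively on top of it. The paper instead builds a \emph{decremental} spanner algorithm (by extending the modified ES-tree of Section~\ref{sec:decremental LDD} to also maintain the sets $C'(x)$), and then applies the decremental-to-fully-dynamic reduction for spanners of Baswana et al.\ (Lemma~\ref{lem:spanner decremental to fully dynamic}). The extra $\log n$ factors in both the size ($n^{1+1/k}\log n$ rather than $n^{1+1/k}$) and the update time ($k\log^2 n$ rather than $k\log n$) in the theorem statement are precisely the overhead paid by that reduction.

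This difference is not cosmetic; your route as written has a genuine gap. The fully dynamic LDD of Theorem~\ref{thm:fully dynamic LDD} handles insertions \emph{lazily}: within a phase, newly inserted edges are simply ignored by the underlying decremental ES-tree, so the maintained shortest-path tree, levels $\ell(\cdot)$, and cluster pointers $c(\cdot)$ reflect a stale snapshot of the graph (the phase-start graph minus subsequent deletions), not the current graph. For the LDD this is harmless, since the diameter bound only degrades and the inserted edges are just charged against the inter-cluster budget. For the spanner, however, the stretch guarantee must hold for \emph{every} edge of the current graph, including the freshly inserted ones, and the Elkin--Neiman stretch argument (via Claims~\ref{claim:bound on random shifts} and~\ref{claim:bound on depth of tree}) is only valid when the clustering and shifted distances are those of the current graph. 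An inserted edge $(x,y)$ is invisible to the stale clustering: it is not reflected in $\ell(x)$, $\ell(y)$, $c(x)$, $c(y)$, it is not a candidate in any $C'(\cdot)$ set as maintained, and there is no guarantee that the spanner contains a short $x$--$y$ detour. So you do not, in fact, maintain the Elkin--Neiman spanner of the current graph, and the stretch bound does not follow ``verbatim from the static analysis.'' Relatedly, your claimed bounds pick up extra $\log n$ factors with no mechanism in your argument to produce them; in the paper they come from Lemma~\ref{lem:spanner decremental to fully dynamic}, which you never invoke. Finally, the ``$O(1)$ per LDD event'' bookkeeping is under-justified: the set $C'(x)$ depends on the levels and clusters of \emph{all} neighbors of $x$, so a change at $x$ costs $O(\deg(x))$ to repair, and the paper's accounting sums this over the $O(\log n)$ expected cluster changes per level and $O(k)$ levels to get $O(km\log n)$ total decremental work.
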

Recall that the fully dynamic algorithm of Baswana et al.~\cite{BaswanaKS12} maintains a spanner of stretch $ 2k - 1 $ and expected size $ O (k n^{1 + 1/k} \log n) $ with expected amortized update time $ O (k^2 \log^2{n}) $.
Our new algorithm thus improves both the size and the update time by a factor of~$ k $.
This is particularly relevant because the stretch/size trade-off of $ 2k - 1 $ vs.\ $ O (n^{1 + 1/k}) $ is tight under the girth conjecture.
We thus exceed the conjectured optimal size by a factor of only $ \log{n} $ compared to the prior $ k \log{n} $, where $ k $ might be as large as $ \log n $.
When we restrict ourselves to the decremental setting, we do achieve size $ O (n^{1 + 1/k}) $ with expected amortized update time $ O (k \log{n}) $.
Again, this saves a factor of~$ k $ compared to Baswana et al.~\cite{BaswanaKS12}.
To obtain Theorem~\ref{thm:fully dynamic spanner}, we employ our dynamic random-shift clustering algorithm in the spanner construction of Elkin and Neiman~\cite{ElkinN17} and combine it with the dynamic spanner framework of Baswana et al.~\cite{BaswanaKS12}.
%We believe that this application is another demonstration of the usefulness of our dynamic random-shift clustering algorithm.

\paragraph*{Structure of this Paper.}

The remainder of this paper is structured as follows.
We first settle the notation and terminology in Section~\ref{sec:prelim}.
We then give a high-level overview of our results and techniques in Section~\ref{sec:overview}.
Finally, we provide all necessary details for our dynamic low-stretch tree (Section~\ref{sec:low_stretch_tree}), our dynamic low-diameter decomposition (Section~\ref{sec:LDD}), and our dynamic spanner algorithm (Section~\ref{sec:spanner}).

\section{Preliminaries}\label{sec:prelim}

\paragraph*{Graphs.}

Let $G=(V,E,w_G)$ be an undirected weighted graph, where $n = |V|$, $m = |E|$ and $w_G: E \rightarrow \mathbb{R}_+$. If $w_G(e) = 1$ for all $e \in E$, then we say $G$ is an undirected unweighted graph.
If $ E $ is a multiset, i.e., every element of $E$ may have integer multiplicity greater than $ 1 $, then we call $ G $ a multigraph.
For a subset $C \subseteq V$ let $G[C]$ denote the subgraph of $G$ induced by  $C$. Throughout the paper we call $C \subset V$ a \emph{cluster}. For any positive integer $k$, a \emph{clustering} of $G$ is a partition of $V$ into disjoint subsets $C_1, C_2, \ldots, C_k$. We say that an edge is an \emph{intra-cluster} edges if both its endpoints belong to the same cluster $C_i$ for some $i$; otherwise, we say that an edge is an \emph{inter-cluster} edge.

For any $u,v \in V$ let $\dist_G(u,v)$ denote the length of a shortest path between $u$ and $v$ induced by the edge weights $w_G$ of the graph $G$. When $G$ is clear from the context, we will omit the subscript. The \emph{strong diameter} of a cluster $C \subset V$ is the maximum length of the shortest path between any two nodes in $G[C]$, i.e., $\max \set{\dist_{G[C]}(u,v)}{u,v \in C}$. In the following we define a low-diameter clustering of $G$. 

\begin{definition} Let $\beta \in (0,1)$ and $\diam > 0$. Given an undirected, unweighted graph $G=(V,E)$, a $(\fract,\diam)$-decomposition of $G$ is a partition of $V$ into disjoint subsets $C_1, C_2, \ldots,C_k$ (for some $ k \geq 1 $) such that:
\begin{enumerate}
\item The strong diameter of each $C_i$ is at most $\diam$.
\item The number of edges with endpoints belonging to different subsets is at most $\fract m$.
\end{enumerate}
\end{definition}
In the $(\fract,\diam)$-decompositions of the randomized dynamic algorithms in this paper, the bound in Condition~2 is in expectation.

Let $H = (V, F) $ be a subgraph of $G = (V, E, w_G)$. For any pair of nodes $u,v \in V$, we let $\dist_H(u,v)$ denote the length of a shortest path between $u$ and $v$ in $H$. We define the \emph{stretch} of an edge $(u,v) \in E$ with respect to $H$ to be 
\begin{equation*}
	\str_H(u,v) \coloneqq  \frac{\dist_H(u,v)}{w_G(u,v)} \, .
\end{equation*}
%and the total stretch of $ G $ with respect to $ H $ to be
%\begin{equation*}
%	\str_H (G) \coloneqq  \sum_{(u,v) \in E} \str_T(u,v) \, .
%\end{equation*}
The stretch of $ H $ is defined as the maximum stretch of any of edge $(u,v) \in E$.
The \emph{average stretch} over all edges of $G$ with respect to $H$ is given by
\[
	\avestr_H(G) \coloneqq \frac{1}{|E|} \sum_{(u,v) \in E} \str_H(u,v).
\]

\paragraph*{Exponential Distribution.}

For a parameter $\lambda$, the probability density function of the \emph{exponential distribution} $\expdis(\lambda)$ is given by
\[
	f(x,\lambda) \coloneqq  \begin{cases} 
   \lambda e^{-\lambda x} & \text{if } x \geq 0 \\
   0       & \text{otherwise}.
  \end{cases}
\]
The mean of the exponential distribution is $1/\lambda$.

\paragraph*{Dynamic Algorithms.}
Consider a graph with $ n $ nodes undergoing updates in the form of edge insertions and edge deletions.
An \emph{incremental} algorithm is a dynamic algorithm that can only handle insertions, a \emph{decremental} algorithm can only handle deletions, and a \emph{fully dynamic} algorithm can handle both.
We follow the convention that a fully dynamic algorithm starts from an empty graph with $ n $ nodes.
The (maximum) running time spent by a dynamic algorithm for processing each update (before the next update arrives) is called \emph{update time}.
We say that a dynamic algorithm has \emph{(expected) amortized} update time $ u (n) $ if its total running time spent for processing a sequence of $ q $ updates is bounded by $ q \cdot u (n) $ (in expectation).
In this paper, we assume that the updates to the graph are performed by an \emph{oblivious adversary} who fixes the sequences of updates in advance, i.e., the adversary is not allowed to adapt its sequence of updates as the algorithm proceeds.
This is a standard assumption in dynamic graph algorithms\footnote{For example, all known randomized dynamic spanner algorithms~\cite{Elkin11,BaswanaKS12,BodwinK16,BernsteinFH19} work under this assumption.} and it in particular implies that for randomized dynamic algorithms the sequence of updates is independent from the random choices of the algorithm.

\section{Technical Overview}\label{sec:overview}

In the following, we provide some intuition for our approach and highlight the main ideas of this paper.

\paragraph*{Low-Stretch Tree.}
%Before we give an overview of our new approach, we review the straightforward algorithm for an update time of $ \tilde O (n) $.
%The idea is to maintain a cut sparsifier~\cite{BenczurK15} of the underlying graph and to do a recomputation from scratch on the cut sparsifier after each update.
%The state-of-the-art static algorithm of Abraham and Neiman~\cite{AbrahamN12} computes a spanning tree of average stretch $ O (\log{n} \log{\log{n}}) $ in time $ \tilde O (m) $ on an input graph with $ m $ edges.
%The fully dynamic algorithm of Abraham et al.~\cite{AbrahamDKKP16} maintains a $ (1 \pm \epsilon) $-cut sparsifier with $ \tilde O (n / \epsilon^2) $ edges in polylogarithmic update time with high probability.
%An argument of Koutis, Levin, and Peng~\cite{KoutisLP16} shows that a spanning tree of total stretch~$ t $ with respect to a $ (1 \pm \epsilon) $-cut sparsifier of $ G $ has stretch $ (1 \pm 2 \epsilon) t $ with respect to~$ G $.
%Thus, by combining these two algorithms (and setting $ \epsilon = 1/2 $) we can maintain a spanning tree of average stretch $ O (\log^2 n \log \log n) $ with update time $ \tilde O (n) $ with high probability.
A first idea is to employ the dynamic low-diameter decomposition of Theorem~\ref{thm:fully dynamic LDD}.
This algorithm can maintain a $ (\fract, O(\tfrac{\log{n}}{\fract})) $-decomposition, i.e., a partitioning of the graph into clusters such that there are at most $ \fract m $ inter-cluster edges and the (strong) diameter of each cluster is at most $ O (\tfrac{\log{n}}{\fract}) $.
In particular, each cluster has a designated center and the algorithm maintains a spanning tree of each cluster in which every node is at distance at most $ O (\tfrac{\log{n}}{\fract}) $ from the center.
Now consider the following simple dynamic algorithm:
\begin{enumerate}
\item Maintain a $ (\fract, O(\tfrac{\log{n}}{\fract})) $-decomposition of the input graph~$ G $.
\item Contract the clusters in the decomposition to single nodes and maintain a multigraph $ G' $ containing one node for each cluster and all inter-cluster edges.
\item Compute a low-stretch tree $ T' $ of $ G' $ after each update to~$ G $ using a static algorithm providing polylogarithmic average stretch.
\item Maintain $ T $ as the ``expansion'' of $ T' $ in which every node in $ T' $ is replaced by the spanning tree of diameter $ O (\tfrac{\log{n}}{\fract}) $ of the cluster representing the node.
\end{enumerate}
As the clusters are non-overlapping it is immediate that $ T $ is indeed a tree.
To analyze the average stretch of~$ T $, we distinguish between inter-cluster edges (with endpoints in different clusters) and intra-cluster edges (with endpoints in the same cluster).
Each intra-cluster edge has stretch at most $ O (\tfrac{\log{n}}{\fract}) $ as the spanning tree of the cluster containing both endpoints of such an edge is a subtree of $ T $.
Each inter-cluster edge has polylogarithmic average stretch in $ T' $ with respect to $ G' $.
By expanding the clusters, the length of each path in $ T' $ increases by a factor of at most $ O (\tfrac{\log{n}}{\fract}) $.
Thus, inter-cluster edges have an average stretch of $ O (\tfrac{\log{n}}{\fract} \polylog{n}) $ in $ T $.
As there are at most $ m $ intra-cluster edges and at most $ \fract m $ inter-cluster edges, the total stretch over all edges is at most $ O (m \cdot \tfrac{\log{n}}{\fract} + \fract m \cdot \tfrac{\log{n}}{\fract} \polylog{n}) = \tilde O (m \cdot \tfrac{1}{\fract}) $, which gives an average stretch of $ \tilde O (\tfrac{1}{\fract}) $.

To bound the update time, first observe that the number of inter-cluster edges is at most $ \fract m $.
Thus, $ G' $ has at most $ \fract m $ edges and therefore the static algorithm for computing $ T' $ takes time $ \tilde O (\fract m) $ per update.
Together with the update time of the dynamic LDD, we obtain an update time of $ \tilde O (\tfrac{1}{\fract^2} + \fract m) $.
By setting $ \fract = m^{1/3} $, we would already obtain an algorithm for maintaining a tree of average stretch $ \tilde O (m^{1/3}) $ with update time $ \tilde O (m^{2/3}) $.

We can improve the stretch and still keep the update time sublinear by a hierarchical approach in which the scheme of clustering and contracting is repeated $ k $ times.
Observe that the $i$-th contracted graph will contain at most $ \fract^i m $ many edges and, in the final tree $ T $, the stretch of each edge disappearing with the $(i + 1)$-th contraction is $ O (\tfrac{\log{n}}{\fract})^{i + 1} $, which can be obtained by expanding the contracted low-diameter clusters. 
After $ k $~contractions, there are at most $ \fract^k m $ edges remaining and they have polylogarithmic average stretch in~$ T' $ with respect to~$ G' $, which, again by expanding clusters, implies an average stretch of at most $ O (\tfrac{\log{n}}{\fract})^k \cdot \polylog{n} $ in~$ T $ with respect to~$ G $.
This leads to a total stretch of $ O (\sum_{0 \leq i \leq k - 1} \fract^i m \cdot O(\tfrac{\log{n}}{\fract})^{i + 1} + \fract^k m \cdot O (\tfrac{\log{n}}{\fract})^k \polylog{n}) = \tilde O (m \cdot \tfrac{O (\log n)^k}{\fract}) $, which gives an average stretch of $ \tilde O (\tfrac{O (\log n)^k}{\fract}) $.
To bound the update time, observe that updates propagate within the hierarchy as each change to inter-cluster edges of one layer will appear as an update in the next layer.
Each operation in the dynamic LDD algorithm will perform at most one change to the clustering, i.e., the number of changes propagated to the next layer of the hierarchy is at most $ \tilde O(\tfrac{1}{\fract^2}) $ per update to the current layer.
This will result in an update time of $ \tilde O ((\tfrac{\polylog{n}}{\fract})^{2(i - 1)} \cdot \frac{1}{\fract^2}) $ in the $i$-th contracted graph per update to the input graph.
The update time for maintaining the tree~$ T $ will then be $ \tilde O (\tfrac{1}{\fract^{2 k}} + \fract^k m) $, which is $ m^{2/3} $ at best, i.e., no better than the simpler approach above.
A tighter analysis can improve this update time significantly:
The second part of Theorem~\ref{thm:fully dynamic LDD} bounds the amortized number of edges to become inter-cluster edges by $ \tilde O (\tfrac{1}{\fract}) $.
This results in an update time of $ \tilde O ((\tfrac{\polylog{n}}{\fract})^{k + 1} + \fract^k m) $.
By setting $ k = \sqrt{\log{n}} $ and $ \fract = \tfrac{1}{m^{1 / (2k + 1)}} $ we can roughly balance these two terms in the update time and thus arrive at an update time of $ m^{1/2 + o(1)} $ while the average stretch is $ n^{o(1)} $.
The crux of our approach is thus an ``early stopping'' of the Alon et al.\ LDD hierarchy such that it does not ``exhaust'' the graph.
We crucially exploit that, for an unweighted input graph, the size of the contracted graph decreases geometrically, which allows us to partially compensate for the blow-up of propagated updates in the hierarchy.

We can use the following sparsification approach to further reduce the update time to $ n^{1/2 + o(1)} $:
The main idea is to maintain a cut sparsifier with $ \tilde O (n) $ edges and then run the algorithm on the cut sparsifier instead of the input graph to reduce the update time from $ m^{1/2 + o(1)} $ to $ n^{1/2 + o(1)} $.
The dynamic algorithm of Abraham et al.~\cite{AbrahamDKKP16} can maintain such a cut sparsifier with polylogarithmic update time.
Using a different cut sparsifier construction, Koutis, Levin, and Peng~\cite{KoutisLP16} showed in the static setting that a low-stretch tree of their cut sparsifier is also a low-stretch tree of the input graph (where the average stretch only increases multiplicatively by the approximation guarantee of the cut sparsifier).
However, we cannot use exactly the same approach because the cut sparsifier of Abraham et al.\ has edge weights, even though the input graph is unweighted.
We show that the main argument in~\cite{KoutisLP16} still goes through if we interpret the edge weights of the sparsifier as edge multiplicities in an unweighted graph.
We then show that the algorithm of Theorem~\ref{thm:fully dynamic low stretch tree} can also handle such graphs for updates that increment or decrement the multiplicity of some edge by $ 1 $.
A fine-grained analysis of the total multiplicity of edges of the sparsifier and its expected amount of change per update to the input graph then gives the desired result.

In Section~\ref{sec:low_stretch_tree}, where we present the details of our approach, we consider two slight generalizations:
First, we implicitly handle the case that the input graph could become disconnected by maintaining a low-stretch \emph{forest}.
Second, we give a parameterized analysis that also allows for a trade-off between stretch and update time.

\paragraph*{Low-Diameter Decomposition.}

To obtain a suitable algorithm for dynamically maintaining a low-diameter decomposition, we follow the widespread paradigm of first designing a decremental -- i.e., deletions-only -- algorithm and then extending it to a fully dynamic one.
We can show that, for any sequence of at most $ m $ edge deletions (where $ m $ is the initial number of edges in the graph), a $ (\fract, O(\tfrac{\log{n}}{\fract})) $-decomposition can be maintained with expected total update time $ \tilde O (m / \fract) $.
Here, we build upon the work of Miller et al.~\cite{MillerPX13} who showed that \emph{exponential random-shift clustering} produces clusters of radius $ O (\log{n} / \fract) $ such that each edge has a probability of at most $ \fract $ to go between clusters.
This clustering is obtained by first having each node sample a random \emph{shift value} from the exponential distribution and then determining the cluster center of each node as the node to which it minimizes the difference between distance and (other node's) shift value.

In the parallel algorithm of~\cite{MillerPX13}, the clustering is obtained by essentially computing one single-source shortest path tree of maximum depth $ O (\log{n} / \fract) $.
To make this computation efficient\footnote{For their parallel algorithm, efficiency in particular means low depth of the computation tree.}, the shift values are rounded integers and the fractional residuals are only considered for tie-breaking.
We observe that one can maintain this bounded-depth shortest path tree with a simple modification of the well-known Even-Shiloach algorithm that spends time $ O ( \degree (v)) $ every time a node $ v $ increases its level (distance from the source) in the tree.
By rounding to integer edge weights, similar to~\cite{MillerPX13}, we can make sure that the number of level increases to consider is at most $ O (\log{n} / \fract) $ for each node.
Note however that this standard argument charging each node only when it increases its level is not enough for our purpose: the assignment of nodes to clusters follows the fractional values for tie-breaking, which might result in some node $ v $ changing its cluster -- and in this way also spend time $ O (\degree (v)) $ -- without increasing its level~(note that here the difficulty is not on maintaining the cluster that $v$ belongs to, but rather on bounding the number of cluster changes for $v$).
As has been observed in~\cite{MillerPX13}, the fractional values of the shift values effectively induce a random permutation on the nodes.
Using a similar argument as in the analysis of the dynamic spanner algorithm of Baswana et al.~\cite{BaswanaKS12}, we can thus show that in expectation each node changes its cluster at most $ O (\log{n}) $ times while staying at a particular level.
This results in a total update time of $ \tilde O (m / \fract) $.
Trivially, this also bounds the total number of times that edges become inter-cluster edges during the whole decremental algorithm by $ \tilde O (m / \fract) $.
Using a more sophisticated analysis we can obtain the stronger bound of $ \tilde O (m) $ on the latter quantity:
Intuitively, each endpoint of an edge changes its cluster at most $ \tilde O (\tfrac{1}{\fract}) $ times and after each cluster change the edge is an inter-cluster edge with probability at most $ \beta $, yielding a total of $ \tilde O ( m \cdot \tfrac{1}{\fract} \cdot \beta ) $ times that edges become inter-cluster edges.
The rigorous argument is however more complicated because the event of being an inter-cluster edge might not be independent from the event of the endpoint changing its cluster.

To obtain a fully dynamic algorithm, we observe that any LDD can tolerate a certain number of insertions to the graph.
A $ (\fract, O(\tfrac{\log{n}}{\fract})) $-decomposition allows at most $ \fract m $ inter-cluster edges and thus, if we insert $ O (\fract m) $ edges to the graph without changing the decomposition, we still have an $ (O(\fract), O(\tfrac{\log{n}}{\fract})) $-decomposition.
We can exploit this observation by simply running a decremental algorithm, that is restarted from scratch after each phase of $ \Theta (\fract m) $ updates to the graph.
We then deal with edge deletions by delegating them to the decremental algorithm and we deal with edge insertions in a lazy way by doing nothing.
This results in a total time of $ \tilde O (m / \fract) $ that is amortized over $ \Theta (\fract m) $ updates to the graph, i.e., amortized update time $ \tilde O (1/\beta^2) $.
Similarly, the amortized number of edges to become inter-cluster edges after an update is $ \tilde O (1/\beta) $.

In our detailed description and analysis in Section~\ref{sec:LDD}, we first review the construction of Miller et al., and then present our decremental and fully dynamic algorithms.

\section{Dynamic Low-Stretch Forest}\label{sec:low_stretch_tree}

Our dynamic algorithms for maintaining a low-stretch forest will use a hierarchy of low-diameter decompositions.
We first analyze very generally the update time for maintaining such a decomposition and explain how to obtain a spanning forest from this hierarchy in a natural way, similar to the construction of Alon et al.~\cite{AlonKPW95}.
We then analyze two different approaches for maintaining the tree, which will give us two complementary points in the design space of dynamic low-stretch tree algorithms.
Finally, we explain how to exploit input graph sparsification to improve the update time of our first algorithm.

\subsection{Generic Dynamic LDD Hierarchy}

Consider some integer parameter $ k \geq 1 $ and parameters $ \fract_0, \ldots, \fract_{k-1} \in (0, 1) $.
For each $ 0 \leq i \leq k-1 $, let $ \mathcal{D}_i $ be the fully dynamic algorithm for maintaining a $ (\fract_i, O (\tfrac{\log{n}}{\fract_i})) $-decomposition as given by Theorem~\ref{thm:fully dynamic LDD}.
Our \emph{LDD-hierarchy} consists of $ k+1 $ multigraphs $ G_0 = (V, E_0), \ldots, G_k = (V, E_k) $ where $ G_0 $ is the input graph $ G $ and, for each $ 0 \leq i \leq k-1 $, the graph $ G_{i+1} $ is obtained from contracting $ G_i $ according to a $ (\beta_i, O (\tfrac{\log{n}}{\fract_i}))$-decomposition of $ G_i $ as follows:
For every node $ v \in V $, let $ c_i (v) $ denote the center of the cluster to which $ v $ is assigned in the $ (\beta_i, O (\tfrac{\log{n}}{\fract_i}))$-decomposition of $ G_i $.
Now define $ E_{i+1} $ as the multiset of edges containing for every edge $ (u, v) \in E_i $ such that $ c_i (u) \neq c_i (v) $ one edge $ (c_i (u), c_i (v)) $, i.e., $ E_{i+1} = \{ (c_i (u), c_i (v)) : (u, v) \in E_i \text{ and } c_i (u) \neq c_i (v) \} $, where the multiplicity of each edge is equal to the number of edges between the corresponding clusters in~$ G_i $.
Remember that all graphs~$ G_i $ have the same set of nodes, but nodes that do not serve as cluster centers in $ G_{i-1} $ will be isolated in $ G_i $.
It might seem counter-intuitive at first that these isolated nodes are not removed from the graph, but observe that in our dynamic algorithm nodes might start or stop being cluster centers over time.
By keeping all nodes in all subgraphs, we avoid having to explicitly deal with insertions or deletions of nodes.\footnote{Note that it is easy to explicitly maintain the sets of isolated and non-isolated nodes by observing the degrees.}

Note that the $ (\fract_i, O (\tfrac{\log{n}}{\fract_i})) $-decomposition of $ G_i $ guarantees that $ | E_{i+1} | \leq \beta_i \cdot | E_i | $ in expectation, which implies the following bound.

\begin{observation}\label{obs:number of nodes and edges}
For every $ 0 \leq i \leq k $, $ |E_i| \leq m \cdot \prod_{0 \leq j \leq i - 1} \beta_j $ in expectation.\footnote{Note that for $ i = 0 $ the product $ \prod_{0 \leq j \leq i-1} \fract_j $ is empty and thus equal to $ 1 $.}
\end{observation}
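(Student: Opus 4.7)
The plan is to prove the observation by induction on $i$. The base case $i=0$ is immediate from the definitions: $E_0 = E$, so $|E_0| = m$, and the empty product equals $1$.

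For the inductive step, suppose $\mathbb{E}[|E_i|] \leq m \cdot \prod_{0 \leq j \leq i-1} \fract_j$. By the construction of the LDD-hierarchy, $E_{i+1}$ contains exactly one copy (with multiplicity) of each inter-cluster edge of $G_i$ with respect to the $(\fract_i, O(\log n / \fract_i))$-decomposition maintained by $\mathcal{D}_i$; hence $|E_{i+1}|$ equals the number of inter-cluster edges of $G_i$. By Theorem~\ref{thm:fully dynamic LDD} applied to the multigraph $G_i$, this number is at most $\fract_i \cdot |E_i|$ in expectation, and crucially this guarantee is a pointwise statement about the input multigraph, so it remains valid after conditioning on the realization of $G_i$. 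The tower property then yields
\[
\mathbb{E}\bigl[|E_{i+1}|\bigr] \;=\; \mathbb{E}\bigl[\mathbb{E}[|E_{i+1}| \mid G_i]\bigr] \;\leq\; \fract_i \cdot \mathbb{E}[|E_i|] \;\leq\; m \cdot \prod_{0 \leq j \leq i} \fract_j,
\]
closing the induction.

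The only subtlety to watch is the conditioning: each $G_i$ is itself a random object, since its definition depends on the random shift values used by the decompositions $\mathcal{D}_0, \ldots, \mathcal{D}_{i-1}$ at lower levels of the hierarchy. Because the guarantee of Theorem~\ref{thm:fully dynamic LDD} applies to \emph{any} fixed input multigraph (with the expectation taken over the algorithm's internal randomness on that input), applying it inside the conditional expectation is legitimate, and no independence assumption between the randomness of the different $\mathcal{D}_j$ is required. I do not anticipate any serious obstacle beyond this bit of measure-theoretic bookkeeping.
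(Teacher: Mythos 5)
Your proposal is correct and matches the paper's (unstated) reasoning: the paper simply observes that each level satisfies $\mathbb{E}[|E_{i+1}|] \leq \fract_i \cdot |E_i|$ and declares that this ``implies'' the bound, and you fill in the induction and the tower-property/conditioning step that the paper leaves implicit. The only cosmetic difference is that you (correctly) carry an inequality $\leq$ through the induction, whereas the paper writes $=$, even though the underlying LDD guarantee is only an upper bound on the expected number of inter-cluster edges.
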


We now analyze the update time for maintaining this LDD-hierarchy under insertions and deletions to the input graph~$ G $.
Note that for each level~$ i \leq k - 1 $ of the hierarchy, changes made to the graph $ G_i $ might result in the dynamic algorithm~$ \mathcal{D}_i $ making changes to the $ (\fract_i, O (\tfrac{\log{n}}{\fract_i})) $-decomposition of $ G_i $.
In particular, edges of $ G_i $ could start or stop being inter-cluster edges in the decomposition, which in turn leads to edges being added to or removed from~$ G_{i + 1} $.
Thus, a single update to the input graph~$ G $ might result in a blow-up of induced updates to be processed by the algorithms $ \mathcal{D}_1, \ldots, \mathcal{D}_{k -1} $.
To limit this blow-up, we use an additional property of our LDD-decomposition given in Theorem~\ref{thm:fully dynamic LDD}, namely the non-trivial bound on the number of edges to become inter-cluster edges after each update.

\begin{lemma}\label{lem:update time hierarchy}
The LDD-hierarchy can be maintained with an expected amortized update time of
\begin{equation*}
\tilde O \left( \sum_{0 \leq j \leq k-1} \frac{ O (\log n)^{2 (k-1)} }{\fract_j \prod_{0 \leq j' \leq j} \fract_{j'}} \right) \, .
\end{equation*}
\end{lemma}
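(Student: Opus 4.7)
The plan is to propagate the update bound level by level and sum the contributions, exploiting the stronger guarantee from Theorem~\ref{thm:fully dynamic LDD} on the amortized number of inter-cluster edge changes. Let $U_i$ denote the expected amortized number of updates received by $\mathcal{D}_i$ (equivalently, by the graph $G_i$) per single update to the input graph $G = G_0$.

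First, I would establish a recurrence for $U_i$. For the base case, $U_0 = 1$, since every update to $G_0$ is processed directly by $\mathcal{D}_0$. For the inductive step, recall that the updates seen by $\mathcal{D}_{i+1}$ correspond exactly to edges of $G_i$ that start or stop being inter-cluster edges in the decomposition maintained by $\mathcal{D}_i$: each such event becomes either an insertion into or a deletion from $G_{i+1}$. By the second part of Theorem~\ref{thm:fully dynamic LDD}, the expected amortized number of such events per update to $G_i$ is $O(\log^2 n / \fract_i) = \tilde O(1/\fract_i)$. By linearity of expectation,
\[
U_{i+1} \;\leq\; U_i \cdot \tilde O\!\left(\tfrac{1}{\fract_i}\right),
\]
and unrolling the recurrence yields $U_i = \tilde O\bigl(\prod_{0 \leq j' \leq i-1} 1/\fract_{j'}\bigr)$.

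Next, I would combine this with the update-time guarantee of Theorem~\ref{thm:fully dynamic LDD}: the instance $\mathcal{D}_j$ spends expected amortized $\tilde O(1/\fract_j^2)$ per update it receives. Therefore the total expected amortized work charged to $\mathcal{D}_j$ per update to the input graph is
\[
U_j \cdot \tilde O\!\left(\tfrac{1}{\fract_j^2}\right) \;=\; \tilde O\!\left(\frac{1}{\fract_j^2 \prod_{0 \leq j' \leq j-1} \fract_{j'}}\right) \;=\; \tilde O\!\left(\frac{1}{\fract_j \prod_{0 \leq j' \leq j} \fract_{j'}}\right).
\]
Summing over $j = 0, \ldots, k-1$ gives exactly the claimed bound; the (constant) cost of translating cluster-center changes at level $i$ into edge insertions/deletions at level $i+1$ is absorbed into the $\tilde O$ notation.

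The main subtlety will be justifying that the composed guarantees still hold in expectation and against an oblivious adversary. The update sequence seen by $\mathcal{D}_{i+1}$ is determined by the random choices of $\mathcal{D}_0,\ldots,\mathcal{D}_i$, so it is not literally fixed in advance. However, if we take the random bits of each $\mathcal{D}_i$ to be mutually independent, then conditioned on the random bits of $\mathcal{D}_0,\ldots,\mathcal{D}_i$ the update sequence fed into $\mathcal{D}_{i+1}$ is determined, while $\mathcal{D}_{i+1}$'s own randomness is still independent of it; this is exactly the oblivious-adversary setting in which Theorem~\ref{thm:fully dynamic LDD} applies. Taking expectations over all sources of randomness and using the tower property (together with linearity of expectation to turn amortized bounds into cumulative bounds) then yields the stated expected amortized update time.
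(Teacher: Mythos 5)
Your proposal is correct and takes essentially the same approach as the paper: propagate the amortized number of updates through the hierarchy using the $\tilde O(1/\fract_i)$ bound on inter-cluster-edge changes, charge each level $j$ its $\tilde O(1/\fract_j^2)$ per-update cost, and sum over levels. The paper formalizes the "composition of amortized bounds" step you defer to the tower property by introducing random variables $X_i(q), Y_i(q), Z_i(q)$ for the work and induced updates over a $q$-update sequence, then inducting on $Z_i(q) = X_i(q) + Z_{i+1}(Y_i(q))$ with an explicit application of the law of total expectation; but this is exactly the machinery you gesture at in your final paragraph, and the key inequality you label "by linearity of expectation" is really that tower-property step.
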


\begin{proof}
For every $ 0 \leq i \leq k - 1 $ and every $ q \geq 1 $ define the following random variables:
\begin{itemize}
\item $ X_i (q) $: The total time spent by algorithm $ \mathcal{D}_i $ for processing any sequence of $ q $~updates to~$ G_i $.
\item $ Y_i (q) $: The total number of changes performed to $ G_{i+1} $ by $ \mathcal{D}_i $ while processing any sequence of $ q $~updates to~$ G_i $.
\item $ Z_i (q) $: The total time spent by algorithms $ \mathcal{D}_i, \ldots, \mathcal{D}_{k-1} $ for processing any sequence of $ q $~updates to~$ G_i $.
\end{itemize}
Note that the expected values of $ X_i (q) $ and $ Y_i (q) $ are bounded by Theorem~\ref{thm:fully dynamic LDD}~(the latter holds since only changes involving inter-cluster edges are propagated as updates to the next level).
We will show by induction on $ i $ that $ E [Z_i (q)] = \tilde O (q \cdot \sum_{i \leq j \leq k-1} \tfrac{ O (\log n)^{2(k-i-1)} }{\fract_j \prod_{i \leq j' \leq j} \fract_{j'}}) $, which with $ i = 0 $ implies the claim we want to prove.

Before showing the proof, observe that our LDD-hierarchy uses multiple instances of the dynamic low-diameter decomposition. We can order these instances in a hierarchical manner such that changes in the instance $i$ only affect instances $i+1$ and above (this is possible because all changes propagate one way through the hierarchy). Since the random bits among levels are independent, we can think of the random bits in the previous level being fixed in advance, and hence the updates to the instance $i$ are fixed as well. The latter implies that each instance $i$ in the LDD-hierarchy is running in the oblivious adversary setting, as required by Theorem~\ref{thm:fully dynamic LDD}.

We next prove the claimed bound on $E [Z_i (q)]$. In the base case $ i = k - 1 $, we know by Theorem~\ref{thm:fully dynamic LDD} that algorithm $ \mathcal{D}_{k-1} $ maintaining the $ (\fract_{k - 1}, O (\tfrac{\log{n}}{\fract_{k - 1}})) $-decomposition of $ G_{k-1} $ spends expected amortized time $ \tilde O (\frac{1}{\fract_{k-1}^2}) $ per update to $ G_{k-1} $, i.e., $ E [Z_{k - 1} (q)] = E [X_{k - 1} (q)] = \tilde O (q \cdot \frac{1}{\fract_{k-1}^2}) $ for any $ q \geq 1 $.
For the inductive step, consider some $ 0 \leq i < k - 1 $ and any $ q \geq 1 $.
Any sequence of $ q $ updates to $ G_i $ induces at most $ Y_i (q) $ updates to $ G_{i + 1} $.
Each of those updates has to be processed by the algorithms $ \mathcal{D}_{i+1}, \ldots, \mathcal{D}_{k-1} $.
We thus have $ Z_i (q) = X_i (q) + Z_{i+1} (Y_i (q)) $.

To bound $ E [Z_i (q)] $, recall first the expectations of the involved random variables.
As by Theorem~\ref{thm:fully dynamic LDD} the algorithm~$ \mathcal{D}_i $ maintaining the $ (\fract_i, O (\tfrac{\log{n}}{\fract_i})) $-decomposition of $ G_i $ has expected amortized update time $ \tilde O (\tfrac{1}{\fract_i^2}) $, it spends an expected total time of $ E [X_i (q)] = \tilde O (q \cdot \tfrac{1}{\fract_i^2}) $ for any sequence of $ q $ updates to $ G_i $.
Furthermore, over the whole sequence of $ q $~updates, the expected number of edges to ever become inter-cluster edges in the $ (\fract_i, O (\tfrac{\log{n}}{\fract_i})) $-decomposition of $ G_i $ is $ O (q \cdot \tfrac{ \log^2 n }{\fract_i}) $.
This induces at most $ O (q \cdot \tfrac{ \log^2 n }{\fract_i}) $ updates (insertions or deletions) to the graph $ G_{i+1} $, i.e., $ E [Y_i (q)] = O (q \cdot \tfrac{ \log^2 n }{\fract_i}) $.
By the induction hypothesis, the expected amortized update time spent by $ \mathcal{D}_{i+1}, \ldots, \mathcal{D}_{k-1} $ for any sequence of $ q' $ updates to $ G_{i+1} $ is $ E [Z_{i+1} (q')] = \tilde O (q' \cdot \sum_{i + 1 \leq j \leq k-1} \tfrac{ O (\log n)^{2(k-i-2)} }{\fract_j \prod_{i + 1 \leq j' \leq j} \fract_{j'}}) $.

Now by linearity of expectation we get
\begin{equation*}
E [Z_i (q)] = E \left[ X_i (q) + Z_{i+1} (Y_i (q)) \right] = E \left[ X_i (q) \right] + E \left[ Z_{i+1} (Y_i (q)) \right]
\end{equation*}
and by the law of total expectation we can bound $ E \left[ Z_{i+1} (Y_i (q)) \right] $ as follows:
\begin{align*}
E \left[ Z_{i+1} (Y_i (q)) \right] &= \sum_y E \left[ Z_{i+1} (Y_i (q)) \mid Y_i (q) = y \right] \cdot \Pr [Y_i (q) = y] \\
 &= \sum_y E \left[ Z_{i+1} (y) \right] \cdot \Pr [Y_i (q) = y] \\
 &= \sum_y \tilde O \left( y \cdot \sum_{i + 1 \leq j \leq k-1} \frac{ O (\log n)^{2(k-i-2)} }{\fract_j \prod_{i + 1 \leq j' \leq j} \fract_{j'}} \right) \cdot \Pr [Y_i (q) = y] \\
 &= \tilde O \left( \sum_{i + 1 \leq j \leq k-1} \frac{ O (\log n)^{2(k-i-2)} }{\fract_j \prod_{i + 1 \leq j' \leq j} \fract_{j'}} \right) \cdot \sum_y y \cdot \Pr [Y_i (q) = y] \\
 &= \tilde O \left( \sum_{i + 1 \leq j \leq k-1} \frac{ O (\log n)^{2(k-i-2)} }{\fract_j \prod_{i + 1 \leq j' \leq j} \fract_{j'}} \right) \cdot E [Y_i (q)] \\
 &= \tilde O \left( \sum_{i + 1 \leq j \leq k-1} \frac{ O (\log n)^{2(k-i-2)} }{\fract_j \prod_{i + 1 \leq j' \leq j} \fract_{j'}} \right) \cdot O \left( q \cdot \frac{ \log^2 n }{\fract_i} \right) \\
 &= \tilde O \left( q \cdot \sum_{i+1 \leq j \leq k-1} \frac{ O (\log n)^{2(k-i-1)} }{\fract_j \prod_{i \leq j' \leq j} \fract_{j'}} \right)
\end{align*}
We thus get
\begin{equation*}
E [Z_i (q)] = \tilde O (q \cdot \tfrac{1}{\fract_i^2}) + \tilde O \left( q \cdot \sum_{i+1 \leq j \leq k-1} \frac{ O (\log n)^{2(k-i-1)} }{\fract_j \prod_{i \leq j' \leq j} \fract_{j'}} \right) = \tilde O \left( q \cdot \sum_{i \leq j \leq k-1} \frac{ O (\log n)^{2(k-i-1)} }{\fract_j \prod_{i \leq j' \leq j} \fract_{j'}} \right)
\end{equation*}
as desired.
\end{proof}

Given any spanning forest $ T' $ of $ G_k $, there is a natural way of defining a spanning forest~$ T $ of~$ G $ from the LDD-hierarchy.
To this end, we first formally define the contraction of a node $ v $ of $ G $ to a cluster center $ v' $ of $ G_i $ (for $ 0 \leq i \leq k) $ as follows:
Every node $ v $ of $ G $ is contracted to itself in $ G_0 $, and, for every $ 1 \leq i \leq k $, a node $ v $ of $ G $ is contracted to $ v' $ in $ G_i $ if $ v $ is contracted to $ u' $ in $ G_{i-1} $ and $ c_{i-1} (u') = v' $.
Similarly, for every $ 0 \leq i \leq k $, an edge $ e = (u, v) $ of $ G $ is contracted to an edge $ e' = (u', v') $ of $ G_i $ if $ u $ is contracted to $ u' $ and $ v $ is contracted to $ v' $.
Now define $ T $ inductively as follows:
We let $ T_0 $ be the forest consisting of the spanning trees of diamteter $ O (\tfrac{\log{n}}{\fract_0}) $ of the clusters in the $ (\fract_0, O (\tfrac{\log{n}}{\fract_0}))$-decomposition of $ G_0 $.
For every $ 1 \leq i \leq k $, we obtain $ T_i $ from $ T_{i-1} $ and a $ (\fract_i, O(\tfrac{\log{n}}{\fract_i})) $-decomposition of $ G_i $ as follows: for every edge $ e' $ in a shortest path tree in one of the clusters, we include in $ T_i $ \emph{exactly one} edge $e$ of $G$ among the edges that are contracted to $ e' $ in $ G_i $.
Finally, $ T $ is obtained from $ T_k $ as follows: for every edge~$ e' $ in the spanning forest~$ T' $ of~$ G_k $, we include in~$ T $ the edge~$ e $ of~$ G $ contracted to~$ e' $ in~$ G_k $.
As the clusters in each decomposition are non-overlapping, we are guaranteed that $ T $ is indeed a forest.
Note that, apart from the time needed to maintain $ T' $, we can maintain $ T $ in the same asymptotic update time as the LDD-hierarchy (up to logarithmic factors).

We now partially analyze the stretch of $ T $ with respect to $ G $.

\begin{lemma}\label{lem:path length for nodes in same cluster}
For every $ 1 \leq i \leq k $, and for every pair of nodes $ u $ and~$ v $ that are contracted to the same cluster center in $ G_i $, there is a path from $ u $ to $ v $ in $ T $ of length at most $ \tfrac{O(\log{n})^i}{\prod_{0 \leq j \leq i-1} \fract_j} $.
\end{lemma}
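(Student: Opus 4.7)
The plan is to proceed by induction on $i$, building the claimed path in $T$ by lifting a shortest path at level $i-1$ through the cluster hierarchy and filling in the gaps using the inductive hypothesis.

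For the base case $i = 1$, if $u$ and $v$ contract to the same cluster center in $G_1$, they lie in the same cluster $C$ of the $(\beta_0, O(\tfrac{\log n}{\beta_0}))$-decomposition of $G_0 = G$. By construction, $T_0 \subseteq T$ contains the shortest path tree of $C$ rooted at its center, which has depth at most $O(\tfrac{\log n}{\beta_0})$, so the $u$-to-$v$ path in $T$ obtained by going through the center has length at most $2 \cdot O(\tfrac{\log n}{\beta_0}) = O(\tfrac{\log n}{\beta_0})$, matching the bound.

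For the inductive step, assume the statement at level $i - 1$, and let $u, v$ be contracted to the same cluster center in $G_i$. Let $u', v'$ be the respective nodes they are contracted to in $G_{i-1}$; by definition of the contraction, $u'$ and $v'$ lie in the same cluster $C$ of the $(\beta_{i-1}, O(\tfrac{\log n}{\beta_{i-1}}))$-decomposition of $G_{i-1}$. Routing through the center of $C$ in the shortest path tree of $C$ yields a path $u' = w'_0, w'_1, \ldots, w'_\ell = v'$ in $G_{i-1}$ with $\ell \leq O(\tfrac{\log n}{\beta_{i-1}})$. For each $0 \leq j \leq \ell - 1$, the edge $(w'_j, w'_{j+1})$ in $G_{i-1}$ is the image of some edge $(x_j, y_j)$ of $G$ under the contraction, where $x_j$ contracts to $w'_j$ and $y_j$ contracts to $w'_{j+1}$ in $G_{i-1}$; and by construction of $T_{i-1}$, all these edges $(x_j, y_j)$ lie in $T$.

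We now assemble a $u$-to-$v$ walk in $T$ by concatenating the edges $(x_0, y_0), (x_1, y_1), \ldots, (x_{\ell-1}, y_{\ell-1})$ with detours that bridge consecutive pieces: a detour from $u$ to $x_0$ (both contract to $w'_0$ in $G_{i-1}$), from $y_j$ to $x_{j+1}$ for $0 \leq j \leq \ell - 2$ (both contract to $w'_{j+1}$ in $G_{i-1}$), and from $y_{\ell-1}$ to $v$ (both contract to $w'_\ell$ in $G_{i-1}$). The inductive hypothesis, applied at level $i-1$ to each of the $\ell + 1$ detours, guarantees that each such detour has expected length at most $\frac{O(\log n)^{i-1}}{\prod_{0 \leq j' \leq i-2} \beta_{j'}}$. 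By linearity of expectation, the total expected length of the walk is at most
\[
\ell + (\ell + 1) \cdot \frac{O(\log n)^{i-1}}{\prod_{0 \leq j' \leq i-2} \beta_{j'}} \;=\; \frac{O(\log n)^{i}}{\prod_{0 \leq j \leq i-1} \beta_j},
\]
using $\ell \leq O(\tfrac{\log n}{\beta_{i-1}})$ and absorbing the lower-order additive $\ell$ term into the constant. Since $T$ is a forest, the existence of such a walk implies the existence of a $u$-to-$v$ path in $T$ of at most this length.

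The only subtle point is verifying that consecutive endpoints of the detours really are contracted to the same cluster center at level $i - 1$, which is what allows the inductive hypothesis to apply; this is immediate from the definition of $(x_j, y_j)$ as the preimage of $(w'_j, w'_{j+1})$ under contraction to $G_{i-1}$. I do not foresee any other substantive obstacle.
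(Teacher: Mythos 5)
Your proof is correct and follows essentially the same route as the paper: induct on $i$, lift the bounded-length path between the two contracted cluster centers in $G_{i-1}$ to a walk in $T$, and bridge consecutive pieces with detours of expected length bounded by the inductive hypothesis. The accounting at the end matches the paper's (in fact your indexing is cleaner than the paper's, which has a harmless off-by-one in the number of lifted edges and a $\beta_i$-vs-$\beta_{i-1}$ typo in the final chain of estimates).
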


\begin{proof}
The proof is by induction on~$ i $.
The induction base $ i = 1 $ is straightforward:
For $ u $ and $ v $ to be contracted to the same cluster center in $ G_1 $, they must be contained in the same cluster~$ C $ of the $ (\fract_0, O (\tfrac{\log{n}}{\fract_0})) $-decomposition of $ G_0 $ maintained by $ \mathcal{D}_0 $.
Remember that $ C $ has strong diameter at most $ O (\tfrac{\log{n}}{\fract_0}) $.
Thus, in the shortest path tree of $ C $ there is a path of length at most $ O (\tfrac{\log{n}}{\fract_0}) $ from $ u $ to $ v $ using edges of $ G_0 = G $.
By the definition of $ T $, this path is also present in $ T $.

For the inductive step, let $ 2 \leq i \leq k $ and let $ u' $ and $ v' $ denote the cluster centers to which $ u $ and $ v $ are contracted in $ G_{i - 1} $, respectively.
For $ u $ and $ v $ to be contracted to the same cluster center in $ G_i $, $ u' $ and $ v ' $ must be contained in the same cluster~$ C $ of the $ (\fract_{i - 1}, O (\tfrac{\log{n}}{\fract_{i - 1}})) $-decomposition of $ G_{i - 1} $ maintained by $ \mathcal{D}_{i - 1} $.
As $ C $ has strong diameter at most $ O (\tfrac{\log{n}}{\fract_{i - 1}}) $, there is a path $ \pi $ from $ u' $ to $ v' $ of length at most $ O (\tfrac{\log{n}}{\fract_{i - 1}}) $ in the shortest path tree of $ C $.
Let $ x_1, \ldots, x_t $ denote the nodes on $ \pi $, where $ x_1 = u' $ and $ x_t = v' $.
By the definition of our tree $T$ with respect to $ G $, there must exist edges $ (a_1, b_1), \ldots, (a_t, b_t) $ of~$ G $ such that
\begin{itemize}
\item $ (a_\ell, b_\ell) $ is contained in $ T $ for all $ 1 \leq \ell \leq t $,
\item $ u $ and $ a_1 $ are contracted to the same cluster center in $ G_{i-1} $,
\item $ b_t $ and $ v $ are contracted to the same cluster center in $ G_{i-1} $, and
\item $ b_\ell $ and $ a_{\ell+1} $ are contracted to the same cluster center in $ G_{i-1} $ for all $ 1 \leq \ell \leq t - 1 $.
\end{itemize}
By the induction hypothesis we know that for every $ 1 \leq \ell \leq t-1 $ there is a path of length at most $ \tfrac{O (\log{n})^{i - 1}}{\prod_{0 \leq j \leq i-2} \fract_j} $ from $ b_\ell $ to $ a_{\ell+1} $ in $ T $.
Paths of the same maximum length also exist from $ u $ to $ a_1 $ and from $ b_t $ to $ v $.
It follows that there is a path from $ u $ to $ v $ in $ T $ of length at most
\begin{align*}
(t + 1) \cdot \frac{O (\log{n})^{i - 1}}{\prod_{0 \leq j \leq i-2} \fract_j} + t \leq 3 t \cdot \frac{O (\log{n})^{i - 1}}{\prod_{0 \leq j \leq i-2} \fract_j}
 = O \left( \frac{\log{n}}{\fract_i} \right) \cdot \frac{O (\log{n})^{i - 1}}{\prod_{0 \leq j \leq i-2} \fract_j}
 = \frac{O (\log{n})^i}{\prod_{0 \leq j \leq i-1} \fract_j}
\end{align*}
as desired.
\end{proof}

To analyze the stretch of $ T $, we will use the following terminology:
we let the \emph{level} of an edge $ e $ of $ G $ be the largest $ i $ such that edge $ e $ is contracted to some edge~$ e' $ in $ G_i $.
Remember that $ E_i $ is a multiset of edges containing as many edges $ (u', v') $ as there are edges $ (u, v) \in E $ with $ u $ and $ v $ being contracted to different cluster centers $ u' $ and $ v' $ in $ G_i $, respectively.
Thus, the expected number of edges at level $ i $ is at most $ | E_i | $.
Note that for an edge $ e = (u, v) $ to be at level~$ i $, $ u $ and $ v $ must be contracted to the same cluster center in $ G_{i + 1} $.
Therefore, by \Cref{lem:path length for nodes in same cluster}, the stretch of edges at level~$ i $ in $ T $ with respect to $ G $ is at most $ \frac{O(\log{n})^{i+1}}{\prod_{0 \leq j \leq i} \fract_j} $.
The expected contribution to the total stretch of~$ T $ by edges at level $ i \leq k - 1 $ is thus at most
\begin{equation}\label{eq:stretch of level i edges}
| E_i | \cdot \frac{O(\log{n})^{i+1}}{\prod_{0 \leq j \leq i} \fract_j} \leq \frac{m}{\beta_i} \cdot O (\log{n})^{i+1} \, .
\end{equation}

\subsection{Dynamic Low-Stretch Tree Algorithms}

To now obtain a fully dynamic algorithm for maintaining a low-stretch forest, it remains to plug in a concrete algorithm for maintaining $ T' $ together with suitable choices of the parameters.
We analyze two choices for dynamically maintaining $ T' $.
The first is the ``lazy'' approach of recomputing a low-stretch forest from scratch after each update to the input graph.
The second is a fully dynamic spanning forest algorithm with only trivial stretch guarantees.

\begin{theorem}[Restatement of Theorem~\ref{thm:fully dynamic low stretch tree}]
Given any unweighted, undirected graph undergoing edge insertions and deletions, there is a fully dynamic algorithm for maintaining a spanning forest of expected average stretch $ n^{o(1)} $ that has expected amortized update time $ m^{1/2 + o(1)} $.
These guarantees hold against an oblivious adversary.
\end{theorem}

\begin{proof}
%As stated in the theorem, we next give an algorithm with $ m^{1/2 + o(1)} $ update time. To bring the running time down to $ n^{1/2 + o(1)} $ (i.e., prove Corollary~\ref{cor:fully dynamic low stretch tree}), we can run this algorithm on a cut sparsifier of size $ \tilde O (n) $ that is maintained by a dynamic algorithm with polylogarithmic update time.
%We provide the details of this approach in Appendix~\ref{apx:sparsifier}.

We set $ k = \lceil \sqrt{\log{n}} \rceil $ and $ \fract_i = \fract = \tfrac{1}{m^{1 / (2k + 1)}} $ for all $ 0 \leq i \leq k-1 $ and maintain an LDD-hierarchy with these parameters.
Additionally, we maintain the graph $ G' $ induced by all non-isolated nodes of $ G_k $, which can easily be done by maintaining the degrees of nodes in $ G_k $.
After each update to $ G $, we compute a low-average stretch forest of $ T' $ of $ G' $.
Note that this recomputation is performed \emph{after} having updated all graphs in the hierarchy; we use the state-of-the-art static algorithm for computing a spanning forest of the multigraph $ G' $ with total stretch $ \tilde O (| E_k |) $ in time $ \tilde O (| E_k |) $.

By Equation~\eqref{eq:stretch of level i edges}, the contribution to the total stretch of $ T $ by edges at level $ i \leq k - 1 $ is at most $ m \cdot \tfrac{O (\log{n})^{i+1}}{\beta_i} $.
To bound the contribution of edges at level~$ k $, consider some edge $ e = (u, v) $ at level~$ k $ and let $ u' $ and $ v' $ denote the cluster centers to which $ u' $ and $ v' $ are contracted in $ G_k $, respectively.
Let $ \pi $ denote the path from $ u' $ to $ v' $ in $ T' $.
Using similar arguments as in the proof of Lemma~\ref{lem:path length for nodes in same cluster}, the contracted nodes and edges of $ \pi $ can be expanded to a path from $ u $ to~$ v $ in~$ T $ of length at most $ \tfrac{O (\log{n})^k}{\prod_{0 \leq i \leq k - 1} \fract_i} \cdot | \pi | $.
Thus, the contribution of edges at level~$ k $ is at most $ \tilde O (| E_k |) \cdot \tfrac{O (\log{n})^k}{\prod_{0 \leq i \leq k - 1} \fract_i} = \tilde O (m \cdot O (\log{n})^k) $ and the total stretch of~$ T $ with respect to~$ G $ is
\begin{align*}
\sum_{0 \leq i \leq k - 1} m \cdot \frac{O (\log{n})^{i+1}}{\beta} + \tilde O (m \cdot O (\log{n})^k)
 &= \tilde O \left( m \cdot \left( \frac{1}{\fract} \cdot \sum_{0 \leq i \leq k - 1} O (\log{n})^{i+1} + O (\log{n})^k \right) \right) \\
 &= \tilde O \left( m \cdot \frac{O (\log{n})^k}{\fract} \right) \\
 &= \tilde O \left( m m^{1 / (2k + 1)} \cdot O (\log{n})^k \right) \\
 &= m^{1 + o(1)} \, ,
\end{align*}
which gives an average stretch of $ m^{o(1)} = n^{o(1)} $.

By Observation~\ref{obs:number of nodes and edges}, $ G_k $ has at most $ m \beta^k $ edges in expectation and thus $ G' $ has at most $ m \beta^k $ nodes and edges in expectation.
Using the bound of Lemma~\ref{lem:update time hierarchy} for the update time of the LDD-hierarchy and the bound of $ \tilde O (m \beta^k ) $ for recomputing the low-stretch tree $ T' $ on $ G' $ from scratch, the expected amortized update time for maintaining $ T $ is
\begin{align*}
\tilde O \left( \sum_{0 \leq j \leq k-1} \frac{O (\log n)^{2 k}}{\fract_j \prod_{0 \leq j' \leq j} \fract_{j'}} + | E_k | \right) &= \tilde O \left( \sum_{0 \leq j \leq k-1} \frac{O (\log n)^{2 k}}{\fract^{j + 2}} + m \beta^k \right) \\
 &= \tilde O \left( \frac{O (\log n)^{2 k}}{\fract^{k + 1}} + m \beta^k \right) \\
 &= \tilde O (m^{(k + 1) / (2k + 1)} \cdot O (\log n)^{2 k}) \\
 &= \tilde O (m^{1/2 + 1 / (4k + 2)} \cdot O (\log n)^{2 k}) = m^{1/2 + o(1)} \, . \qedhere
\end{align*}
\end{proof}

\begin{theorem}\label{thm:LST trade-off}
Given any unweighted, undirected graph undergoing edge insertions and deletions, there is a fully dynamic algorithm for maintaining a spanning forest of expected average stretch $ O (t + n^{1/3 + o(1)}) $ that has expected amortized update time $ \tfrac{n^{1 + o(1)}}{t} $ for every $ 1 \leq t \leq n $.
These guarantees hold against an oblivious adversary.
\end{theorem}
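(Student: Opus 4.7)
The plan is to follow the same template as the proof of Theorem~\ref{thm:fully dynamic low stretch tree}, instantiating the LDD-hierarchy of Section~\ref{sec:low_stretch_tree} with parameters tuned to the desired trade-off, but splitting the analysis into two regimes depending on~$t$. As a preliminary step I reduce to the sparsifier-reduced setting with $m = \tilde O(n)$ (Appendix~\ref{apx:sparsifier}), so all subsequent bounds are in~$n$.

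\emph{Regime 1} ($1 \le t \le n^{1/3+o(1)}$): I instantiate the LDD-hierarchy with $k=1$ and $\fract_0 = 1/t$. By Lemma~\ref{lem:update time hierarchy} and Theorem~\ref{thm:fully dynamic LDD}, the hierarchy is maintained with expected amortized update time $\tilde O(1/\fract_0^2) = \tilde O(t^2)$, and the contracted multigraph $G_1$ has $|E_1| = \tilde O(n/t)$ edges in expectation. I maintain $T'$ by recomputing a polylog-average-stretch spanning tree of $G_1$ from scratch after each update using the static algorithm of Abraham and Neiman at cost $\tilde O(|E_1|) = \tilde O(n/t)$. The total expected amortized update time is $\tilde O(t^2 + n/t) = \tilde O(n/t)$ in this regime. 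The stretch analysis mirrors that of Theorem~\ref{thm:fully dynamic low stretch tree}: by Lemma~\ref{lem:path length for nodes in same cluster} the intra-cluster (level-$0$) edges have stretch $O(\log n/\fract_0) = \tilde O(t)$, while level-$1$ edges have expected stretch $\tilde O(1/\fract_0) = \tilde O(t)$ after expanding through the low-diameter clusters, since $T'$ has polylog average stretch on $G_1$. The overall average stretch is $\tilde O(t) = O(t + n^{1/3+o(1)})$.

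\emph{Regime 2} ($n^{1/3+o(1)} < t \le n$): The Regime-1 algorithm with $\fract_0 = 1/t$ would cost $\tilde O(t^2)$ per update, which exceeds $\tilde O(n/t)$. Instead I run the Regime-1 algorithm at the boundary value $t_0 := n^{1/3+o(1)}$ (which has update time $\tilde O(n^{2/3})$ and average stretch $O(n^{1/3+o(1)})$) as a base, and rebuild it from scratch every $\tau := \lceil t/t_0 \rceil$ updates. The rebuild cost $\tilde O(n^{2/3})$ amortized over $\tau$ updates contributes $\tilde O(n^{2/3}/\tau) = \tilde O(n/t)$ per update. Between rebuilds the spanning forest is kept valid using the polylog-time fully dynamic spanning forest algorithm of Holm, Lichtenberg and Thorup, adding another $\tilde O(1)$ per update, so the overall expected amortized update time remains $\tilde O(n/t)$.

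The main obstacle is bounding the average stretch between rebuilds in Regime~2. Immediately after a rebuild the average stretch is $O(n^{1/3+o(1)})$, and during the next $\tau$ updates the spanning forest undergoes at most $O(\tau)$ tree-edge swaps. I need to show that each such swap increases the average stretch by at most $\tilde O(t_0) = \tilde O(n^{1/3})$ in an amortized sense, so that after $\tau$ swaps the average stretch is at most $O(n^{1/3+o(1)}) + \tilde O(\tau \cdot t_0) = O(t + n^{1/3+o(1)})$ by the choice of $\tau$. The naive argument that bounds the per-swap stretch change by the length of the fundamental cycle is far too weak (a single swap could have cycle length $\Theta(n)$ and affect $\Theta(n)$ graph edges simultaneously), so a more delicate charging scheme is required, leveraging the amortized level-based invariants of Holm--Lichtenberg--Thorup together with the oblivious adversary assumption.
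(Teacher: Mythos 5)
Your Regime~1 argument (single LDD level with $\fract_0 = 1/t$, recomputing a polylog-average-stretch tree of $G_1$ from scratch after each update) is essentially correct and achieves update time $\tilde O(t^2 + n/t) = \tilde O(n/t)$ and average stretch $\tilde O(t) = O(t + n^{1/3+o(1)})$ in the claimed parameter range. It is, however, a special case of what the paper does, rather than the whole story.

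Regime~2 is where you have a genuine, and I believe fatal, gap. You propose to periodically rebuild the Regime-1 structure at the boundary value $t_0 = n^{1/3+o(1)}$ and to patch the forest in between using a generic fully dynamic spanning forest algorithm. As you yourself note, you would then need to show that each tree-edge swap degrades the total stretch by at most $\tilde O(t_0)$ amortized. But a single tree-edge deletion can force a replacement that reroutes the tree paths of $\Theta(m)$ non-tree graph edges through a replacement edge that is arbitrarily bad for stretch, increasing the total stretch by $\Theta(mn)$ in one step. The Holm--Lichtenberg--Thorup level invariants are about counting edge scans for connectivity and carry no information about distances or stretch, and the oblivious-adversary assumption does not help here either: the base algorithm you invoke is deterministic, and obliviousness only protects randomized choices. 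There is no charging scheme of the kind you gesture at, and nothing in the HLT analysis that could supply one.

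The paper avoids this entirely by \emph{not} trying to control stretch drift at the top of the hierarchy. It sets $k = \lceil \log\log n \rceil$ and uses a geometric sequence $\fract_0 = \sqrt{t/n}$, $\fract_i = \sqrt{\fract_{i-1}}$, so that $|E_k| = m \cdot \prod_i \fract_i = \tilde O(t)$ in expectation and every term in the update-time sum of Lemma~\ref{lem:update time hierarchy} collapses to $1/\fract_0^2 = n/t$. Then $T'$ is maintained by \emph{any} polylog-time fully dynamic spanning forest algorithm, and the contribution of level-$k$ edges to the total stretch is bounded by the trivial per-edge bound $n-1$ times $|E_k| = \tilde O(t)$, giving average stretch $\tilde O(t)$ from the top level without any stretch-aware maintenance of $T'$. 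The lower levels contribute the $n^{1/3+o(1)}$ term. The key idea you are missing is exactly this: make $G_k$ sparse enough (only $\tilde O(t)$ edges) that the trivial stretch bound suffices, rather than trying to keep $T'$ a low-stretch tree of a dense contracted graph between rebuilds.
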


\begin{proof}
We set $ k = \lceil \log{\log{n}} \rceil $, $ \fract_0 = \sqrt{t/n} $ and $ \fract_i = \sqrt{\fract_{i-1}} $ for all $ 1 \leq i \leq k-1 $ and maintain an LDD-hierarchy with these parameters.
The spanning forest $ T' $ is obtained by fully dynamically maintaining a spanning forest of $ G_k $ using any algorithm with polylogarithmic update time. %~\cite{HenzingerK99,HolmLT01}.

By Equation~\eqref{eq:stretch of level i edges}, the contribution to the total stretch of $ T $ by edges at level $ i \leq k - 1 $ is at most $ \frac{m}{\beta_i} \cdot O (\log{n})^{i+1} $.
For every edge $ e = (u, v) $ at level~$ k $ with $ u $ contracted to $ u' $ and $ v $ contracted to $ v' $ in $ G_k $, there is a path from $ u' $ to $ v' $ in $ T' $ that by undoing the contractions can be expanded to a path from $ u $ to $ v $ in $ T $, which trivially has length at most $ n - 1 $.
Thus, the contribution by each edge at level $ k $ is at most $ n - 1 $.
As for every $ 0 \leq i \leq k $ there are at most $ | E_i | = m \cdot \prod_{0 \leq j \leq i - 1} \beta_j $ edges at level $ i $ in expectation, we can bound the expected total stretch of $ T $ with respect to $ G $ as follows:
\begin{align*}
\sum_{0 \leq i \leq k-1}  | E_i | \cdot \frac{O(\log{n})^{i+1}}{\prod_{0 \leq j \leq i} \fract_j} + | E_k | \cdot n &=
\sum_{0 \leq i \leq k-1} \frac{m \cdot O (\log{n})^{i+1}}{\fract_i} + m \cdot \prod_{0 \leq i \leq k-1} \fract_i \cdot n \\
 &= m \cdot \left( \sum_{0 \leq i \leq k-1} \frac{O (\log{n})^{i+1}}{\fract_i} + \prod_{0 \leq i \leq k-1} \fract_i \cdot n \right)
\end{align*}
This gives an average stretch of $ \sum_{0 \leq i \leq k-1} \tfrac{O (\log{n})^{i+1}}{\fract_i} + \prod_{0 \leq i \leq k-1} \fract_i \cdot n $.
We now simplify these two terms.
Exploiting that $ \fract_i \geq \fract_0 $ for all $ 1 \leq i \leq k-1 $, we get
\begin{equation*}
\sum_{0 \leq i \leq k-1} \frac{O (\log{n})^{i+1}}{\fract_i} \leq \sum_{0 \leq i \leq k-1} \frac{O (\log{n})^{i+1}}{\fract_0} = \frac{O (\log{n})^k}{\fract_0} = \frac{O (\log n)^k}{\sqrt{t/n}} = \sqrt{\frac{n^{1+o(1)}}{t}} \, .
\end{equation*}
Furthermore, the geometric progression of the $ \fract_i $'s gives
\begin{equation*}
\prod_{0 \leq i \leq k-1} \fract_i \cdot n = \prod_{0 \leq i \leq k-1} \fract_0^{1/2^i} \cdot n = \fract_0^{\sum_{0 \leq i \leq k-1} 1/2^i} \cdot n = \fract_0^{2 - 1/2^{k-1}} \cdot n = \frac{t^{1 - 1/2^k}}{n^{1 - 1/2^k}} \cdot n \leq t \cdot n^{1/2^k} = O (t) \, .
\end{equation*}
The average stretch of the forest maintained by our algorithm is thus at most $ O ( t + \sqrt{\tfrac{n^{1+o(1)}}{t}} ) $, which, after balancing the two terms, can be rewritten as $ O (t + n^{1/3 + o(1)}) $.

It remains to bound the update time of the algorithm.
By Lemma~\ref{lem:update time hierarchy}, the hierarchy can be maintained with an amortized update time of
\begin{align*}
\tilde O \left( \sum_{0 \leq j \leq k-1} \frac{O (\log n)^{2 k}}{\fract_j \cdot \prod_{0 \leq j' \leq j} \fract_{j'}} \right) = \tilde O \left( \sum_{0 \leq j \leq k-1} \frac{O (\log n)^{2 k}}{\fract_0^{1 / 2^j} \cdot \fract_0^{2 - 1 / 2^j}} \right) &= \tilde O \left( \sum_{0 \leq j \leq k-1} \frac{O (\log n)^{2 k}}{\fract_0^2} \right) \\ &= \frac{n \cdot O (\log n)^{2 k}}{t} = \frac{n^{1 + o(1)}}{t} \, .
\end{align*}
Since the amortized number of changes to $ G_k $ per update to $ G $ is trivially bounded by $ \tfrac{n^{1 + o(1)}}{t} $ as well and since $ T' $ can be maintained with polylogarithmic amortized time per update to $ G_k $, we can maintain $ T $ with amortized update time $ \tfrac{n^{1 + o(1)}}{t} $.
\end{proof}

Note that the algorithm of Theorem~\ref{thm:fully dynamic low stretch tree} is superior to the algorithm of Theorem~\ref{thm:LST trade-off} as long as $ t \leq \sqrt{n} $.
If $ t \geq \sqrt{n} $, then the algorithm of Theorem~\ref{thm:LST trade-off} provides stretch $ O (t) $ and update time $ \tfrac{n^{1 + o(1)}}{t} $.

\subsection{Input Graph Sparsification}\label{apx:sparsifier}

In the following, we explain how input graph sparsification can be performed to the algorithm of Theorem~\ref{thm:fully dynamic low stretch tree} by running the algorithm on a cut sparsifier, similar to the approach of Koutis et al.~\cite{KoutisLP16} in the \emph{static} setting.

\begin{corollary} [Restatement of Corollary~\ref{cor:fully dynamic low stretch tree}] \label{corApp:fully dynamic low stretch tree}
Given any unweighted, undirected graph undergoing edge insertions and deletions, there is a fully dynamic algorithm for maintaining a spanning forest of expected average stretch $ n^{o(1)} $ that has expected amortized update time $ n^{1/2 + o(1)} $.
These guarantees hold against an oblivious adversary.
\end{corollary}

To make the analysis rigorous, we introduce some additional notation for multigraphs.

\paragraph{Succinct Representation of Multigraphs.}
A multigraph $ G = (V, E) $ consists of a set of nodes~$ V $ and a multiset of edges~$ E $.
We denote by $ \bar{E} = \{ (u, v) \in \binom{V}{2} \mid (u, v) \in E \} $ the support of the multiset~$ E $.
This allows a multigraph $ G = (V, E) $ to be succinctly represented as its \emph{skeleton} $ \bar{G} = (V, \bar{E}, \mu_G) $ where $ \mu_G $ is a multiplicity function $ \mu_G : \bar{E} \rightarrow \mathbb{Z}^+ $ that assigns to each edge $ e $ its (positive integer) multiplicity $ \mu_G (e) $.
We denote by $ m \coloneqq | E | $ the number of multi-edges (considering multiplicities), and by $ \bar{m} \coloneqq | \bar{E} | $ the size of the support of $ E $ (disregarding multiplicities).
For simplicity, we assume that $ m $ is polynomial in~$ n $.
The total stretch of a spanning forest~$ T $ is defined with respect to~$ E $, i.e.,
\begin{equation}
\str_T (G) = \sum_{e = (u, v) \in E(G)} \dist_T (u, v) = \sum_{e = (u, v) \in \bar{E}(G)} \mu_G (e) \cdot \dist_T (u, v) \, . \label{eq:definition of total stretch}
\end{equation}

Our dynamic algorithm will exploit that, given the skeleton of a multigraph, a low-stretch forest can be computed without (significant) dependence on the multiplicities.
\begin{lemma}\label{lem:multiplicity oblivious running time}
Given the skeleton $ \bar{G} $ of a multigraph $ G $, a spanning forest of $ G $ of total stretch $ m^{1 + o(1)} $ can be computed in time $ \tilde O (\bar{m}) $.
\end{lemma}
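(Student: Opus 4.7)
The plan is to invoke a standard static low-stretch spanning tree construction on the skeleton $\bar{G}$, viewed as a simple unweighted graph, and transfer the resulting bound to the multigraph via linearity of expectation. First, run the static algorithm of Alon et al.~\cite{AlonKPW95} (or any of its successors \cite{ElkinEST08,AbrahamN12}) on $\bar{G} = (V, \bar{E})$. Since $\bar{G}$ is simple and unweighted with $\bar{m}$ edges, this produces a random spanning forest $T$ of $\bar{G}$ in time $\tilde{O}(\bar{m})$. Because a spanning forest can never benefit from multiple copies of the same multi-edge, $T$ is simultaneously a spanning forest of $G$.

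The essential property these constructions supply is stronger than a mere average-stretch guarantee: they produce a \emph{probabilistic embedding}, meaning that for every fixed edge $e = (u, v) \in \bar{E}$ the expected stretch $\mathbb{E}[\dist_T(u, v)]$ is bounded by $2^{O(\sqrt{\log n \log \log n})}$, which is $m^{o(1)}$ since $m$ is assumed polynomial in $n$. Applying linearity of expectation to the definition of total stretch in~\eqref{eq:definition of total stretch} and weighting by the multiplicities gives
\begin{equation*}
\mathbb{E}[\str_T(G)] = \sum_{e = (u, v) \in \bar{E}} \mu_G(e) \cdot \mathbb{E}[\dist_T(u, v)] \leq m^{o(1)} \cdot \sum_{e \in \bar{E}} \mu_G(e) = m^{1 + o(1)},
\end{equation*}
as required.

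If a bound that holds deterministically on the output is preferred to one in expectation, I would wrap the construction in a Las Vegas loop: by Markov's inequality $\str_T(G) \leq 2\, m^{1 + o(1)}$ holds with probability at least $1/2$, and this bound can be verified in time $\tilde{O}(\bar{m})$ by preprocessing $T$ for $O(1)$-time distance queries via standard LCA machinery and then querying each edge of $\bar{E}$; on failure, retry. In expectation only a constant number of iterations is needed, preserving the $\tilde{O}(\bar{m})$ bound on the running time. The main point to check when invoking the cited results is that the \emph{per-edge} (rather than just average) expected stretch guarantee holds; this is automatic for the Alon et al.-style constructions because they are built from independently randomized LDDs and the probabilistic analysis applies edge by edge.
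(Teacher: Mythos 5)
Your proposal is correct and follows essentially the same approach as the paper's brief sketch: run an AKPW-style hierarchical construction on the skeleton $\bar{G}$ in time $\tilde{O}(\bar{m})$ and transfer the stretch bound to the multigraph via linearity of expectation, using that the expected stretch of each fixed skeleton edge is $m^{o(1)}$. One caveat worth being precise about: the per-edge expected-stretch guarantee is not ``automatic'' for the \emph{original} Alon et al.~\cite{AlonKPW95} construction, which used a deterministic ball-growing decomposition for which a per-edge probability bound is not immediate. The paper gets the per-edge property by explicitly instantiating the AKPW scheme with the random-shift LDD of Miller et al.~\cite{MillerPX13}, in which every edge is an inter-cluster edge with probability at most $\beta$ independently of its multiplicity; summing the resulting geometric tail over the $O(\sqrt{\log n})$ levels of the hierarchy gives exactly the $2^{O(\sqrt{\log n \log\log n})}$ per-edge expected stretch you invoke. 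With that concrete instantiation fixed (which also makes the running time manifestly depend only on $\bar{m}$, since the LDD hierarchy is computed purely on the skeleton), your linearity-of-expectation step and the optional Las Vegas amplification are both sound and match the paper's intent.
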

Such a guarantee can be achieved with a \emph{static} version of our algorithm, i.e., by combining the scheme of Alon et al.~\cite{AlonKPW95} with the LDD of Miller et al.~\cite{MillerPX13}.
Although we are not aware of any statement of such a ``multiplicity-oblivious'' running time in the literature, it seems plausible that the state-of-the art algorithms (achieving a total stretch of $ \tilde O (m) $) also have this property.
Note however that a stretch of $ m^{1 + o(1)} $ is anyway good enough for our purpose.

\paragraph{Refined Analysis of Dynamic Low-Stretch Tree Algorithm.}

We now restate the guarantees of our fully dynamic low-stretch forest algorithm when the input is a multigraph undergoing insertions and deletions of multi-edges (i.e., each update increases or decreases the multiplicity of some edge by~$ 1 $).
Our fully dynamic LDD algorithm maintains a clustering such that every edge is an inter-cluster edge with probability $ \beta $.
This implies that at most a $ \beta $-fraction of the edges are inter-cluster edges in expectation -- regardless of whether we consider multiplicities.
More precisely, contracting the clusters to single nodes yields a multigraph $ G' = (V', E') $ with $ | E' | \leq \beta | E | $ and $ | \bar{E}' | \leq \beta | \bar{E} | $.
Now, in particular the LDD hierarchy in the proof of Theorem~\ref{thm:fully dynamic low stretch tree} results in a multigraph $ G' = (V', E') $ with $ | E' | \leq \beta^k m $ and $ | \bar{E}' | \leq \beta^k \bar{m} $ (after $k$ levels).
For such a graph, if its skeleton is given explicitly, one can compute a spanning forest of total stretch $ O (|E'|^{1 + o(1)}) $ in time $ \tilde O (|\bar{E}'|) $ by Lemma~\ref{lem:multiplicity oblivious running time}.
Note that our dynamic algorithm can explicitly maintain the skeleton of $ G' $ with neglegible overheads in the update time.
It follows that our algorithm maintains a spanning forest of total stretch $ O (m^{1 + o(1)}) $ and has an update time of $ \tilde O (\bar{m}^{1/2 + o(1)}) $.

\paragraph{Cut Sparsifiers.}
For the definition of cut sparsifiers, we consider cuts of the form $ (U, V \setminus U) $ induced by a subset of nodes $ U \subset V $.
The capacity of such a cut $ (U, V \setminus U) $ in a graph $ G $ is defined as the total multiplicity of edges crossing the cut, i.e.,
\begin{equation*}
\capacity_G (U, V \setminus U) = \sum_{\substack{e = (u, v) \in \bar{E} \\ u \in U, v \in V \setminus U}} \mu_G (e)
\end{equation*}
A $ (1 \pm \epsilon) $-\emph{cut sparsifier}~\cite{BenczurK15} (with $ 0 \leq \epsilon \leq 1/2 $) of a multigraph $ G = (V, E) $ is a ``subgraph'' $ H = (V, F) $ with $ \bar{F} \subseteq \bar{E} $ such that for every $ U \subset V $ we have
\begin{equation*}
(1 - \epsilon) \capacity_G (U, V \setminus U) \leq \capacity_H (U, V \setminus U) \leq (1 + \epsilon) \capacity_G (U, V \setminus U) \, ,
\end{equation*}
i.e., $ H $ approximately preserves all cuts of $ G $.
Now let $ H $ be a $ (1 \pm \epsilon) $-\emph{cut sparsifier} of a multigraph $ G = (V, E) $ and let $ T = (V, E(T))$ be a (simple) spanning forest of $ H $.
For every edge $ e $ of the forest~$ T $, the nodes are naturally partitioned into two connected subsets upon removal of~$ e $.
Let these two subsets be denoted by $ V_e $ and $ V \setminus V_e $.
Emek~\cite{Emek11} and Koutis et al.~\cite{KoutisLP16}, observed that by rearranging the sum in~\eqref{eq:definition of total stretch}, one obtains the following cut-based characterization of the stretch:
\begin{equation*}
\str_T (G) = \sum_{e \in E(T)} \capacity_G (V_e, V \setminus V_e) \, .
\end{equation*}
Observe that the cut $ (V_e, V \setminus V_e) $ is approximately preserved in $ H $ and thus $ \capacity_G (V_e, V \setminus V_e) \leq \tfrac{1}{1 - \epsilon} \capacity_H (V_e, V \setminus V_e) \leq (1 + 2 \epsilon) \capacity_H (V_e, V \setminus V_e) $.
The stretch of $ G $ with respect to~$ T $ can now be bounded by
\begin{align*}
\str_T (G) &= \sum_{e \in E(T)} \capacity_G (V_e, V \setminus V_e) \\
&\leq (1 + 2 \epsilon) \sum_{e \in E(T)} \capacity_H (V_e, V \setminus V_e) \\
&= (1 + 2 \epsilon) \str_T (H) \, .
\end{align*}
Thus, computing the low-stretch forest on the sparsifier $ H $ instead of the original graph $ G $ only increases the total stretch by a constant factor if the number of multi-edges in $ H $ is proportional to the number of edges in $ G $.

\paragraph{Dynamic Cut Sparsifiers.}
The fully dynamic algorithm of Abraham et al.~\cite{AbrahamDKKP16} maintains, with high probability, a $ (1 \pm \epsilon) $-cut sparsifier~$ H = (V, F) $ of a simple graph~$ G = (V, E) $ such that $ | \bar{F} | = \tilde O (n / \epsilon^2) $ with update time $ \poly (\log n, \epsilon) $.
For each node $ v $, the degree in $ H $ exceeds the degree in $ G $ by at most a factor of $ (1 \pm \epsilon) $ because the cut $ (\{ v \}, V \setminus \{ v \}) $ is approximately preserved in $ H $.
We can thus bound the number of multi-edges in $ H $ (i.e., the sum of all edge multiplicities) by $ | F | = O ((1 + \epsilon) |E|) $.
The algorithm maintains a hierarchy of the edges with $ O (\log n) $ layers, where edges at level~$ i $ have multiplicity $ 4^i $ and each edge is at level~$ i $ with probability at most $ 1/4^i $.
After an update to the input graph, the dynamic algorithm adds or removes at most $ \poly(\log n, \epsilon) $ edges in each level.
%Using rough estimates, we can therefore additionally bound the expected number of multi-edges in $ H $ (i.e., the sum of all edge multiplicities) by $ | F | = \tilde O (|E|) $.
Thus, we can bound the amount of change to $ H $ per update to $ G $ as follows: for every update to $ G $, the expected sum of the changes to the edge multiplicities of~$ H $ is at most $ \poly(\log n, \epsilon) $.

\paragraph{Putting Everything Together~(Proof of Corollary~\ref{corApp:fully dynamic low stretch tree}).}
We now first use the fully dynamic algorithm of Abraham et al. to maintain a cut sparsifier $ H = (V, F) $ of the input graph $ G = (V, E) $ (with $ \epsilon = 1/2 $) and second run our fully dynamic low-stretch tree algorithm on top of $ H $.
Here, $ G $ is a simple graph with $ m = | E | $ edges and $ H $ is a multigraph with $ |F| = O (m) $ and $ |\bar{F}| = \tilde O (n) $.
The spanning forest $ T $ maintained in this way gives expected total stretch at most $ |F|^{1 + o(1)} $ with respect to~$ H $.
As argued above, this implies an expected total stretch of at most $ O ((1 + 2 \epsilon) |F|^{1 + o(1)}) = O (m^{1 + o(1)}) $ with respect to~$ G $, i.e., an average stretch of $ m^{o(1)} = n^{o(1)} $.
Each update to the input graph results in $ \polylog{n} $ changes to the sparsifier in expectation, which are then processed as ``induced'' updates by our dynamic low-stretch tree algorithm.
Thus, we overall arrive at an expected amortized update time of $ \tilde O (|\bar{F}|^{1/2 + o(1)}) = O (n^{1/2 + o(1)}) $.

\section{Dynamic Low-Diameter Decomposition}\label{sec:LDD}

In this section we develop our dynamic algorithm for maintaining a low-diameter decomposition following three steps.
First, we review the static algorithm for constructing a low-diameter decomposition using the clustering due to Miller et al.~\cite{MillerPX13}.
Second, we design a decremental algorithm by extending the Even-Shiloach algorithm~\cite{EvenS81} in a suitable way.
Third, we lift our decremental algorithm to a fully dynamic one by using a ``lazy'' approach for handling insertions.

\subsection{Static Low-Diameter Decomposition}~\label{sec:staticLDD}

In the following, we review the static algorithm for constructing a low-diameter decomposition clustering due to Miller et al.~\cite{MillerPX13}. Let $G=(V,E)$ be an unweighted, undirected multigraph $G$, and let $\beta \in (0,1)$ be some parameter. Our goal is to assign each node $u$ to exactly one node $c(u)$ from $V$. Let $C(u) \subset V$ denote the set of nodes assigned to node $u$, i.e., $C(u) \coloneqq \set{v \in V}{ c(v) = u}$. For each node $u$, we initially set $C(u)= \varnothing $ and pick independently a \emph{shift} value $\delta_u$ from $\expdis(\beta)$. Next, we assign each node $u$ to a node $v$, i.e., set $c(u) = v$ and add $u$ to $C(v)$, if $v$ is the node that minimizes the \emph{shifted distance} $m_v(u)\coloneqq \dist(u,v) - \delta_v$. Finally, we output all clusters that are non-empty. The above procedure is summarized in Algorithm~\ref{alg:partition}.

\begin{algorithm2e}[htb!]
\caption{Partitioning Using Exponentially Shifted Shortest Paths}
\label{alg:partition}

\BlankLine
\KwData{Multigraph $G=(V,E)$, parameter $\beta \in (0,1)$}
\KwResult{Decomposition of $G$}
\BlankLine

For each $u \in V$, set $C(u)=\varnothing$ and pick $\delta_u$ independently from $\expdis(\beta)$ \;
Assign each $u \in V$ to $c(u) \gets \arg \min_{v \in V} \{\dist(u,v) - \delta_v\}$\;
For each $v \in V$, set $C(u) \gets \set{v \in V}{c(v) = u}$\;
Return the clustering $\set{C(u)}{C(u) \neq \varnothing}$
\end{algorithm2e}

The following theorem gives bounds on the strong diameter and the number of inter-cluster edges output by the above partitioning.

\begin{theorem}[\cite{MillerPX13}, Theorem~1.2] \label{thm: Miller_LDD} Given an undirected, unweighted multigraph graph $G=(V,E)$ and a parameter $\beta \in (0,1)$, Algorithm~\ref{alg:partition} produces a $(\beta, 2 d \cdot(\log n/\beta))$-decomposition such that the guarantee on the number of inter-cluster edges holds in expectation, while the diameter bound holds with probability at least $1-1/n^d$, for any $d \geq 1$.
\end{theorem}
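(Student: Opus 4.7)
The theorem has two independent components: a high-probability bound on the strong diameter of every cluster, and a bound in expectation on the number of inter-cluster edges. I would prove them separately, each using a different property of the exponential distribution.

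\emph{Strong diameter.} First I would show that $\delta_{\max}:=\max_{v\in V}\delta_v\le d\log n/\beta$ except with probability $1/n^d$: since each $\delta_v\sim\expdis(\beta)$ satisfies $\Pr[\delta_v>t]=e^{-\beta t}$, a union bound over the $n$ vertices gives the tail bound, up to adjusting constants. Second, I would argue that every cluster has strong radius at most $\delta_{\max}$ around its center $c$, which yields the claimed diameter bound (up to a factor of $2$). For any $u$ with $c(u)=c$, comparing the choice $c$ against the choice ``$u$ itself as its own center'' gives $\dist(u,c)-\delta_c\le -\delta_u\le 0$, so $\dist(u,c)\le \delta_c\le\delta_{\max}$. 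To upgrade this to a \emph{strong} radius bound, I would show that every vertex $w$ on a shortest $u$-to-$c$ path is itself assigned to $c$, so the $u$-to-$c$ path lies entirely inside the cluster: if instead $c(w)=c'\ne c$, then $\dist(w,c')-\delta_{c'}<\dist(w,c)-\delta_c$, and combining this with the shortest-path identity $\dist(u,c)=\dist(u,w)+\dist(w,c)$ and the triangle inequality $\dist(u,c')\le\dist(u,w)+\dist(w,c')$ yields $\dist(u,c')-\delta_{c'}<\dist(u,c)-\delta_c$, contradicting $c(u)=c$. (Ties have probability zero under continuous exponentials, so strict optimality is legitimate.)

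\emph{Expected inter-cluster edges.} By linearity of expectation it suffices to prove $\Pr[c(u)\ne c(v)]\le\beta$ for each edge $(u,v)\in E$. Writing $\tau_w(x):=\dist(x,w)-\delta_w$ so that $c(x)=\arg\min_w\tau_w(x)$, the triangle inequality gives $|\tau_w(u)-\tau_w(v)|\le 1$ for every center $w$. Hence the two minimizers at $u$ and $v$ can disagree only if some \emph{second} center has its $\tau$-value within a constant-width window of the overall minimum. The technical crux of the proof -- and what I expect to be the main obstacle, since the constant $\beta$ (rather than merely $O(\beta)$) is demanded by the theorem -- is to sharpen this window argument so that the failure probability is exactly $\beta$. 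Following Miller et al., I would shift viewpoints to an ``imaginary midpoint'' of the edge, at which every center's $\tau$-value differs from its value at $u$ (resp.\ $v$) by at most $1/2$; a flip of the winner between $u$ and $v$ is then possible only if some other center's $\tau$-value at the midpoint lies within an interval of length $1$ above the minimum. The memoryless property of the exponential distribution completes the argument: conditioned on the identity and the value of the minimizer $w_1$, the residual $\tau$-values of the remaining centers are exponentials of rate $\beta$ shifted above that threshold, so the probability that any of them falls within a window of length $1$ above the minimum is $1-e^{-\beta}\le\beta$. Summing over all $m$ edges yields the stated bound of $\beta m$ in expectation.
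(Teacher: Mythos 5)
The paper does not reprove this theorem — it is imported verbatim from Miller et al.\ [MillerPX13, Theorem 1.2], with the supporting ingredients restated as Lemmas~\ref{lem:inter cluster edge implies mid-point condition} and~\ref{lem:probability of mid-point condition}; the only added remark is that linearity of expectation extends the inter-cluster bound from simple graphs to multigraphs. Your reconstruction of Miller et al.'s argument is correct on the diameter side: bounding $\delta_{\max}$, the exchange argument giving $\dist(u,c(u)) \le \delta_{c(u)}$, and the shortest-path argument showing every vertex on the $u$-to-$c(u)$ geodesic also chooses $c(u)$ (which is exactly what upgrades the bound to a \emph{strong} radius) are all right, modulo the usual adjustment of constants in the union bound.

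On the inter-cluster side, the overall plan (midpoint, then order statistics of shifted distances) is the right one, but the memorylessness step as you state it does not work. Conditioning on the \emph{value} attained by the minimizer truncates the other shifts $\delta_v$ to a bounded interval from above, which is a conditioning on the head of the exponential and is therefore \emph{not} memoryless; and ``the probability that any of them falls within a window of length $1$'' would, read literally, be a union bound and give $n\beta$ rather than $\beta$. The argument that actually delivers the clean $c\beta$ bound (Miller et al.'s Lemma 4.4, restated as Lemma~\ref{lem:probability of mid-point condition}) conditions instead on \emph{which} center attains the minimum shifted distance to the midpoint together with the shifted distances of all \emph{other} centers; only the minimizing shift then has a free exponential tail beyond the second-smallest shifted distance, and memorylessness gives $\Pr[\text{gap} \le c] = 1 - e^{-\beta c} \le c\beta$ with no union bound. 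Since you explicitly say you would follow Miller et al.\ for this ``technical crux,'' your plan is recoverable, but as written the conditioning is backwards and would not yield the stated probability.
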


Here, the the diameter bound holds when the maximum shift value of any node is at most $ d \log n/\beta $, which happens with probability $1-1/n^d$.
We remark that in the work of Miller et al., the above guarantees are stated only for undirected, unweighted \emph{simple} graphs. However, by Lemma~4.4 in~\cite{MillerPX13}, we get that each edge $e \in E$~(regardless of whether $E$ allows parallel edges) is an inter-cluster edges with probability at most $\beta$. By linearity of expectation, it follows that the (expected) number of inter-cluster edges in the resulting decomposition is at most $\beta m$, thus showing that the algorithm naturally extends to multigraphs.

For technical reasons, it is not sufficient in the analysis of our decremental LDD algorithm to apply Theorem~\ref{thm: Miller_LDD} in a black-box manner.
We thus review the crucial properties of the clustering algorithm, which we will exploit for bounding the number of changes made to inter-cluster edges in the decremental algorithm. Following~\cite{MillerPX13}, for each edge $e=(u,v) \in E$, let $w$ be the \emph{mid-point} of $e$, i.e., the imaginary node in the ``middle'' of edge $ e $ that is at distance $ \tfrac{1}{2} $ to both $ u $ and $ v $. Lemma~4.3 in \cite{MillerPX13} states that if $u$ and $v$ belong to two different clusters, i.e., $c(u) \neq c(v)$, then the shifted-distances $m_{c(u)}(w)$ and $m_{c(v)}(w)$ are within $1$ of the minimum shifted distance to $w$.

\begin{lemma}[\cite{MillerPX13}]\label{lem:inter cluster edge implies mid-point condition}
Let $e=(u,v)$ be an edge with mid-point $w$ such that $c(u) \neq c(v)$ in Algorithm~\ref{alg:partition}. Then $m_{c(u)}(w)$ and $m_{c(v)}(w)$ are within $1$ of the minimum shifted distance to $w$. 
\end{lemma}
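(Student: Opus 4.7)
The plan is to reduce everything to an elementary inequality between shifted distances, using that the midpoint $w$ of an edge sits $\tfrac{1}{2}$ from its endpoints and that adjacent vertices $u,v$ have distances to any common target differing by at most $1$.

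First I would observe that, treating $w$ as an imaginary subdivision vertex on the edge $e = (u,v)$, the distance from $w$ to any $x \in V$ is
\begin{equation*}
\dist(w, x) = \min(\dist(u, x), \dist(v, x)) + \tfrac{1}{2},
\end{equation*}
since any shortest path from $w$ to $x$ must first traverse one half of $e$ to reach $u$ or $v$. Subtracting $\delta_x$ from both sides yields the fundamental identity
\begin{equation*}
m_x(w) = \min\bigl(m_x(u), m_x(v)\bigr) + \tfrac{1}{2} \qquad \text{for every } x \in V.
\end{equation*}

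Next, let $z \in V$ be a vertex attaining the minimum shifted distance to $w$, i.e.\ $m_z(w) = \min_{x \in V} m_x(w)$. I would bound $m_{c(u)}(w)$ by chaining two inequalities: first, by the identity above, $m_{c(u)}(w) \leq m_{c(u)}(u) + \tfrac{1}{2}$; second, by the defining property of $c(u)$ as a minimizer of $m_{\cdot}(u)$, we have $m_{c(u)}(u) \leq m_z(u)$. Hence
\begin{equation*}
m_{c(u)}(w) \;\leq\; m_z(u) + \tfrac{1}{2}.
\end{equation*}

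The only real content is to compare $m_z(u) + \tfrac{1}{2}$ with $m_z(w) = \min(m_z(u), m_z(v)) + \tfrac{1}{2}$. Since $u$ and $v$ are joined by an edge, the triangle inequality gives $|\dist(u, z) - \dist(v, z)| \leq 1$ and therefore $|m_z(u) - m_z(v)| \leq 1$, so in particular $m_z(u) \leq m_z(v) + 1$. Combined with the trivial $m_z(u) \leq m_z(u)$, this yields $m_z(u) \leq \min(m_z(u), m_z(v)) + 1$, and plugging back in:
\begin{equation*}
m_{c(u)}(w) \;\leq\; m_z(u) + \tfrac{1}{2} \;\leq\; \min\bigl(m_z(u), m_z(v)\bigr) + \tfrac{3}{2} \;=\; m_z(w) + 1.
\end{equation*}
The bound $m_{c(v)}(w) \leq m_z(w) + 1$ follows by exchanging the roles of $u$ and $v$ in the argument above. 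Since $m_z(w)$ is by definition the minimum shifted distance to $w$, this is exactly the claim. The step I expected to require care, namely relating $m_z(u)$ and $m_z(v)$, reduces to a one-line application of the triangle inequality, so no serious obstacle arises.
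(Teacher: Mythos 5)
Your proof is correct, and since the paper cites this lemma from Miller, Peng, and Xu without reproducing their argument, there is no in-paper proof to compare against; your argument is essentially the standard one underlying Lemma 4.3 of that reference. The chain of steps — the exact identity $m_x(w) = \min(m_x(u), m_x(v)) + \tfrac{1}{2}$ for the midpoint $w$, the minimizing property $m_{c(u)}(u) \leq m_z(u)$, and the edge-triangle bound $|m_z(u) - m_z(v)| \leq 1$ — is sound and yields $m_{c(u)}(w) \leq m_z(w) + 1$, with the lower bound $m_{c(u)}(w) \geq m_z(w)$ being trivial. The cited proof phrases the midpoint step as two triangle inequalities $\dist(w,x) \leq \dist(u,x) + \tfrac{1}{2}$ and $\dist(u,x) \leq \dist(w,x) + \tfrac{1}{2}$ rather than the exact identity you use, but that is a cosmetic difference. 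One small observation worth flagging: your argument never invokes the hypothesis $c(u) \neq c(v)$; the conclusion in fact holds for every edge, and the hypothesis appears in the lemma statement only because it is the case of interest when counting inter-cluster edges.
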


Lemma 4.4 of~\cite{MillerPX13} shows that the probability that the smallest and the second smallest shifted distances to $w$ are within $c$ of each other is at most $c \cdot \beta$.

\begin{lemma}[\cite{MillerPX13}]\label{lem:probability of mid-point condition}
Let $ e = (u,v) $ be an edge with mid-point $ w $.
Then \[ \Pr [| m_{c(u)}(w) - m_{c(v)}(w) | \leq c] \leq c \cdot \beta. \]
\end{lemma}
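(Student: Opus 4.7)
The plan is to reduce the claim to the memoryless property of the exponential distribution. Let me substitute $X_v \coloneqq \delta_v - \dist(w,v) = -m_v(w)$ for every $v \in V$. Minimizing the shifted distance $m_v(w)$ over $v$ is equivalent to maximizing $X_v$, and the smallest two shifted distances correspond to the two largest $X_v$'s. Thus, writing $X_{(1)} \geq X_{(2)}$ for the largest and second-largest values of $\{X_v\}_{v \in V}$, the event $|m_{c(u)}(w) - m_{c(v)}(w)| \leq c$ (together with $c(u) \neq c(v)$) is contained in the event $\{X_{(1)} - X_{(2)} \leq c\}$, so it suffices to bound the probability of the latter.

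Next I would exploit the fact that each $X_v$ has density $f_{X_v}(x) = \beta e^{-\beta(x+\dist(w,v))}$ on $[-\dist(w,v),\infty)$, which gives the crucial memoryless property $\Pr[X_v > a + t \mid X_v > a] = e^{-\beta t}$ for all $a \geq -\dist(w,v)$ and $t \geq 0$. The core step is a conditioning argument: fix a candidate maximizer $v^* \in V$ and condition on the values $\{X_v\}_{v \neq v^*}$, which determines $M \coloneqq \max_{v \neq v^*} X_v$. The event that $v^*$ is the unique maximizer is $\{X_{v^*} > M\}$; on this event, $X_{(2)} = M$, so $X_{(1)} - X_{(2)} = X_{v^*} - M$. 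By the memoryless property applied at the threshold $M$,
\[
\Pr\bigl[X_{v^*} - M \leq c \,\bigm|\, X_{v^*} > M,\ \{X_v\}_{v \neq v^*}\bigr] = 1 - e^{-c\beta} \leq c\beta.
\]
Multiplying by $\Pr[X_{v^*} > M \mid \{X_v\}_{v \neq v^*}]$ and taking expectations yields
\[
\Pr\bigl[X_{(1)} - X_{(2)} \leq c \text{ and } v^* = \arg\max_v X_v\bigr] \leq c\beta \cdot \Pr[v^* = \arg\max_v X_v].
\]
Summing over all $v^* \in V$ (the maximizer is unique almost surely, since the $X_v$'s have continuous distributions) gives $\Pr[X_{(1)} - X_{(2)} \leq c] \leq c\beta$, which completes the proof.

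The main subtlety I expect is the bookkeeping of the conditioning in the memoryless step: one has to be careful that conditioning on $\{X_v\}_{v \neq v^*}$ leaves the distribution of $X_{v^*}$ unchanged (by independence of the $\delta_v$'s), and that the memoryless property continues to hold after the additional conditioning on $\{X_{v^*} > M\}$ (it does, since this is precisely the tail condition to which memorylessness applies). Everything else is a routine application of the law of total probability and the fact that $1 - e^{-c\beta} \leq c\beta$.
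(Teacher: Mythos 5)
The paper itself does not prove this lemma; it is imported from Miller et al.~\cite{MillerPX13} (their Lemma~4.4), so there is no in-paper argument to compare against. Your reconstruction is the standard one---condition out all shifts but one, apply memorylessness of the exponential at the threshold set by the conditional maximum, and sum over the choice of maximizer---and the core estimate $\Pr[X_{(1)} - X_{(2)} \leq c] \leq c\beta$ is proved correctly.

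Two small points of rigor. The containment you assert at the outset, $\{|m_{c(u)}(w) - m_{c(v)}(w)| \leq c\} \cap \{c(u) \neq c(v)\} \subseteq \{X_{(1)} - X_{(2)} \leq c\}$, is not justified as written and is in fact false for small $c$: $m_{c(u)}(w)$ and $m_{c(v)}(w)$ need not be the two smallest shifted distances to $w$, so a small gap between them does not directly control $X_{(1)} - X_{(2)}$. The formal lemma statement in the paper is a mild abuse of notation; the prose immediately preceding it, and the way it is later combined with Lemma~\ref{lem:inter cluster edge implies mid-point condition}, make clear that the intended content is precisely that the smallest and second-smallest shifted distances to $w$ are within $c$ of each other with probability at most $c\beta$---exactly your $\{X_{(1)} - X_{(2)} \leq c\}$ event. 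So the body of your proof targets the right claim; the introductory containment sentence should simply be deleted rather than repaired. Second, the memoryless step
\[
\Pr\bigl[X_{v^*} - M \leq c \mid X_{v^*} > M,\ \{X_v\}_{v \neq v^*}\bigr] = 1 - e^{-c\beta}
\]
holds as an equality only when $M \geq -\dist(w,v^*)$, i.e.\ when $M$ lies in the support of $X_{v^*}$; if $M$ falls below that point the conditional probability is strictly smaller (possibly zero). The bound $\leq 1 - e^{-c\beta} \leq c\beta$ is all you need and still holds, so the conclusion is unaffected, but the step should be stated as an inequality.
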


Setting $c=1$, this gives the desired bound of $ \beta $ for the probability of an edge being an inter-cluster edge in Theorem~\ref{thm: Miller_LDD}.

\paragraph*{Implementation.} 
Na\"ively, we could implement Algorithm~\ref{alg:partition} by computing $c(u)$ for each node $u \in V$ in $\tilde{O}(m)$, thus leading to a $\tilde{O}(mn)$ time algorithm. In the following, using standard techniques, we show that this running time can be reduced to $\tilde{O}(m)$.

To this end, let $\delta_{\max} \coloneqq \max_{u \in V} \delta_u$. We begin with the following augmentation of the input graph $G$: add a new source $s$ to $G$ and edges $(s,u)$ of weight $(\delta_{\max} - \delta_u) \geq 0$, for every $u \in V$. Let $\hat{G}=(V \cup \{s\},\hat{E},\hat{w})$ denote the resulting graph. We claim that the sub-trees below the source $s$ in the shortest path tree of $\hat{G}$ rooted at $s$ give us the clustering output by Algorithm~\ref{alg:partition} for the graph $G$. To see this, suppose that we instead added edges of weight $-\delta_u$ to $s$, for every $u \in V$. Then it is easy to check that for every $u \in V$, distance between $s$ and $u$ is exactly $ \min_{v \in V} (\dist(u,v) - \delta_v) = \min_{v \in V} m_v(u)$. Thus the node $v$ attaining the minimum is exactly the root of the sub-tree below the source $s$ that contains $v$. Now, adding $\delta_{\max}$ to all edges incident to the source increases all distances to $s$ by $\delta_{\max}$, and thus does not affect the shortest path tree.

Now, note that we could use Dijkstra's algorithm to construct the shortest path tree of $\hat{G}$, and modify it appropriately to output the clustering. However, for reasons that will become clear in the next section, we need to modify Dijkstra's algorithm in a specific way. This modification can be viewed as mimicking a BFS computation on a graph with special \emph{integral} edge lengths. 

We start by observing that due to the random shift values, the weight of the edges incident to the source $s$ in $\hat{G}$ are not integers. Since we only want to deal with \emph{integral} weights, we round down all the $\delta_u$ values to $\floor{\delta_u}$ and modify the weights of these edges using the new rounded values. Let $G'=(V \cup \{s\}, \hat{E}, w')$ denote the modified graph. Note that due to the rounding, we need to introduce some tie-breaking scheme in $G'$, such that  every clustering of $G'$ matches exactly the same clustering in $\hat{G}$, and vice versa. Naturally, the fractional parts of the rounded values, i.e., $\delta_u - \floor{\delta_u}$, define an ordering on the nodes (if they are sorted in ascending order), and this ordering can be in turn used to break ties whenever two rounded distances are equal in $G'$. In their PRAM implementation, Miller et al.~\cite{MillerPX13} observed that this ordering can emulated by a random permutation. This is due to the fact that the shifts are generated independently, and that the exponential distribution is memoryless. 

The main motivation for using random permutations in previous works was to avoid errors that might arise from the machine precision. In our work, breaking ties according to a random permutation on the nodes is one of their algorithmic ingredients that allows us to obtain an efficient dynamic variant of the clustering. Below, we give specific implementation details about how our clustering interacts with random priorities in the static setting. 

Given the graph $G'$ and a distinguished source node $s \in V'$,  Dijkstra's classical algorithm maintains an upper-bound on the shortest-path distance between each node $u \in V$ and $s$, denoted by $\ell(u)$. Initially, it sets $\ell(u) = \infty$, for each $u \in V$ and $\ell(s) = 0$. It also marks every node unvisited. Moreover, for each node $u \in V$, the algorithm also maintains a pointer $p(u)$ (initially set to $\nil$), which denotes the parent of $u$ in the current tree rooted at $s$. Using these pointers, we can maintain the cluster pointer $c(u)$, for each $u \in V$. This follows from the observation that in order to compute the cluster of $u$, it suffices to know the cluster of its parent. Formally we have the following rule.

\begin{observation} \label{obs: clusterUpdate}
Let $p(\cdot)$ be the parent pointers. Then for each $u \in V$, we can determine the cluster pointer $c(u)$ using the following rule:

\[
	c(u) =  \begin{cases} 
    u & \text{\emph{if} } p(u)=s \\
    c(p(u))      & \text{\emph{otherwise}}.
  \end{cases}
\]
\end{observation}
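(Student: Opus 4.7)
The plan is to prove this observation directly by invoking the identification of clusters with subtrees of the source established earlier in the paragraph on implementation. Recall that we argued that in the shortest-path tree of the augmented graph $\hat{G}$ (equivalently $G'$ with the rounded weights and tie-breaking), a vertex $v \in V$ is a cluster center if and only if it is a child of the source $s$, and the cluster $C(v)$ consists of all vertices lying in the subtree rooted at $v$ below $s$. Thus the map $u \mapsto c(u)$ returns, for every non-source vertex $u$, the unique child of $s$ that is an ancestor of $u$ in the shortest-path tree. The parent pointers $p(\cdot)$ encode precisely this tree, so the claim is essentially a statement about how to read off the root of a subtree by walking up parent pointers.

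I would formalize this by induction on the depth of $u$ in the shortest-path tree rooted at $s$. For the base case, take $u$ with $p(u) = s$, meaning $u$ is a child of $s$ in the tree. Then $u$ is itself the root of the subtree of $s$ containing $u$, so by the correspondence above, $c(u) = u$, matching the first case of the rule. For the inductive step, consider $u$ with $p(u) \neq s$. Then $u$ lies strictly below $p(u)$ in the shortest-path tree, and in particular $u$ and $p(u)$ lie in the same subtree hanging off of $s$. Hence they share the same cluster center, giving $c(u) = c(p(u))$, which is exactly the second case. The induction hypothesis is invoked to assert that $c(p(u))$ is correctly determined by the same rule applied recursively.

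The observation is essentially a restatement of the subtree correspondence in terms of parent pointers, so there is no substantive obstacle: the only thing to be careful about is that the subtree correspondence from the earlier discussion is stated for $\hat{G}$ with the original shift values, while the parent pointers in the implementation actually come from the modified graph $G'$ with rounded weights and the tie-breaking permutation. I would briefly note that by construction of the rounding and tie-breaking, the shortest-path tree of $G'$ induces exactly the same partition into subtrees below $s$ as $\hat{G}$, so the correspondence transfers verbatim and the recursive rule correctly computes $c(u)$ from the parent pointers maintained by the (modified) Dijkstra computation.
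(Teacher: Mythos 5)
Your proposal is correct and follows the same line of reasoning the paper relies on: the observation is an immediate consequence of the identification, established just before, between clusters and the subtrees hanging off the source $s$ in the (rounded, tie-broken) shortest-path tree. The paper does not give a formal proof beyond pointing to that identification; your explicit induction on tree depth, together with the remark that the recursion is well-founded because $p(u)$ is processed before $u$ in Dijkstra, simply spells out the same argument, and your note about transferring the correspondence from $\hat{G}$ to $G'$ matches the paper's own claim that the rounded graph with tie-breaking reproduces the clustering of $\hat{G}$ exactly.
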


Now, at each iteration, Dijkstra's algorithm selects an unvisited node $u$ with the smallest $\ell(u)$, marks it as visited, and \emph{relaxes} all its edges. In the standard relaxation, for each edge $(u,v) \in E'$ the algorithm sets $\ell(v) \gets \min\{\ell(v), \ell(u) + w'(u,v)\}$ and updates $p(v)$ accordingly. Here, we present a relaxation according to the following tie-breaking scheme. Let $\pi$ be a random permutation on $V$. For $u, v \in V$, we write $\pi(u) < \pi(v)$ if $u$ appears before $v$ in the permutation $\pi$.
Now, when relaxing an edge $(u,v) \in E'$, we set $u$ to be the parent of $v$, i.e., $p(v) = u$, and $\ell(v) = \ell(u) + w'(u,v)$, if the following holds
\begin{align} \label{eq: breakingTies}
\begin{split}
\ell(v) & > \ell(u) + w'(u,v), \text{ or} \\
\ell(v) & = \ell(u) + w'(u,v) \text{ and } \pi(c(v)) > \pi(c(u)).
\end{split}
\end{align}

After each edge relaxation, we also update the cluster pointers using  Observation~\ref{obs: clusterUpdate}. We continue the algorithm until every node is visited. As usual, we maintain the unvisited nodes in a heap $Q$, keyed by the their estimates $\ell(v)$. This procedure is summarized in Algorithm~\ref{alg:ModDijkstra}.

\begin{algorithm2e}[htb!]
\caption{Modified Dijkstra}
\label{alg:ModDijkstra}
\KwData{Graph $G'=(V \cup \{s\},E',w')$}
\KwResult{Decomposition of $G$}
\BlankLine

Generate random permutation $\pi$ on $V$\;
%\BlankLine
\ForEach{$u \in V$}{
Set $\ell(u) = \infty$\;
Set $p(u) = \nil$\;
Set $c(u) = \nil$\;
}

Set $\ell(s) = 0$\;
\BlankLine

Add every $u \in V \cup \{s\}$ into heap $Q$ with key $\ell(u)$\;
\While{heap $ Q $ is not empty}{
	    Take node $ u $ with minimum key $ \lev (u) $ from heap $ Q $ and remove it from $ Q $ \;
		\ForEach{neighbor $ v $ of $ u $}{
		\Relax{$u$, $v$, $w'$, $\fraction$} \;
		\eIf{$ p(v) = s $}{
			Set $ c (v) = v  $\;
		}{
			Set $ c (v) = c( p(v)) $\;
		}
	}
}

\Procedure{\Relax{$u$, $v$, $w'$, $\fraction$}}{
	\uIf{$\ell(v) > \ell(u) + w'(u,v)$}{
		Set $\ell(v) = \ell(u) + w'(u,v)$\;
		Set $p(v) = u$\;
		}
	\uElseIf{$\ell(v) = \ell(u) + w'(u,v)$ \emph{and} $\pi(c(v)) > \pi(c(u))$}{
     Set $p(v) = u$ \;}
}
\end{algorithm2e}

Correctness of Algorithm~\ref{alg:ModDijkstra} follows by our above discussion. Moreover, the running time of the algorithm is asymptotically bounded by the running time of Dijkstra's classical algorithm and the time to generate a random permutation. It is well known that the former runs in $\tilde{O}(m)$ time and the latter can be generated in $O(n)$ time (see e.g., Knuth Shuffle~\cite{BermanK76}), thus giving us a total $\tilde{O}(m)$ time.

\subsection{Decremental Low-Diameter Decomposition}\label{sec:ES tree}\label{sec:decremental LDD}

We now show how to maintain a lower-diameter decomposition under deletion of edges. Recall that in the previous section we observed that computing a lower-diameter decomposition of a undirected, unweighted graph can be reduced to the single-source shortest path problem in some modified graph. In the same vein, we observe that maintaining a low-diameter decomposition under edge deletions amounts to maintaining a bounded-depth single-source shortest path tree of some modified graph under edge deletions. 

Even and Shiloach~\cite{EvenS81} devised a data-structure for maintaining a bounded-depth SSSP-tree under edge deletions, which we refer to as \emph{ES-tree}. The ES-tree initially worked only for undirected, unweighted graphs. However, later works~\cite{HenzingerK95,King99} observed that it can be extended even to directed, weighted graphs with positive integer edges weights. The mere usage of the ES-tree as a sub-routine will not suffice for our purposes, due to the constraints that our clustering imposes. In the following we show how to augment and modify an ES-tree that maintains a valid clustering, without degrading its running time guarantee. 

Let $G=(V,E)$ be an undirected, unweighted graph for which we want to maintain a decremental $(\beta, \log n/ \beta)$ decomposition, for any parameter $\beta \in (0,1)$. Further, let $G'=(V\cup \{s\},E',w')$ be the undirected graph with integral edge weights, as defined in Section~\ref{sec:staticLDD}.
Let $\pi$ be a random permutation on $V$. By discussion in Section~\ref{sec:staticLDD}, in order to maintain a low-diameter decomposition of $G$ it suffices to maintain a clustering of $G'$ with $\pi$ used for tie-breaking.

We describe an ES-tree that efficiently maintains a clustering of $G'$ for a given root node $s$ and a given distance parameter $\Delta$. Here we set $\Delta = O(\log n / \beta)$, as by Theorem~\ref{thm: Miller_LDD}, the maximum distance that we run our algorithm to is bounded by $O(\log n / \beta)$. Our data-structure handles arbitrary edge deletions, and maintains the following information. First, for each node $u \in V \cup \{s\}$, we maintain a label $\lev(u)$, referred to as the \emph{level} of $u$. This level of $u$ represents the shortest path between the root~$s$ and~$u$, i.e., $\lev(u) = \dist(s,u)$. Next, for each node $u \in V$, we maintain pointers $p(u)$ and $c(u)$, which represent the parent of $u$ in the tree and the node that $u$ is assigned to, respectively. Finally, we also maintain the set of \emph{potential} parents $P(u)$, for each $u \in V$, which is the set of all neighbors of $u$ that are in the same level with the parent of $u$, and share the same clustering with $u$, i.e., a neighbor~$v$ of~$u$ belongs to $P(u)$ if $v$ minimizes $(\ell(v) + w'(u,v), \pi(c(v)))$ lexicographically, and $c(v) = c(u)$. Edge deletions in $G'$ can possibly affect the above information for several nodes. Our algorithm adjusts these information on the nodes so as to make them valid for the modified graph.

\paragraph*{Algorithm Description and Implementation.}
We give an overview and describe the implementation of Algorithm~\ref{alg:ES_tree}. The data-structures $\ell(\cdot)$, $p(\cdot)$ and $c(\cdot)$ are initialized using Algorithm~\ref{alg:ModDijkstra} in Section~\ref{sec:staticLDD}. Note that for each $u \in V$, $P(u)$ can be computed by simply considering all neighbors of~$u$ in turn, and adding a neighbor $v$ to $P(u)$ if $v$ is a potential parent. The algorithm also maintains a heap $Q$ whose intended use is to store nodes whose levels or clustering might need to be updated. (see procedure \Initialize{}). 

In our decremental algorithm, each node tries to maintain its level $\ell(u)$, which corresponds to its current distance to the root $s$, together with its cluster pointer $c(u)$ in the current graph. Concretely, we maintain the following invariant for each node $u \in V$:
\begin{align}
\ell(u)&=\min\set{\ell(v)+w'(u, v)}{v \text{ is a neighbor of } u} \label{eq:ES tree level condition}
\end{align}
where ties among neighbors are broken according to~(\ref{eq: breakingTies}). This invariant allows to compute the cluster pointer $c(u)$ using Observation~\ref{obs: clusterUpdate}. Deleting an edge incident to $u$ might lead to a change in the values of $\ell(u)$ and $c(u)$. If this occurs, all neighbors of $u$ are notified by $u$ about this change, since their levels and cluster points might also change. It is well-known that the standard ES-tree can efficiently deal with changes involving the levels $\ell(\cdot)$. However, in our setting, it might be the case that an edge deletion forces a node $u$ to change its cluster while the level $\ell(u)$ still remains the same under this deletion. This is the point where our algorithm differs from the standard ES-tree, and we next show that (1) such changes can be handled efficiently, and (2) the number of cluster changes per node, within the same level, is small in expectation.

Let us consider the deletion of an edge $(u,v)$ (see procedure $\Delete()$); assume without loss of generality that $\ell(v) \leq \ell(u)$. Now note that an edge deletion might lead to a cluster change only if $v \in P(u)$. If this is the case, the algorithm first removes $v$ from the set $P(u)$. If $P(u)$ is still non-empty, the clustering remains unaffected. However, if $P(u)$ is empty, the clustering of $u$ will change, and the algorithm inserts $u$ into the heap $Q$ with key $\ell(u)$. Observe that a change in clustering of $u$ might potentially lead to cluster changes for children of $u$, given that $u$ was their only potential parent. In this way, we observe that deleting $(u,v)$ might force changes in the clustering for many descendants of $v$. The algorithm handles such changes using procedure $\UpdateLevels()$, which we describe below.

Procedure $\UpdateLevels()$ considers the nodes in $Q$ in the order of their current level. At each iteration, it takes the node $y$ with the smallest level $\ell(y)$ from $Q$. The node $y$ computes the set of potential parents $P(y)$, by examining each neighbor of $y$ in turn, and then adding to $P(y)$ all neighbors $z$ that minimize $(\ell(z) + w'(y,z), \pi(c(z)))$ lexicographically. Next, $y$ sets $p(y)$ as one of nodes in $P(y)$, and updates its level by setting $\ell(y) = \ell(p(y)) + w'(y,p(y))$. Having computed its parent pointer, $y$ updates the cluster pointer using Observation~\ref{obs: clusterUpdate}. Specifically, if the parent of $y$ is the source node $v$, then $y$ form a new cluster itself, i.e., $c(y) = y$. Otherwise, $y$ shares the same cluster with its parent and sets $c(y) = c(p(y))$. Finally, the algorithm determines whether the change in the clustering of $y$ affected its neighbors. Concretely, for each neighbor $x$ of $y$, it checks whether $y \in P(x)$. If this is not the case, then there is no change in the clustering of $x$. Otherwise, $y$ is removed from $P(x)$, and if $P(x)$ becomes empty after this removal, the algorithm inserts $x$ into the heap $Q$ with key $\ell(x)$, given that $Q$ does not already contain $x$. 

\paragraph*{Running Time Analysis.}

We first concern ourselves with the number of cluster change per node in our decremental algorithm. For any node $v \in V$, we say that the clustering \emph{changes} for~$v$ due to an edge deletion if this deletion either increases the level $\ell(v)$ or forces a change in the cluster pointer $c(v)$. It is well-known that the ES-tree can handle a level increase for any node $v$ in time $O(\deg(v))$. As we will see next, we can also handle a cluster change for a node in the same level in $O(\deg(v) \log n )$ time. However, we need to ensure that the number of such cluster changes for any node and any fixed level is small, for our algorithm to be efficient. Below we argue that one can have a fairly good bound on the expected number of such changes, and this is due to the special tie-breaking scheme we use when assigning nodes to clusters.

Fix any node $v \in V$, and consider $v$ during the sequence of edge deletions. Note that since only deletions are allowed, the level $\ell(v)$ is non-decreasing. This induces a natural partitioning of the sequence of edge deletions into subsequences such that the $\ell(v)$ remains unaffected during each subsequence. Specifically, for every node $v \in V$ and every $0 \leq i \leq \Delta$, let $S(i)$ the be subsequence of edge deletions during which $\ell(v) = i$, where $\Delta \leq O(\log n / \beta)$. The following bound on the expected number of cluster changes of $v$ during $S(i)$ follows an argument by Baswana et al.~\cite{BaswanaKS12}.

\begin{lemma} \label{lem: nrClusterChanges}
For every node $ v \in V$ and every $0 \leq i \leq \Delta$, during the entire subsequence $S(i)$, the cluster $c(v)$ of $v$ changes at most $O(\log n)$ times, in expectation.
\end{lemma}
\begin{proof}
Let $N_{i-1}(v)$ be the neighbors of $v$ at level $(i-1)$, grouped according to the the clusters they belong to. This grouping naturally induces a family $\mathcal{P}$ of all potential parents sets $P$ of $v$ at level $(i-1)$, just before the beginning of subsequence $S(i)$.
%Let $\mathcal{P}$ be the family of potential parents sets $P$ of $v$ at level $(i-1)$, grouped according to the clusters they belong to, just before the beginning of subsequence $S(i)$. 
Let $C$ be the set of the corresponding clusters centers, i.e., for each $P \in \mathcal{P}$ add $c(P)$ to $C$, and note that $v$ can only join those centers during $S(i)$. Since we are considering only edge deletions, observe that when $v$ leaves a cluster centred at some node $c \in C$, it cannot join later the same cluster $c$ during $S(i)$.  

We next bound the number of cluster changes. For each $c \in C$, there must exist an edge in the subsequence $S(i)$ whose deletion increases $\dist(v,c)$, and thus $c$ is no longer a valid cluster center for $v$ at level $i$. The latter is also equivalent to some $P$ with $c(P) = c$ becoming empty after this edge deletion. Let $\langle c_1,\ldots,c_t \rangle$ be the sequence of nodes of $C$ \emph{ordered} according to the time when $v$ has no edge to a node in $P_j$, $1 \leq j \leq t$. We want to compute the probability that $v$ ever joins the cluster centred at $c_j$ during $S(i)$. Note that this event is a consequence of $v$ changing its current cluster center $c_{j'}$ due to all parents in $P(j')$ increasing their level. According to our tie-breaking scheme in $(\ref{eq: breakingTies})$, for this to happen, $c_j$ must be the first among all potential cluster centers $\{c_j,\ldots,c_t\}$ in the random permutation~$\pi$. 
%By definition of $\fraction()$, the probability that any two fractional values are equal is $0$. 
Since $\pi$ is a uniform random permutation, the probability that $c_j$ appears first is $1/(t-j+1)$. By linearity of expectation, the expected number of centers from $C$ whose clusters $v$ joins during $S(i)$ is $\sum_{j=1}^{t} \frac{1}{t-j+1} = O(\log t) = O(\log n)$. This also bounds the number of cluster changes of $v$ during $S(i)$. 
\end{proof}

We next bound the total update time of our decremental algorithm, and also give a bound on the total number of inter-cluster edges during its execution. 

\begin{theorem}\label{thm:decremental decomposition}
There is a decremental algorithm for maintaining a $ (\fract, O (\tfrac{\log{n}}{\fract})) $-decomposition with in expectation at most $ O (\fract n) $ clusters containing non-isolated nodes under a sequence of edge deletions in expected total update time $ O (m \log^3 n / \fract) $ such that, over all deletions, each each becomes an inter-cluster edge at most $ O (\log^2 n) $ times in expectation.
\end{theorem}
\begin{proof}
In a preprocessing step, we first repeat the sampling of the shift values until the maximum shift value is $ \log n/\beta $.
This event happens with probability $1-1/n$ (compare Theorem~\ref{thm: Miller_LDD}) and thus, by the waiting time bound, we need to repeat the sampling only a constant number of times.
Therefore, this preprocessing takes time $ O (n) $, which is subsumed in our claimed bound on the total update time.

We first note that procedure $\Initialize()$ can be implemented in $O (m \log n)$ time. This is because (1) the data-structures $\ell(\cdot), p(\cdot)$ and $c(\cdot)$ are initialized using Algorithm~\ref{alg:ModDijkstra} whose running time is bounded by $O (m \log n)$, (2) for each $u \in V$, the set $P(u)$ can by computed in $O(\deg (u))$ time, which in turn gives that all such sets can be determined in $\sum_{u \in V} O(\deg(u)) = O(m)$ time.

We next analyze the total time over the sequence of all edge deletions. Consider procedure $\Delete(u,v)$ for deletion of an edge $(u,v)$. If edge $(u,v)$ does not lead to a change in the clustering of one of its endpoints, then it can be processed in $O(1)$ time. Otherwise, the end-point whose clustering has changed is inserted into heap $Q$, which can be implemented in $O(\log n)$ time. Now, observe that the computation time spent by procedure $\Delete(u,v)$ is bounded by the number of nodes processed by heap $Q$ after the deletion of edge $(u,v)$, during procedure $\UpdateLevels()$. By construction, the processed nodes are precisely those whose clustering has changed due to the deletion of $(u,v)$, and after the processing, their new clustering its computed. A node $y$ extracted from $Q$ is processed in $O(\deg(y) \log n)$ time, as we will shortly argue. Therefore, we conclude that over the entire sequence of edge deletions, a node $y$ will perform $O(\deg(y) \log n)$ amount of work, each time its clustering changes. By Lemma~\ref{lem: nrClusterChanges}, as long as the level of $y$ is not increased, the clustering of $y$ will change $O(\log n)$ times, in expectation. Since there are at most $\Delta = O(\log n / \beta)$ levels, the expected number of cluster changes for $y$ is bounded by $O((\log^{2} n) / \beta)$. As our analysis applies to any node $y \in V$, we conclude that the expected total update time of our decremental algorithm is
\begin{equation} \label{eqn: runningTime}
	\sum_{y \in V} O \left((\deg(y) \log^3 n) / \beta \right) = O\left((m \log^{3} n / \beta)\right) \, .
\end{equation}

To show our claim that each node $y$ extracted from $Q$ is processed in time $O(\deg(y) \log n)$, we need two observations. First, recall that $P(y)$ can be computed in $O(\deg(y))$ time, and thus the data-structures $\ell(\cdot), p(\cdot)$ and $c(\cdot)$ can be then updated in $O(1)$ time. Second, in the worst-case, $y$ affects the clustering of all its neighbors and inserts them into $Q$. This step can be implemented in $O(\deg(y) \log n)$ time.

We finally show that each each becomes an inter-cluster edge at most $ O (\log^2 n) $ times in expectation.
Fix some arbitrary edge $ e = (x, y) $ and consider the graph $ G $ after an arbitrary number of the adversary's deletions.
We first formulate a necessary condition for $ e $ being an inter-cluster edge and give a bound on the probability of the corresponding event.
Let $ w $ denote the mid-point of $ e $, i.e., the imaginary node in the ``middle'' of edge $ e $ that is at distance $ \tfrac{1}{2} $ to both $ u $ and~$ v $.
Let $ m_{c(x)}(w) $ and $ m_{c(y)}(w) $ denote the shifted distance from $ w $ to $ c (x) $ and $ c (y) $ in $ G $, respectively.
We would like to argue that both $ m_{c(x)}(w) $ and $ m_{c(y)}(w) $ are close to the minimum shifted distance of the mid-point $ w $.
However, we cannot readily apply Lemma~\ref{lem:inter cluster edge implies mid-point condition} as our algorithm does not run on~$ G $; instead it runs on~$ G' $, in which the edge weights are rounded to integers.
However, we can apply Lemma~\ref{lem:inter cluster edge implies mid-point condition} on $ G' $ and get that $ \floor{m_{c(x)}(w)} $ and $ \floor{m_{c(y)}(w)} $ are within $ 1 $ of the minimum rounded shifted distance of the mid-point $ w $.
Thus, $ | \floor{m_{c(x)}(w)} - \floor{m_{c(y)}(w)} | \leq 1 $, which implies that $ | m_{c(x)}(w) - m_{c(y)}(w) | \leq 2 $.
This means that $ | m_{c(x)}(w) - m_{c(y)}(w) | \leq 2 $ is a necessary condition for $ e = (x, y) $ to be an inter-cluster edge.
As the adversary is oblivious to the random choices of our algorithm, we know by Lemma~\ref{lem:probability of mid-point condition} that $ \Pr [| m_{c(x)}(w) - m_{c(y)}(w) | \leq 2 ] \leq 2 \beta $ in each of the graphs created by the adversary's sequence of deletions.

Observe that for each of the endpoints ($ x $ and~$ y $) of $ e $ the level in our decremental algorithm is non-decreasing.
Let $ 0 \leq i \leq \Delta $, and let $ S (i) $, say of length~$ t $, be the (possibly empty) subsequence of edge deletions during which $ \ell (x) = i $.
We show below that the expected number of times that $ e $ becomes an inter-cluster edge during deletions in $ S (i) $ is $ O (\beta \log{n}) $.
It then follows that the total number times $ e $ becomes an inter-cluster edges is $ O (\log^2{n}) $ by linearity of expectation: sum up the number of times $ e $ becomes an inter-cluster edge in each subsequence $ S (i) $ for $ 0 \leq i \leq \Delta $ where $ \Delta \leq O (\log{n} / \beta) $, and repeat the argument for the other endpoint $ y $ of $ e $ as well.

For every $ 1 \leq j \leq t $ define the following events:
\begin{itemize}
\item $ A_j $ is the event that $ e $ becomes an inter-cluster edge after the $j$-th deletion in $ S (i) $, and was not an inter-cluster edge directly before this deletion.
\item $ B_j $ is the event that at least one of the endpoints of $ e $, $ x $ or $ y $, changes its cluster after the $j$-th deletion in $ S (i) $.
\item $ C_j $ is the event that $ e $ is an inter-cluster edge after the $j$-th deletion in $ S (i) $.
\item $ D_j $ is the event that $ | m_{c(x)}(w) - m_{c(y)}(w) | \leq 2 $ after the $j$-th deletion in $ S (i) $, where $ w $ is the mid-point of $ e $.
\end{itemize}

Note that $ e $ can only become an inter-cluster edge if at least one of its endpoints changes its cluster.
Thus, the event~$ A_j $ implies the event $ B_j \wedge C_j $ and therefore $ \Pr [A_j] \leq \Pr [B_j \wedge C_j] $.
Furthermore, by Lemma~\ref{lem:inter cluster edge implies mid-point condition}, the event $ C_j $ implies the event $ D_j $.
We thus have $ \Pr [B_j \wedge C_j] \leq \Pr [B_j \wedge D_j] $.
Observe that the event $ D_j $ only depends on the random choice of the shift values $ \delta $ and that, in the fixed subsequence of deletions $ S (i) $, the event $ B_j $ only depends on the random choice of the permutation~$ \pi $.
Thus, $ B_j $ and $ D_j $ are independent and therefore $ \Pr [B_j \wedge D_j] = \Pr [B_j] \cdot \Pr [D_j] $.
Finally, note that the expected number of indices $ j $ such that the event $ B_j $ happens is at most the expected number of cluster changes for both endpoints of $ e $, as bounded by Lemma~\ref{lem: nrClusterChanges}, and thus $ \sum_{1 \leq i \leq t} \Pr [B_j] = O (\log n) $ for the random permutation~$ \pi $.
It follows that the expected number of times edge $ e $ becomes an inter-cluster edge (i.e., the expected number of indices $ j $ such that event $ A_j $ happens) is
\begin{multline*}
\sum_{1 \leq i \leq t} \Pr [A_j] \leq \sum_{1 \leq i \leq t} \Pr [B_j \wedge C_j] \leq \sum_{1 \leq i \leq t} \Pr [B_j \wedge D_j] = \sum_{1 \leq i \leq t} \Pr [B_j] \cdot \Pr [D_j] \\ \leq \sum_{1 \leq i \leq t} \Pr [B_j] \cdot 2 \beta = 2\beta \cdot \sum_{1 \leq i \leq t} \Pr [B_j] = O (\beta \log{n}) \, ,
\end{multline*}
where the penultimate inequality follows from Lemma~\ref{lem:probability of mid-point condition}.
\end{proof}

Note that in this proof, to bound the total number of inter-cluster edges, we exploited that our two sources of randomness, the random shifts~$ \delta $ and the random permutation~$ \pi $ have different purposes: $ \delta $ influences whether an edge $ e $ is an inter-cluster edge and $ \pi $ influences the number of cluster changes of the endpoints of $ e $.
We have deliberately set up the algorithm in such a way that the independence of the corresponding events can be exploited in the proof.
This is the reason why we explicitly introduced a new random permutation for tie-breaking instead of using the random shifts for this purpose as well.
\begin{remark} \label{remark: LDDrunningTime}
Note that Equation~(\ref{eqn: runningTime}) implies that the total expected update time of Theorem~\ref{thm:decremental decomposition} is $O(m \log^{3} n / \beta)$. For the sake of exposition, we have implemented the ES-tree using a heap, which introduces a $O(\log n)$ factor in the running time. \cite{King99} (Section~2.1.1) gives a faster implementation of the ES-tree that eliminates this extra $O(\log n)$ factor. Thus, using her technique, we can also bring down our running time to $O(m \log^{2} n / \beta)$. This improvement will be particularly useful when applying our dynamic low-diameter decomposition to the construction of dynamic spanners in Section~\ref{sec:spanner}.
\end{remark}

\begin{algorithm2e}[htb!]
\caption{Modified ES-tree}
\label{alg:ES_tree}

\tcp{\textrm{The modified ES-tree is formulated for weighted undirected graphs.}}
\BlankLine

\tcp{\textrm{Internal data structures:
\begin{itemize}
\item $ \pi $: random permutation on $V$
\item $ \delta_v $: random shift of $ v $
\item $ P (v) $: the set of potential parents in the tree
\item $ p (v) $: for every node $ v $ a pointer to its parent in the tree
\item $ c (v) $: for every node $ v $ a pointer to the cluster center
\item $ Q $: global heap whose intended use is to store nodes whose levels might need to be updated
\end{itemize}
}}
\vspace{-3ex}

\BlankLine

\Procedure{\Initialize{}}{
	Initialize using Algorithm~\ref{alg:ModDijkstra}\;
	Set $ \lev (v) $, $ P(v) $, $ p(v) $, $ c(v) $ for every node $ v $ accordingly\;
}

\Procedure{\Delete{$u$, $v$}}{
	\If{$ v \in P (u) $}{
		Remove $ v $ from $ P (u) $\;
		\If{$ P (u) = \emptyset$}{
			Insert $ u $ into heap $ Q $ with key $ \lev (u) $\;
			\UpdateLevels{}\;
		}
	}
}

\Procedure{\UpdateLevels{}}{
	\While{heap $ Q $ is not empty}{\label{line:while loop}
	    Take node $ y $ with minimum key $ \lev (y) $ from heap $ Q $ and remove it from $ Q $ \label{line: take y from Q}\;
		Compute $ P (y) $ as the set of neighbors $ z $ of $ y $ minimizing $ (\lev (z) + w (y, z), \pi(c(z))) $ lexicographically\;
		Set $ p (y) $ as one of the nodes in $ P (y) $\;
		Set $ \lev (y) = \lev (p(y)) + w' (y, p(y) $\;
		\eIf{$ p(y) = s $}{
			Set $ c (y) = y  $\;
		}{
			Set $ c (y) = c( p(y)) $\;
		}

		\ForEach{neighbor $ x $ of $ y $}{\label{line: update neighbors' heaps}
			\If{$ y \in P (x) $}{
				Remove $ y $ from $ P (x) $\;
				\If{$ P (x) = \emptyset$}{
					Insert $ x $ into heap $ Q $ with key $ \lev (x) $ if $ Q $ does not already contain $ x $\;
				}
			}
		}
	}
}
\end{algorithm2e}

\subsection{Fully Dynamic Low-Diameter Decomposition}\label{sec:fully dynamic LDD}

We finally show how to extend the decremental algorithm of Theorem~\ref{thm:decremental decomposition} to a fully dynamic algorithm, allowing also insertions of edges.
\begin{theorem}[Restatement of Theorem~\ref{thm:fully dynamic LDD}]
Given any unweighted, undirected multigraph undergoing edge insertions and deletions, there is a fully dynamic algorithm for maintaining a $ (\fract, O (\tfrac{\log{n}}{\fract})) $-decomposition (with clusters of strong diameter $ O (\tfrac{\log{n}}{\fract}) $ and at most $ \beta m $ inter-cluster edges in expectation) that has expected amortized update time $ O (\log^2{n} / \fract^2) $.
A spanning tree of diameter $ O (\tfrac{\log{n}}{\fract}) $ for each cluster can be maintained in the same time bound.
The expected amortized number of edges to become inter-cluster edges after each update is $ O (\log^2{n} / \fract) $.
These guarantees hold against an oblivious adversary.
\end{theorem}

\begin{proof}
The fully dynamic algorithm proceeds in phases, starting from an empty graph.
For every $ i > 1 $, let $ m_i $ denote the number of edges in the graph at the beginning of phase $ i $.
After $ \fract m_i / 3 $ updates in the graph we end phase $ i $ and start phase $ i+1 $.
At the beginning of each phase we re-initialize the decremental algorithm of Theorem~\ref{thm:decremental decomposition} for maintaining a $ (\fract / 3, 3 \cdot O (\tfrac{\log{n}}{\fract})) $-decomposition.\footnote{Note that for the first constant number of updates this basically amounts to recomputation from scratch at each update.}
Whenever an edge is deleted from the graph, we pass the edge deletion on to the decremental algorithm.
Whenever an edge is inserted to the graph, we do nothing, i.e., we deal with insertions of edges in a completely \emph{lazy} manner.

We first analyze the ratio of inter-cluster edges at any time during phase $ i $.
First observe that the number of inter-cluster edges is at most $ 2 \fract m_i / 3 $ in expectation, where at most $ \fract m_i / 3 $ edges in expectation are contributed by the $ (\fract / 3, 3 \cdot O (\tfrac{\log{n}}{\fract})) $-decomposition of the decremental algorithm and at most $ \fract m_i / 3 $ edges are contributed from inserted edges.
Second, the number of edges in the graph is at least $ m_i - \fract m_i / 3 $, as $ m_i $ is the initial number of edges and at most $ \fract m_i / 3 $ edges have been deleted.
Thus, the ratio of inter-cluster edges is at most
\begin{equation*}
\frac{2 \fract m_i / 3}{m_i - \fract m_i/3} = \frac{2 \fract}{3 - \fract} \leq \frac{2 \fract}{2 + \fract - \fract} = \fract \, .
\end{equation*}
Our fully dynamic algorithm therefore correctly maintains a $ (\fract, O (\tfrac{\log{n}}{\fract})) $-decomposition.

We now analyze the amortized update time of the algorithm.
Start with an empty graph and consider a sequence of $ q $ updates.
Let $ k $ denote the number of the phase after the $ q $-th update.
Then $ q $ can be written as $ q = \sum_{1 \leq i < k} \fract m_i / 3 + t $, where $ t $ is the number of updates in phase $ k $.
For every phase $ i $ that has been started, we spend time $ O (m_i \log^2 n / \fract) $ by Theorem~\ref{thm:decremental decomposition} and Remark~\ref{remark: LDDrunningTime}. We know that $ t \leq \fract m_k / 3 $ and in particular we also have $ m_k \leq \sum_{1 \leq i \leq k-1} \fract m_i / 3 $ as every edge that is contained in the graph at the beginning of phase $ k $ has been inserted in one of the previous phases.
We can thus bound the amortized spent by the algorithm for $ q $ updates by
\begin{multline*}
\frac{\sum_{1 \leq i \leq k-1} O (m_i \log^2 n / \fract) + O (m_k \log^2 n / \fract)}{\sum_{1 \leq i \leq k-1} \fract m_i / 3} \\
 \leq \frac{\sum_{1 \leq i \leq k-1} O (m_i \log^2 n / \fract) + O (\sum_{1 \leq i \leq k-1} m_i \log^2 n)}{\sum_{1 \leq i \leq k-1} \fract m_i / 3} = O \left( \frac{\log^2 n}{\fract^2} \right) \, .
\end{multline*}

Finally, we analyze the amortized number of edges to become inter-cluster edges per update.
For every phase $ i $ that has been started, we have a total number of $ O (m_i \log^2 n) $ edges that become inter-cluster edges in the decremental algorithm by Theorem~\ref{thm:decremental decomposition}.
Additionally, at most $ \fract m_i / 3 = O (m_i) $ inserted edges could also become inter-cluster edges.
We can thus bound the amortized number of edges to become inter-cluster per update by
\begin{multline*}
\frac{\sum_{1 \leq i \leq k-1} O (m_i \log^2 n) + O (m_k \log^2 n)}{\sum_{1 \leq i \leq k-1} \fract m_i / 3} \\
 \leq \frac{\sum_{1 \leq i \leq k-1} O (m_i \log^2 n) + O (\sum_{1 \leq i \leq k-1} \fract m_i \log^2 n)}{\sum_{1 \leq i \leq k-1} \fract m_i / 3} = O \left( \frac{\log^2 n}{\fract} \right) \, .
\end{multline*}
\end{proof}

\section{Dynamic Spanner Algorithm}\label{sec:spanner}

\subsection{Static Spanner Construction}
In the following we review and adapt the static algorithm for constructing sparse low-stretch spanners due to Elkin and Neiman~\cite{ElkinN17}. Let $G=(V,E)$ be an unweighted, undirected graph on $n$ nodes, and let $k \geq 1$ be an integer. For every $u \in V$, we denote by $N(u)$ the set of all nodes incident to $u$. Recall that $\expdis(\beta)$ denotes the exponential distribution with parameter $\beta$. In what follows, we set $\beta = \log (cn)/k$, where $c>3$ denotes the success probability. A $2 k - 1$-\emph{spanner} of $G$ is a a subgraph $H=(V,E')$ such that for every $u,v \in V$, $\dist_H(u,v) \leq 2 k - 1 \cdot \dist_G(u,v)$. We refer to $2 k - 1$ and $|E'|$ as the \emph{stretch} and \emph{size} of $H$, respectively. 

We next review some useful notation. Let $\delta_u$ be the shift value of node $u \in V$. For each $x, u \in V$, recall that $m_u(x) = \dist_G(x,u) - \delta_u$ is the shifted distance of $x$ with respect to $u$, and let $p_u(x)$ denote the neighbor of $x$ that lies on a shortest path from $x$ to $u$. Also, for every node $x \in V$, let $m(x) = \min_{u \in V} \{ m_u(x) \}$ be the minimum shifted distance. Using our clustering notation from Section~\ref{sec:LDD}, it follows that $c(x) = \arg \min_{u \in V} \{ m_u(x) \}$, and thus $m(x) = m_{c(x)}(x)$.

We now present an algorithm that constructs spanners using exponential random-shift clustering. Specifically, we initially set $H=(V,\emptyset)$, and for each node $u \in V$, we independently pick a shift value $\delta_u$ from $\expdis(\beta)$. Then, for every $x \in V$, we add to the spanner $H$ the following set of edges
\begin{equation}
C(x) = \set{(x, p_u(x))}{m_u(x) \leq m(x) + 1} \, . \label{eqn: SpannerEqn}
\end{equation}

The following theorem give bounds on the stretch and the size of the spanner output by the above algorithm.

\begin{theorem}[\cite{ElkinN17}] For any unweighted, undirected simple graph $G=(V,E)$ on $n$ nodes, any integer $k \geq 1$, $c \geq 3$, there is a randomized algorithm that with probability at least $ 1 - \tfrac{2}{c} $ computes a spanner $H$ with stretch $2k-1$ and size at most $ (cn)^{1+1/k} $.
\end{theorem}

Our analysis will rely on the following useful properties of the above algorithm.

\begin{claim}[\cite{ElkinN17}]
The expected size of $ H $ is at most $ (c n)^{1/k} \cdot n $.
\end{claim}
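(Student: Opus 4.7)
The plan is to bound the expected number of edges of $H$ by $\sum_{x \in V} \mathbb{E}[|C(x)|]$, which is an overcount of $|E(H)|$ since each edge of $H$ is contributed to by at most two vertices via the sets $C(x)$. Hence it suffices to show that for every fixed vertex $x \in V$, the expected cardinality of $C(x)$ over the random shifts is at most $e^{\beta} = (cn)^{1/k}$. Summing over the $n$ vertices then yields the claimed bound of $(cn)^{1/k} \cdot n$.

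The heart of the argument is the pointwise bound, for each $u \in V$,
\[
\Pr[u \in C(x)] \;\leq\; e^{\beta}\cdot\Pr[u = c(x)]\,.
\]
Once this is established, linearity of expectation and the fact that exactly one vertex attains $c(x)$ almost surely (the shifts are drawn from a continuous distribution) give
\[
\mathbb{E}[|C(x)|] \;=\; \sum_{u\in V}\Pr[u \in C(x)] \;\leq\; e^{\beta}\sum_{u\in V}\Pr[u = c(x)] \;=\; e^{\beta}\,.
\]

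To prove the pointwise bound I would use the memoryless property of the exponential distribution. Condition on the shifts $\{\delta_v\}_{v\neq u}$ and on the event $u \in C(x)$; the latter is equivalent to $\delta_u \geq \mathrm{dist}(x,u) - m(x) - 1$, where $m(x)$ depends only on $\{\delta_v\}_{v\neq u}$ (together with the rewriting $m(x)=\min_v m_v(x)$ restricted to the already-fixed coordinates, if $u$ itself is not the minimizer). Now consider increasing $\delta_u$ by exactly $1$: by memorylessness, conditional on $\delta_u$ exceeding any threshold $t$, the further excess has the same distribution as an unconditioned $\expdis(\beta)$, so the probability of an additional increment of at least $1$ is $e^{-\beta}$. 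After such an increment, $m_u(x)$ decreases by $1$, which is enough to make $m_u(x) \leq m(x)$ and hence $u$ becomes the unique minimizer $c(x)$. Thus, conditional on $u\in C(x)$, we have $u=c(x)$ with probability at least $e^{-\beta}$, giving the required inequality after unconditioning.

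The main subtlety I expect is keeping the conditioning clean: one must verify that the ``shift by $1$'' coupling is well-defined and that $m(x)$ itself does not increase under this perturbation (it cannot, since only $\delta_u$ grows). Apart from this, the remaining steps are linearity of expectation, the observation $\sum_u \Pr[u=c(x)]=1$, and plugging in $\beta = \log(cn)/k$ so that $e^{\beta} = (cn)^{1/k}$. No additional probabilistic tools beyond the memorylessness of $\expdis(\beta)$ are needed, and the rest is bookkeeping.
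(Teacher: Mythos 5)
The paper cites this claim from Elkin and Neiman and does not reproduce a proof, so there is no internal proof to compare against. Your argument is correct and is the standard one used in the cited reference: condition on $\{\delta_v\}_{v\neq u}$, observe that the membership condition $m_u(x)\leq m(x)+1$ is equivalent to $m_u(x)\leq m^{-u}(x)+1$ where $m^{-u}(x)=\min_{v\neq u}m_v(x)$ is fixed under the conditioning, then use memorylessness of $\expdis(\beta)$ to obtain $\Pr[u\in C(x)]\leq e^{\beta}\Pr[u=c(x)]$, sum over $u$ (using that $c(x)$ is a.s.\ unique), and sum over $x$. Your intermediate phrasing about $m(x)$ ``depending only on $\{\delta_v\}_{v\neq u}$'' is a little loose since $m(x)$ does involve $\delta_u$, but you correctly flag and resolve this, and the replacement of $m(x)$ by $m^{-u}(x)$ makes the conditioning rigorous.
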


\begin{claim}[\cite{ElkinN17}]\label{claim:bound on random shifts}
With probability at least $1-1/c$, it holds that $\delta_u < k$ for all $u \in V$.
\end{claim}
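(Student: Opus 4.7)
The plan is to use a standard tail bound for the exponential distribution combined with a union bound over the $n$ vertices. The key parameters to keep in mind are that each $\delta_u$ is drawn independently from $\expdis(\beta)$ with $\beta = \log(cn)/k$, so the statement $\delta_u < k$ is asking that no individual shift exceeds a threshold that is roughly $k/\log(cn)$ times the expected scale.

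First I would compute, for a single vertex $u \in V$, the probability that $\delta_u \geq k$. Since $\delta_u \sim \expdis(\beta)$, its tail is $\Pr[\delta_u \geq t] = e^{-\beta t}$ for $t \geq 0$. Plugging in $t = k$ and $\beta = \log(cn)/k$ gives
\[
\Pr[\delta_u \geq k] \;=\; e^{-\beta k} \;=\; e^{-\log(cn)} \;=\; \frac{1}{cn}.
\]
This is the only calculation of substance; once this tail bound is in hand, the rest is routine.

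Next I would apply a union bound over all $n$ choices of $u \in V$ to obtain
\[
\Pr\bigl[\exists u \in V : \delta_u \geq k\bigr] \;\leq\; \sum_{u \in V} \Pr[\delta_u \geq k] \;\leq\; n \cdot \frac{1}{cn} \;=\; \frac{1}{c}.
\]
Taking the complement gives $\Pr[\forall u \in V : \delta_u < k] \geq 1 - 1/c$, which is exactly the claim. There is no real obstacle here; the proof is a one-liner once one writes down the exponential tail and applies the union bound. The only subtlety worth flagging is that the choice $\beta = \log(cn)/k$ is precisely calibrated so that the per-vertex failure probability $1/(cn)$ survives the union bound with a $1/c$ slack, which is why $c$ plays the role of a tunable confidence parameter throughout the Elkin--Neiman analysis.
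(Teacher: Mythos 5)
Your proof is correct and is the standard argument: the exponential tail bound $\Pr[\delta_u \geq k] = e^{-\beta k} = 1/(cn)$ followed by a union bound over the $n$ vertices. The paper itself cites this claim from Elkin and Neiman without proof, and your argument is essentially the one given there.
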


\begin{claim}[\cite{ElkinN17}]\label{claim:bound on depth of tree} Assume $\delta_u < k$ for all $u \in V$. Then for any $x \in V$, if $u$ is the node minimizing $m_u(x)$, i.e., $u = c(x)$, then $\dist_G(u,x) < k$.
\end{claim}

As argued by Elkin and Neiman, Claim~\ref{claim:bound on depth of tree} implies that the stretch of the spanner is at most $ 2k - 1 $.
Thus, the reason reason why the stretch guarantee is probabilistic is Claim~\ref{claim:bound on random shifts}.

\paragraph*{Implementation.} In the description of the spanner construction, it is not clear how to compute in nearly-linear time the set of edges $C(x)$ in Equation~(\ref{eqn: SpannerEqn}), for every node $x \in V$. To address this, we give an equivalent definition of $C(x)$, which better decouples the properties that the edges belonging to this set satisfy. Specifically, we define the set of edges
\begin{equation}
	C'(x) = \set{(x,y)}{y \in N(x) \text{ and } m_{c(y)}(x) \leq m(x) + 1} \, ,\label{eqn: equivalentSpannerEqn}
\end{equation}
and then show that $C(x) = C'(x)$. 

To this end, we will show that (a) $C(x) \subseteq C'(x)$ and (b) $C'(x) \subseteq C(x)$. Let $(x,y) \in C(x)$, where $y = p_u(x)$. By definition of $p_u(x)$, we have that $y \in N(x)$. We next show that $m_{c(y)}(x) \leq m(x) + 1$, which in turn proves (a). Indeed, 
\[
	m_{c(y)}(x) = m_{c(y)}(y) + 1 = m(y) + 1 \leq m_u(y) + 1 = m_u(x) \leq m(x)+1,
\]
where the last inequality follows from Equation~(\ref{eqn: SpannerEqn}). For showing the other containment, i.e., proving (b), let $(x,y) \in C'(x)$. Then we need to prove that there exists some $u \in V$ such that $y=p_u(x)$ and $m_u(x) \leq m(x) +1$. This follows by simply setting $u=c(y)$ and using Equation~(\ref{eqn: equivalentSpannerEqn}).

Now, similarly to the static low-diameter decomposition in Section~\ref{sec:staticLDD}, we augment the input graph~$G$ by adding a new source $s$ to $G$ and edges $(s,x)$ of weight $(\delta_{max} - \delta_x) \geq 0$, for every $x \in V$, where $\delta_{\max} = \max_{x \in V} \{\delta_x \}$. Recall that in the resulting graph $\hat{G}=(V \cup \{s\},\hat{E},\hat{w})$, for every $x \in V$, the node $u$ attaining the minimum $m(x)$ is exactly the root of the sub-tree below the source $s$ that contains $u$. Thus, we could use Dijkstra's algorithm to construct the shortest path tree of $\hat{G}$, and augment it appropriately to output the edge sets $C'(x)$, which in turn give us the spanner $H$.

However, in the dynamic setting, it is crucial for our algorithm to deal only with integral edge weights. To address this, we round down all the $\delta_u$ values to $\floor{\delta_u}$ and modify the weights of the edges incident to the source $s$ in $\hat{G}$. Let $G'=(V \cup \{s\}, E',w')$ be the resulting graph, and let $\floor{m_u(x)}$ denote the rounded shifted distances. Whenever two rounded distances are the same, we break ties using the permutation $ \pi $ on the nodes induced by the fractional values of the random shift values. Thus, the edge set $C'(x)$ is given by
\begin{equation*} 
	C'(x) = \set{(x,y)}{y \in N(x),~\floor{m_{c(y)(x)}} \leq \floor{m(x)} + 1 \text{ and } \pi(c(y)) < \pi(c(x))}.
\end{equation*}

Finally, we observe that the definition of the above set can be further simplified by using the facts that $m_{c(y)}(x) = m_{c(y)}(y) + 1$ and $\floor{m_{c(y)}(x)} \geq \floor{m(x)}$, that is
\begin{equation}
   C'(x) = \set{(x,y)}{y \in N(x),~\floor{m(y)} = \floor{m(x)} - 1 \text{ or } [\floor{m(y)} = \floor{m(x)} \text{ and } \pi(c(y)) < \pi(c(x))]} \, . \label{eqn: roundedSpannerEqn}	
\end{equation}

Interpreting the above set in terms of the shortest-path tree output by Dijkstra's algorithm, we get that for any $x \in V$, we add the edge $(x,y)$ to the spanner $H$, if $y$ is a neighbor one level above the level or $x$, or if $x$ and $y$ are at the same level, and the cluster $y$ belongs to appears before in the permutation when compared to the cluster $x$ belongs to.
By Claim~\ref{claim:bound on depth of tree} the shortest-path tree has depth at most $ 2 k $ with high probability.

Now observe that the randomized properties of this spanner construction only depend on the integer parts of the random shift values and the permutation~$ \pi $ on the nodes induced by the order statistics of the fractional parts of the random shift values.
Similar to the argument of Miller et al.~\cite{MillerPX13} for low-diameter decompositions, it can be argued that due to memorylessness of the exponential distribution, one might as well use a uniformly sampled random permutation~$ \pi $ instead to obtain a spanner with the same probabilistic properties.

\subsection{Dynamic Spanner Algorithm}

Spanners have a useful property called \emph{decomposability}: Assume we are given a graph $ G = (V, E) $ with a partition into two subgraphs $ G_1 = (V, E_1) $ and $ G_2 = (V, E_2) $.
If $ H_1 = (V, F_1) $ is a spanner of $ G_1 $ and $ H_2 = (V, F_2) $ is a spanner of $ G_2 $, both of stretch $ t $, then $ H = (V, F_1 \cup F_2) $ is a spanner of $ G $.
This property allows for a reduction that turns decremental algorithms into fully dynamic ones at the expense of logarithmic overhead in size and update time, as it has been observed by Baswana et al.~\cite{BaswanaKS12}.
\begin{lemma}[Implicit in~\cite{BaswanaKS12}]\label{lem:spanner decremental to fully dynamic}
If there is a decremental algorithm for maintaining a spanner of stretch $ t $ and expected size $ s (n) $ with total update time $ m \cdot u (m, n) $, then there is a fully dynamic algorithm for maintaining a spanner of stretch $ t $ and expected size $ s (n) \cdot O (\log{n}) $ with amortized update time $ u (m, n) \cdot O (\log{n}) $.
\end{lemma}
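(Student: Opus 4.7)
The plan is to exploit the decomposability of spanners via a Bentley--Saxe-style logarithmic bucketing of the edge set. I would partition the current edge set into $L + 1$ buckets $E_0, E_1, \ldots, E_L$ with $L = \lceil \log n \rceil$, and on each non-empty bucket $E_i$ run an independent instance of the given decremental algorithm to maintain a spanner $H_i$ of the subgraph $(V, E_i)$ with stretch $t$. The overall maintained spanner is $H = \bigcup_i H_i$, which has stretch $t$ by decomposability (any $G$-edge lies in some $E_i$ and is thus $t$-spanned by $H_i \subseteq H$) and expected size at most $(L+1) \cdot s(n) = O(s(n) \log n)$ by linearity of expectation. The structural invariant to maintain is $|E_i| \leq 2^i$ at the moment bucket~$i$ was last rebuilt.

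Updates are processed as follows. A deletion of an edge~$e$ is forwarded to the unique decremental instance whose bucket contains~$e$, and $e$ is removed from that bucket. An insertion of an edge~$e$ triggers a rebuild: pick the smallest index $j$ with $2^j \geq 1 + \sum_{i < j} |E_i|$ (taking $j = L$ if none exists), collect $e$ together with all edges currently in buckets $E_0, \ldots, E_{j-1}$ into a new bucket $E_j$, discard the decremental structures at all lower levels, and spawn a fresh decremental instance on this new $E_j$. Since each rebuild strictly increases the bucket index in which a participating edge resides, each edge can take part in at most $L + 1 = O(\log n)$ rebuilds over its lifetime.

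For the running time, observe that a single rebuild at level $j$ creates a decremental instance on at most $2^j$ edges, and by the hypothesis its total work over its entire lifetime---initialization plus all subsequent deletions it will ever serve---is bounded by $|E_j| \cdot u(|E_j|, n) \leq |E_j| \cdot u(m, n)$. Charging $O(u(m,n))$ to each edge entering the rebuild thus pays both for the fresh initialization and for every deletion this edge will eventually cause inside that instance; summing over the $O(\log n)$ rebuilds an edge can participate in yields amortized update time $u(m,n) \cdot O(\log n)$ per insertion, while a deletion is charged only to its owning instance and is already covered. The main subtlety is a mild monotonicity assumption: one needs $u(\cdot, n)$ to be nondecreasing in its first argument in order to bound $u(|E_j|, n) \leq u(m, n)$, and the hypothesis ``total update time $m \cdot u(m,n)$'' must be read as covering the cost of initializing the decremental algorithm on the starting graph and of processing the entire subsequent sequence of deletions; these are the standard conventions under which dynamic reductions of this flavor are stated, and they are satisfied by the decremental spanner algorithm we instantiate.
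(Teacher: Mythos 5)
The paper does not actually supply a proof of this lemma; it states it as implicit in Baswana et al.\ and merely records the decomposability property that drives it. Your proposal reconstructs the standard logarithmic (Bentley--Saxe style) reduction, which is indeed what underlies the cited result, and the skeleton is correct: $O(\log n)$ buckets each running an independent decremental instance, the global spanner is the union of the bucket spanners, stretch follows from decomposability, size and amortized time pick up an $O(\log n)$ overhead from the number of buckets and the number of rebuilds an edge participates in, and you rightly flag the monotonicity assumption on $u(\cdot,n)$ that is needed to bound the per-bucket cost by $u(m,n)$.

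The rebuild rule as written, however, is broken. The condition ``pick the smallest $j$ with $2^j \geq 1 + \sum_{i<j}|E_i|$'' is satisfied vacuously by $j=0$ (the empty sum gives $1\geq 1$), so every insertion would rebuild into bucket~$0$; and you collect $e$ together with $E_0,\ldots,E_{j-1}$ into a \emph{new} bucket $E_j$ without accounting for the current contents of $E_j$, so if $E_j$ is non-empty those edges are silently dropped while their decremental structure (which you do not discard at level~$j$) keeps spanning them. The fix is the usual one: choose the smallest $j$ such that $E_j$ is \emph{currently empty}. Under the invariant that $|E_i|\le 2^i$ at the time of bucket~$i$'s last rebuild (and sizes only shrink under deletions), one automatically has $1+\sum_{i<j}|E_i|\le 2^j$, so the merged set fits; and since $E_j$ was empty, every participating edge strictly increases its bucket index, so your ``at most $L+1$ rebuilds per edge'' charging argument goes through. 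With that correction the proof is sound and matches the intended construction.
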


In the remainder of this section, we explain how the techniques we developed in Section~\ref{sec:LDD} allow for a decremental implementation of the spanner construction explained above.
\begin{theorem}
Given any unweighted, undirected graph undergoing edge deletions, there is a decremental algorithm for maintaining a spanner of stretch $ 2k - 1 $ and expected size $ O (n^{1 + 1/k}) $ that has expected total update time $ O (k m \log{n}) $.
These guarantees hold against an oblivious adversary.
\end{theorem}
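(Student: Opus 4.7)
The plan is to run the modified ES-tree of Section~\ref{sec:ES tree} on the augmented graph $G'$ built from the spanner construction of Elkin and Neiman, with depth bound $\Delta = O(k)$ instead of $O(\log n / \beta)$. Concretely, I would fix the random shifts $\delta_u \sim \expdis(\beta)$ with $\beta = \log(cn)/k$ at preprocessing, condition on the event $\delta_u < k$ for all $u$ (which by Claim~\ref{claim:bound on random shifts} occurs with probability $\geq 1 - 1/c$ and by the waiting-time argument can be obtained in $O(n)$ time with constant overhead), and fix an independent uniform random permutation $\pi$ for tie-breaking. The ES-tree then maintains, for every $x \in V$, the rounded level $\lfloor m(x)\rfloor$ and cluster pointer $c(x)$ in the current graph. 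On top of this data structure I would maintain the spanner edge set $H = \bigcup_{x \in V} C'(x)$ directly from the characterization in Equation~\eqref{eqn: roundedSpannerEqn}: an edge $(x,y)$ is currently a spanner edge iff $\lfloor m(y)\rfloor = \lfloor m(x)\rfloor - 1$, or $\lfloor m(y)\rfloor = \lfloor m(x)\rfloor$ with $\pi(c(y)) < \pi(c(x))$ (or the symmetric variant).

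For the size and stretch guarantees I would argue that, because the adversary is oblivious to the random shifts and to $\pi$, the state of $(\lfloor m(\cdot)\rfloor, c(\cdot))$ after any prefix of deletions is distributionally identical to running the static construction from scratch on the current graph with those same random choices. The static analysis of Elkin--Neiman then gives, at every moment, expected size $O(n^{1+1/k})$ and (conditioned on $\delta_u < k$ for all $u$, which was already enforced at preprocessing) stretch at most $2k-1$ via Claims~\ref{claim:bound on depth of tree} and the observation that the shortest-path tree in $G'$ has depth at most $2k$.

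For the update time, I would reuse the analysis of Theorem~\ref{thm:decremental decomposition} essentially verbatim, replacing the depth bound $\Delta = O(\log n/\beta)$ with $\Delta = O(k)$ which is legitimate because under $\delta_u < k$ the shortest-path tree truncated at depth $2k$ already captures the entire clustering. Hence every vertex $v$ experiences level increases along at most $O(k)$ distinct levels, and by Lemma~\ref{lem: nrClusterChanges} it undergoes $O(\log n)$ cluster changes per level in expectation, for a total of $O(k \log n)$ cluster/level events per vertex. Using the heap-free implementation mentioned in Remark~\ref{remark: LDDrunningTime}, each such event is processed in $O(\deg(v))$ time, so the total expected update time is $\sum_v O(\deg(v) \cdot k \log n) = O(km \log n)$. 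Maintenance of $H$ adds no asymptotic overhead: whenever $v$ changes level or cluster, I scan the edges incident to $v$ once, re-evaluating Equation~\eqref{eqn: roundedSpannerEqn} for each neighbor and inserting/removing spanner edges accordingly, which is already $O(\deg(v))$ work per event.

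The main subtlety, and the step I expect to require the most care, is the independence argument that lets us transport the static size and stretch guarantees to every intermediate graph simultaneously. The ES-tree's decisions depend on the shifts and on $\pi$, and on the adversary's deletion sequence; we need the deletion sequence to be independent of both $\delta$ and $\pi$, which is exactly what the oblivious-adversary assumption provides. A second mild subtlety is that the spanner's defining condition mixes level comparisons with permutation tie-breaking, so I need to verify that the characterization of $C'(x)$ given in Equation~\eqref{eqn: roundedSpannerEqn} is genuinely a function of the information already maintained by the modified ES-tree, which it is by construction. Once these are in place, the stated bounds on stretch, expected size, and expected total update time follow.
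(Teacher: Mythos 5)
Your proposal is correct and takes essentially the same route as the paper: fix the shifts (re-sampling until $\delta_u < k$ for all $u$) and an independent uniform permutation $\pi$ at preprocessing, run the modified ES-tree of Section~\ref{sec:ES tree} to depth $O(k)$ on the augmented graph $G'$, maintain the sets $C'(x)$ from Equation~\eqref{eqn: roundedSpannerEqn} by re-scanning $\deg(x)$ neighbors whenever $x$ changes level or cluster, and bound the work by $O(k)$ levels times $O(\log n)$ expected cluster changes per level (Lemma~\ref{lem: nrClusterChanges}) times $O(\deg(x))$ per event using the heap-free ES-tree implementation of Remark~\ref{remark: LDDrunningTime}. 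Your additional remark making explicit that the oblivious-adversary assumption is what lets the static Elkin--Neiman size/stretch guarantees apply at every intermediate graph (the maintained state is distributed as a fresh static run) is a correct and welcome fleshing-out of what the paper leaves implicit.
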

Using the reduction of Lemma~\ref{lem:spanner decremental to fully dynamic}, these guarantees carry over to the fully dynamic setting.
\begin{theorem}[Restatement of Theorem~\ref{thm:fully dynamic spanner}]
Given any unweighted, undirected graph undergoing edge insertions and deletions, there is a fully dynamic algorithm for maintaining a spanner of stretch $ 2k - 1 $ and expected size $ O (n^{1 + 1/k} \log{n}) $ that has expected amortized update time $ O (k \log^2{n}) $.
These guarantees hold against an oblivious adversary.
\end{theorem}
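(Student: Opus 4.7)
The plan is to derive the fully dynamic bound in two steps: (i) first establish the decremental version stated immediately above, and (ii) invoke Lemma~\ref{lem:spanner decremental to fully dynamic} to lift it to the fully dynamic setting. Since that reduction incurs only $O(\log n)$ multiplicative overhead in both expected size and amortized update time, a decremental algorithm with stretch $2k-1$, expected size $O(n^{1+1/k})$, and expected total update time $O(km \log n)$ immediately produces the claimed bounds of $O(n^{1+1/k}\log n)$ and $O(k \log^2 n)$. So the real work is in the decremental construction.

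For the decremental algorithm, I would set $\beta = \log(cn)/k$ and independently sample shift values $\delta_u \sim \expdis(\beta)$; by Claim~\ref{claim:bound on random shifts}, re-sampling until all $\delta_u < k$ succeeds in $O(1)$ expected rounds. Then I would run the modified ES-tree of Section~\ref{sec:ES tree} on the augmented graph $G'$ with source $s$ attached via edges of rounded weight $\delta_{\max} - \floor{\delta_u}$, using the permutation $\pi$ induced by the fractional parts of the shifts (equivalently, by memorylessness of $\expdis$, a fresh uniform permutation) for tie-breaking. For every vertex $x$ this maintains the integer level $\ell(x)=\floor{m(x)}$, its parent pointer, and its cluster pointer $c(x)$. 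By Claim~\ref{claim:bound on depth of tree} the tree has depth at most $2k$, and by the proof of Theorem~\ref{thm:decremental decomposition} combined with Remark~\ref{remark: LDDrunningTime}, the total update time for these structures over an arbitrary deletion sequence is $O(m \log^2 n / \beta) = O(km \log n)$.

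On top of the ES-tree I would maintain the spanner $H$ as the union over all $x$ of the sets $C'(x)$ from Equation~(\ref{eqn: roundedSpannerEqn}). The key observation is that membership $(x,y) \in C'(x)$ depends only on $(\ell(x), \ell(y), c(x), c(y), \pi)$, so it can only flip when one of $\ell(x), \ell(y), c(x), c(y)$ is altered by the ES-tree. Whenever the ES-tree updates the level or cluster of some vertex $w$, I re-scan its incident edges in $O(\deg(w))$ time to add or remove the corresponding spanner edges. This work piggybacks on the traversal already performed and charged inside the analysis of Theorem~\ref{thm:decremental decomposition}, where every such update already pays $O(\deg(w))$ per level/cluster change, so the extra spanner bookkeeping only adds a constant factor and the total remains $O(km \log n)$. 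Adversarial deletion of an edge removes it from $H$ in $O(1)$ further time. For the size bound, obliviousness ensures that at every time step the current graph is independent of the random shifts and $\pi$, so the static Elkin--Neiman analysis applies verbatim to yield expected size $O(n^{1+1/k})$ at every instant.

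The main obstacle, and where I would focus most of the care, is precisely the argument that maintaining $H$ on top of the ES-tree does not blow up the amortized cost: this hinges on the membership predicate in Equation~(\ref{eqn: roundedSpannerEqn}) being \emph{local}, so that every level-or-cluster change at a vertex $w$ triggers at most $O(\deg(w))$ spanner edge toggles, and these toggles can be piggybacked on the charging scheme of Lemma~\ref{lem: nrClusterChanges}. Once the decremental theorem is in hand, plugging it into Lemma~\ref{lem:spanner decremental to fully dynamic} is immediate and gives stretch $2k-1$, expected size $O(n^{1+1/k}\log n)$, and expected amortized update time $O(k \log^2 n)$, as required.
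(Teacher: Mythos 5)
Your proposal follows essentially the same route as the paper: a decremental spanner algorithm built on the modified ES-tree of Section~\ref{sec:ES tree}, with the observation that membership in $C'(x)$ from Equation~\eqref{eqn: roundedSpannerEqn} is a local predicate so that each level-or-cluster change at a vertex triggers only $O(\deg(\cdot))$ spanner-edge toggles that can be charged to the same ES-tree traversal, followed by the decremental-to-fully-dynamic lift via Lemma~\ref{lem:spanner decremental to fully dynamic}. The only superficial difference is that you bound the total ES-tree work by quoting $O(m\log^2 n/\beta)$ from Theorem~\ref{thm:decremental decomposition} and Remark~\ref{remark: LDDrunningTime} directly, whereas the paper re-derives it from the $O(k)$ depth bound and the $O(\log n)$ per-level cluster changes of Lemma~\ref{lem: nrClusterChanges}; with $\beta = \log(cn)/k$ these yield the same $O(km\log n)$.
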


The decremental algorithm is obtained as follows:
In a preprocessing step, the algorithm samples the random shift values for the nodes from the exponential distribution and additionally a uniformly random permutation $ \pi $ on the nodes.
The sampling of the random shift values is repeated until $ \delta_u < k $ for all $u \in V$.
By Claim~\ref{claim:bound on random shifts} this condition holds with probability at least $ 1 - 1/c $.
Thus, by the waiting time bound, we need to repeat the sampling at most a constant number of times for the condition to hold.
As each round of sampling takes time $ O (n) $, this preprocessing step requires an additional $ O (n) $ in the total update time.

We can then readily use Algorithm~\ref{alg:ES_tree} from Section~\ref{sec:decremental LDD} to maintain a shortest path tree up to depth $ 2 k $ from $ s $ in the graph $ G' $, as defined above.
For maintaining the spanner dynamically, we need to extend the algorithm to maintain the set $ C' (x) $ for every node $ x $.
Using the arguments introduced in Section~\ref{sec:decremental LDD}, this can be done in a straightforward way:
Every time a node $ x $ changes its level in the tree or changes its cluster $ c (x) $, it (1) recomputes the set $ C' (x) $ in time $ O (\deg (x)) $ and stores it in a hash set and (2) informs each neighbor about the change and updates the set $ C' (y) $ of each neighbor $ y $ by setting the entry corresponding to the edge $ (x, y) $ accordingly.
Both (1) and (2) require (expected) time $ O (\deg (x)) $.
As the maximum level in the tree is $ O (k) $ and at each node changes its clustering at a fixed level at most $ O (\log n) $ times in expectation, the expected total update time of our algorithm is $ O (k m \log{n}) $ as desired.

\subsection*{Acknowledgements}
We thank Pavel Kolev and Thatchaphol Saranurak for inspiring discussions.
We are indebted to Richard Peng for pointing out some significant slack in our initial construction.
We thank Tianyi Zhang for finding a bug in a previous version of this paper.

%\appendix
%\section*{Appendix}
%\input{appendix}

\printbibliography[heading=bibintoc] % Make bibliography show up in table of contents

\end{document}